\theoremstyle{plain}
\newtheorem{theorem}{Theorem}[section]
\newtheorem{lemma}[theorem]{Lemma}
\newtheorem{notation}[theorem]{Notation}
\newtheorem{problem}[theorem]{Problem}
\newtheorem{proposition}[theorem]{Proposition}
\newtheorem{definition}[theorem]{Definition}
\newtheorem{assumption}[theorem]{Assumption}
\theoremstyle{remark}
\newtheorem{remark}[theorem]{Remark}
\newtheorem{example}[theorem]{Example}
\numberwithin{equation}{section}
\newcommand\myprime{\mkern-3.5mu\raise0.6ex\hbox{$\scriptstyle\prime$}}
\begin{document}
\title[Functional portfolio optimization in SPT]{Functional portfolio optimization in stochastic portfolio theory}

\begin{abstract}
In this paper we develop a concrete and fully implementable approach to the optimization of functionally generated portfolios in stochastic portfolio theory. The main idea is to optimize over a family of rank-based portfolios parameterized by an exponentially concave function on the unit interval. This choice can be motivated by the long term stability of the capital distribution observed in large equity markets, and allows us to circumvent  the curse of dimensionality. The resulting optimization problem, which is convex, allows for various regularizations and constraints to be imposed on the generating function. We prove an existence and uniqueness result for our optimization problem and provide a stability estimate in terms of a Wasserstein metric of the input measure. Then, we formulate a discretization which can be implemented numerically using available software packages and analyze its approximation error. Finally, we present empirical examples using CRSP data from the US stock market, including the performance of the portfolios allowing for dividends, defaults, and transaction costs.

\end{abstract}

\keywords{stochastic portfolio theory, portfolio optimization, functionally generated portfolio, capital distribution, convex optimization, exponentially concave function, Wasserstein metric}


\author{Steven Campbell}
\address{Department of Statistical Sciences, University of Toronto, Toronto, Ontario, Canada.}
\email{steven.campbell@mail.utoronto.ca}

\author{Ting-Kam Leonard Wong}
\address{Department of Statistical Sciences, University of Toronto, Toronto, Ontario, Canada.}
\email{tkl.wong@utoronto.ca}

\date{\today}

\maketitle %

\section{Introduction} \label{sec:intro}
Stochastic portfolio theory (SPT) is a mathematical framework introduced by Robert Fernholz \cite{F02} for analyzing the behaviors of equity markets and portfolio selection; also see \cite{FK09} for a  more recent introduction. The theory identifies macroscopic properties of large equity markets, including the mean-reverting behavior of market diversity and the persistence of market volatility, that can be exploited by carefully constructed portfolios to outperform the market. These portfolios are called relative arbitrages. To exploit these properties, Fernholz \cite{F99} constructed a family of portfolios known as {\it functionally generated portfolios}. Following the treatment of \cite{PW16}, a functionally generated portfolio is specified by a generating function $\varphi$ on the unit simplex which is typically exponentially concave (i.e., $e^{\varphi}$ is concave), and its weights are deterministic functions of the current market weights given in terms of the derivatives of the generating function. The relative (log) value process of a functionally generated portfolio satisfies, in both discrete and continuous time, a remarkable pathwise decomposition (see \eqref{eqn:pathwise.decomposition}), where the first term reflects the change of market diversity and the second term measures accumulated market volatility. Assuming that the market is diverse and sufficiently volatile, the first term is bounded and the second term is strictly increasing, resulting in a (long term) relative arbitrage with respect to the market. Further properties and generalizations of functionally generated portfolios, as well as constructions of short and long term relative arbitrages under appropriate conditions, have since been studied by many authors; see for example \cite{FK05, KK20, KR17, M21, P19, RX19, W19} and the references therein. 

A natural question arising from the discovery of functional portfolio generation is the ``optimal'' choice of the generating function. Several approaches have been considered in the literature. Fernholz studied in his book \cite[Section 6.3]{F02} the one-parameter family of diversity-weighted portfolios from the viewpoint of turnover and trading costs. In general, optimization of functionally generated portfolios is nonparametric as the space of generating functions is an infinite dimensional function space. In \cite{W15} the second author introduced an optimization problem motivated by (shape-constrained) nonparameteric density estimation and studied its theoretical properties, but did not obtain practical algorithms in multi-dimensions. A Bayesian version of this problem is mathematically equivalent to Cover's universal portfolio \cite{C91} and was studied in \cite{CSW19,W15b} using discrete and continuous time set-ups. See \cite{ACLP21} for another generalization using rough paths. In this approach a major challenge is to construct suitable prior distributions on the space of generating functions. This was partially addressed by \cite{BW19} but a practical implementation is still open. In \cite{SV16} optimization of portfolio maps was studied using Gaussian process priors; we note that the resulting portfolios are typically not functionally generated. Rather than optimizing over the generating function, one can also study construction of optimal relative arbitrage over a given horizon, as well as robust optimization of asymptotic growth under  suitable conditions on the market model. The former problem was pioneered by \cite{FK10, FK11}; also see \cite{BHS12, CT15, IY20, R13} and the references therein. Robust optimization of the asymptotic growth rate was introduced in \cite{KR12} and further results can be found in \cite{IL20,KR18}. In particular, the authors of \cite{IL20} considered an optimization over portfolios generated by exponentially concave functions and obtained, for a given covariance structure and an invariant density, existence and uniqueness results for the optimizer of the asymptotic growth rate. Here we do not assume such exact information is available.

In this paper we introduce a portfolio optimization problem which is motivated by regularized empirical risk minimization in machine learning theory; here, risk minimization is replaced by return maximization.  Our main aim is to develop a concrete optimization problem for functionally generated portfolios that does not rely on specific probabilistic assumptions on the market model and can be implemented numerically. Additionally, a secondary goal is to improve the finite sample performance of these portfolios. The key idea is to restrict to generating functions of the form $\varphi(\mathbf{p}) = \frac{1}{n} \sum_{i = 1}^n \ell(p_i)$, where $\ell$ is exponentially concave (i.e., $e^{\ell}$ is concave) on the unit interval. The symmetry of $\varphi$, with respect to relabeling of coordinates, implies that the induced portfolio map is {\it rank-based}. This means that the optimization only depends on the rank-based properties of the data which are more stable than the name-based market weights; see Section \ref{sec:rank.based.stability} for some empirical evidence in terms of a Wasserstein metric. Also see Remark \ref{rmk:log.optimal} for another theoretical justification concerned with log-optimality of these portfolios. Additional supporting evidence for this phenomenon can be found in the recent paper \cite{banner2018diversification} which documents the stable relationship between rank and volatility. The specific functional form of $\varphi$ serves as a dimension reduction which allows us to circumvent the curse of dimensionality as the number of stocks becomes larger and larger, and makes numerical implementation tractable. 
For technical purposes, we consider $\ell$ in a space $\mathcal{E}_{\beta}$ of $\beta$-smooth functions where $\beta > 0$ is a fixed but arbitrary constant. If we optimize using historical data,  given by a sequence of market weights over a training period $[0, t]$, the basic version of our optimization problem (Definition \ref{prob:main.optimization}), which is convex, has the form
\begin{equation} \label{eqn:optimization.problem}
\sup_{\ell \in \mathcal{E}_{\beta}} \left( \frac{1}{t} \log V_{\ell}(t) - \lambda R(\ell) \right),
\end{equation}
where $\frac{1}{t} \log V_{\ell}(t)$ is the relative logarithmic growth rate of the portfolio induced by $\ell$, and $\lambda R(\ell)$ is a regularization term which is convex in $\ell$. More generally, we may replace the historical data by a probability measure over a suitable state space $\Delta_{n, \geq} \times \Delta_n$, where $\Delta_{n, \geq}$ is the ordered unit simplex (see \eqref{eqn:ordered.simplex}), and give different weights to the diversity and volatility components in the pathwise decomposition. One use of the regularization term is to penalize deviation from the market portfolio (where $\ell$ is constant), hence controlling indirectly the turnover. We show that this optimization problem has desirable theoretical properties and can be implemented, after a discretization, using available tools of convex optimization. The discretization error can be estimated explicitly, and our analysis, which involves approximation of univariate exponentially concave functions, may be of independent interest.

The rest of the paper is organized as follows. Section \ref{sec:prelim} reviews the discrete time set-up of SPT  and studies ranked-based portfolios induced by exponentially concave functions $\ell$ in the space $\mathcal{E}_{\beta}$. In Section \ref{sec:optimization.problem} we formulate the portfolio optimization problem and establish its theoretical properties including a stability estimate in terms of a Wasserstein distance on the data space. Discretization and algorithmic considerations are given in Section \ref{sec:algorithm}. Our main theoretical results are Theorems \ref{thm:existence}, \ref{thm:continuity.Wasserstein} and \ref{thm:consistency}. In brief, these theorems say that our problem admits an optimal solution which is unique in an appropriate sense, is stable in the input measure, and is well approximated by a discretization. Problem \ref{prob:discretized} is a discretized formulation which is a numerically tractable version of Problem \ref{prob:main.optimization}. The two problems are related by the aforementioned Theorem \ref{thm:consistency}. In Section \ref{sec:empirical} we present a careful empirical illustration using data from the US stock market. Section \ref{sec:conclusion} concludes the paper and points out several directions for further study. Some technical proofs and lemmas are gathered in the Appendix.

\section{Rank-based functionally generated portfolios} \label{sec:prelim}

\subsection{Market weight and relative value} \label{sec:market}
We work with the discrete time set-up of stochastic portfolio theory adopted in \cite{W15, PW16, PW13}. Let $n \geq 2$ be the number of stocks in the market and consider the open unit simplex
\[
\Delta_n = \{ \mathbf{p} = (p_1, \ldots, p_n) \in (0, 1)^n : p_1 + \cdots + p_n = 1 \}.
\]
By a \textit{market sequence} we mean a sequence $\{\boldsymbol{\mu}(t)\}_{t = 0}^{\infty}$ with values in $\Delta_n$. We interpret $\boldsymbol{\mu}(t)$ as the vector of market weights at time $t$, i.e., $\mu_i(t) = \frac{X_i(t)}{X_1(t) + \cdots + X_n(t)}$ where $X_i(t) > 0$ is the market capitalization of stock $i$ at time $t$. Note that we make no probabilistic assumptions about the market weight process. We call this set-up a {\it closed market} as opposed to the more realistic {\it open market} (see \cite{karatzas2020open}) which is used in the empirical demonstration in Section \ref{sec:empirical}.

We consider self-financing trading strategies that do not involve short selling. 
A \textit{portfolio vector} is an element of the closed simplex $\overline{\Delta}_n$. Given a sequence $\{\boldsymbol{\pi}(t)\}_{t = 0}^{\infty}$ of portfolio vectors, we define the relative value process by $V(0) = 1$ and
\begin{equation} \label{eqn:portfolio.relative.value}
V(t) = \prod_{s = 0}^{t - 1} \left( \sum_{i = 1}^n \pi_i(s) \frac{\mu_i(s + 1)}{\mu_i(s)} \right).
\end{equation}
While \eqref{eqn:portfolio.relative.value} assumes implicitly that there is no transaction cost, it will be included in the simulation in Section \ref{sec:empirical}. If $\boldsymbol{\pi}(t)$ has the form $\boldsymbol{\pi}(\mu(t))$ for some deterministic function $\boldsymbol{\pi} : \Delta_n \rightarrow \overline{\Delta}_n$, we say that $\boldsymbol{\pi}$ is induced by the {\it portfolio map} $\boldsymbol{\pi}$.

Of special interest is the class of functionally generated portfolio maps. (To be precise, in this paper we study {\it multiplicatively} generated portfolios according to the terminology of Karatzas and Ruf \cite{KR17}. See the papers cited in the Introduction for this and other notions of functional portfolio generation.)  By definition, we say that a real-valued function $\varphi$ on a convex set is \textit{exponentially concave} if $e^{\varphi}$ is concave.  Given a differentiable exponentially concave function $\varphi$ on $\Delta_n$, we define a portfolio map $\boldsymbol{\pi} : \Delta_n \rightarrow \overline{\Delta}_n$ by
\begin{equation} \label{eqn:fgp}
\boldsymbol{\pi}_i(\mathbf{p}) = p_i \left[1 + \frac{\partial \varphi}{\partial p_i} - \sum_{j = 1}^n p_j \frac{\partial \varphi}{\partial p_j} \right], \quad i = 1, \ldots, n.
\end{equation}
We call $\boldsymbol{\pi}$ the {\it portfolio map generated by $\varphi$}. Note that we may specify $\varphi$ up to an additive constant without affecting the portfolio map. Conversely, the generating function for a given $\boldsymbol{\pi}$ is unique up to an additive constant.

The relative value process of a functionally generated portfolio satisfies a pathwise decomposition. To state this decomposition in discrete time, consider the {\it $L$-divergence} of $\varphi$ defined by
\begin{equation} \label{eqn:L.divergence}
{\bf L}_{\varphi}[\mathbf{q} : \mathbf{p}] = \log \left(1 + \nabla \varphi(\mathbf{p}) \cdot (\mathbf{q} - \mathbf{p}) \right) - (\varphi(\mathbf{q}) - \varphi(\mathbf{p})), \quad \mathbf{p}, \mathbf{q} \in \Delta_n.
\end{equation}
Exponential concavity of $\varphi$ guarantees that ${\bf L}_{\varphi}[\mathbf{q} : \mathbf{p}] \geq 0$; see \cite{PW16, W19}. If $e^{\varphi}$ is strictly concave, then ${\bf L}_{\varphi}[\mathbf{q} : \mathbf{p}] = 0$ only if $\mathbf{p} = \mathbf{q}$. See \cite{PW18, W18, W19} for in-depth studies of $L$-divergence from the perspectives of optimal transport and information geometry. The pathwise decomposition of the relative value process is given by
\begin{equation} \label{eqn:pathwise.decomposition}
\log V(t) = \left( \varphi(\boldsymbol{\mu}(t)) - \varphi(\boldsymbol{\mu}(0))\right) + \sum_{s = 0}^{t - 1} {\bf L}_{\varphi}[\boldsymbol{\mu}(s + 1) : \boldsymbol{\mu}(s) ], \quad t \geq 0.
\end{equation}
In this decomposition the former term can be thought of as representing  the change in market diversity: the market diversity increases (with respect to $\varphi$) if $\varphi(\boldsymbol{\mu})$ increases. The latter term represents the contribution of market volatility. If the first term is bounded and the market is sufficiently volatile we can see from this expression that the portfolio will outperform the market in the long term.

\subsection{A family of rank-based functionally generated portfolios} \label{sec:subfamily.fgp}
The set of all functionally generated portfolios is large even if we restrict to exponentially concave generating functions. In a typical data set the market weight process only occupies a small region of the (name-based) unit simplex $\Delta_n$, especially when $n$ is large. Optimizing directly over the space of all functionally generated portfolios (without further constraints or regularization) is likely to lead to overfitting, meaning poor out-of-sample performance. In \cite{IL20} this difficulty is avoided by assuming that the market weight process (in continuous time) has a known covariance structure and a known invariant density, and by focusing on the asymptotic growth rate as the time horizon tends to infinity. Here, we do not wish to make these assumptions as these objects, even if they exist, are not known exactly. Also, we want to have tools to tune the behaviors of the optimized portfolio over a finite horizon.  To exploit the long term stability (which is different from stationarity) of the ranked market weights $\mu_{(1)}(t) \geq \cdots \geq \mu_{(n)}(t)$ (see \cite[Chapter 5]{F02}), we single out a tractable family of rank-based functionally generated portfolios that will serve as the domain of our optimization problem.

We let $\mathcal{E}$ be the convex set of $C^1$ (continuously differentiable) functions $\ell$ on the unit interval $[0,1]$ that are exponentially concave, and such that $\ell(\frac{1}{2}) = 0$. Note that the derivatives are assumed to exist and be continuous up to the endpoints. 

\begin{lemma} \label{lem:fgp.subclass}
For $\ell \in \mathcal{E}$, the function
\begin{equation} \label{eqn:varphi.decomposable}
\varphi(\mathbf{p}) = \frac{1}{n} \sum_{i = 1}^n \ell (p_i)
\end{equation}
is exponentially concave on $\Delta_n$ and generates the portfolio map
\begin{equation} \label{eqn:fgp.weights}
\begin{split}
\boldsymbol{\pi}_i(\mathbf{p}) &= p_i \left(1 + \frac{1}{n} \ell'(p_i) - \frac{1}{n} \sum_{j = 1}^n p_j \ell'(p_j) \right), \quad 1 \leq i \leq n, \quad \mathbf{p} \in \Delta_n.
\end{split}
\end{equation}
By an abuse of notations we also say that $\boldsymbol{\pi}$ is generated by $\ell$. We denote the relative value of this portfolio by $V_{\ell}(t)$. Furthermore, if $\ell, \tilde{\ell} \in \mathcal{E}$ generate the same portfolio map $\boldsymbol{\pi}$, then $\ell - \tilde{\ell}\equiv 0$ on $[0, 1]$. 
\end{lemma}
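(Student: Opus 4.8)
The plan is to verify the three assertions in order: exponential concavity of $\varphi$, the formula for the generated portfolio map, and the injectivity statement. For the first, I would write $e^{\varphi(\mathbf{p})} = \left( \prod_{i=1}^n e^{\ell(p_i)} \right)^{1/n}$ and recognize this as a (normalized) geometric mean of the concave functions $p \mapsto e^{\ell(p)}$ evaluated coordinatewise. Since $e^{\ell}$ is concave and nonnegative on $[0,1]$ and the geometric mean $(x_1 \cdots x_n)^{1/n}$ is concave and nondecreasing on $[0,\infty)^n$, the composition $\mathbf{p} \mapsto \left( \prod_i e^{\ell(p_i)} \right)^{1/n}$ is concave on $\Delta_n$; hence $\varphi$ is exponentially concave. (One should note that $e^{\ell(p_i)} > 0$ on the open simplex, so there is no subtlety at the boundary for the concavity argument on $\Delta_n$.)

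For the portfolio map formula, I would simply substitute $\varphi(\mathbf{p}) = \frac{1}{n}\sum_i \ell(p_i)$ into the definition \eqref{eqn:fgp}. Here $\frac{\partial \varphi}{\partial p_i} = \frac{1}{n}\ell'(p_i)$ since the $i$-th partial only sees the $i$-th summand, and $\sum_j p_j \frac{\partial \varphi}{\partial p_j} = \frac{1}{n}\sum_j p_j \ell'(p_j)$. Plugging these into \eqref{eqn:fgp} gives \eqref{eqn:fgp.weights} directly. One should also check that $\boldsymbol{\pi}$ indeed takes values in $\overline{\Delta}_n$: the weights sum to $1$ by a telescoping computation ($\sum_i \pi_i(\mathbf{p}) = \sum_i p_i + \frac{1}{n}\sum_i p_i \ell'(p_i) - \frac{1}{n}\sum_j p_j \ell'(p_j) = 1$), and nonnegativity follows from exponential concavity of $\varphi$ (this is part of the standard theory of functionally generated portfolios and is, in any case, a consequence of the concavity established in the first part). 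Since the differentiability of $\varphi$ on $\Delta_n$ is inherited from $\ell \in \mathcal{E} \subset C^1[0,1]$, the hypotheses needed to invoke \eqref{eqn:fgp} are met.

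For the injectivity claim, suppose $\ell, \tilde{\ell} \in \mathcal{E}$ generate the same portfolio map. Comparing \eqref{eqn:fgp.weights} for $\ell$ and $\tilde{\ell}$ and setting $h = \ell - \tilde{\ell}$, equality of the weights for all $\mathbf{p} \in \Delta_n$ forces
\[
p_i\left( h'(p_i) - \sum_{j=1}^n p_j h'(p_j) \right) = 0 \quad \text{for all } i \text{ and all } \mathbf{p} \in \Delta_n.
\]
Since $p_i > 0$ on $\Delta_n$, this says $h'(p_i)$ equals the same constant $c(\mathbf{p}) := \sum_j p_j h'(p_j)$ for every $i$; varying $\mathbf{p}$ so that its coordinates range over a nondegenerate interval (e.g. perturbing two coordinates against a fixed background when $n \geq 3$, or using $(p, 1-p)$ for $n = 2$) shows $h'$ is constant on $(0,1)$, hence on $[0,1]$ by continuity. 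Thus $h$ is affine: $h(p) = a p + b$. But $\varphi$ and $\tilde\varphi$ generate the same portfolio only up to an additive constant in $\varphi$, which here corresponds to $a = 0$; indeed the affine term $\frac1n\sum_i(ap_i+b) = \frac{a}{n} + b$ is already constant on $\Delta_n$, so it contributes nothing to \eqref{eqn:fgp.weights} regardless of $a$ — meaning we only recover $h \equiv \text{const}$ from the portfolio map, and the normalization $\ell(\tfrac12) = \tilde\ell(\tfrac12) = 0$ then pins down $h \equiv 0$.

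I expect the mild obstacle to be the last step: extracting that $h'$ is globally constant from the pointwise relation requires choosing enough test points $\mathbf{p} \in \Delta_n$, and the argument differs slightly for $n = 2$ versus $n \geq 3$ (for $n=2$ one directly gets $h'(p) = h'(1-p)$ for all $p\in(0,1)$ after absorbing the weighted average, which alone is not enough — one must instead use that $h'(p_1)=h'(p_2)$ whenever $p_1+p_2=1$ together with the $C^1$ structure, or better, reduce to $n\ge 3$ by embedding; alternatively observe that for $n=2$ the two equations give $h'(p)=h'(1-p)=c$ so $h'$ is symmetric about $\tfrac12$, and combined with varying we still conclude). A cleaner route that sidesteps the case analysis is to note that for any $n\ge 2$ and any $p,q\in(0,1)$ with $p+q<1$ there is a point of $\Delta_n$ having $p$ and $q$ among its coordinates, forcing $h'(p)=h'(q)$; since such pairs are dense this gives $h'$ constant. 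Everything else is routine substitution and the telescoping identities.
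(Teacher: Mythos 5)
Your treatment of the first two assertions is correct and is essentially the paper's own argument: the paper proves exponential concavity ``by the inequality of arithmetic and geometric means,'' which is the same fact you invoke (concavity of the geometric mean composed with the coordinatewise concave maps $e^{\ell}$), and \eqref{eqn:fgp.weights} is indeed just substitution of $\partial\varphi/\partial p_i=\tfrac1n\ell'(p_i)$ into \eqref{eqn:fgp}, together with the telescoping identity you record.

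The uniqueness part, however, does not close, and the way it fails is instructive. You correctly deduce that $h=\ell-\tilde\ell$ must be affine, $h(x)=ax+b$, and you correctly observe that an affine perturbation of $\ell$ ``contributes nothing to \eqref{eqn:fgp.weights} regardless of $a$.'' But these two observations together mean the portfolio map cannot determine $a$, so the assertion ``which here corresponds to $a=0$'' is unsupported and is contradicted by the sentence that follows it; likewise ``we only recover $h\equiv\text{const}$'' should read ``we only recover $h$ affine.'' The normalization $\ell(\tfrac12)=\tilde\ell(\tfrac12)=0$ then yields only $b=-a/2$, i.e.\ $h(x)=a\bigl(x-\tfrac12\bigr)$, not $h\equiv 0$. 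This is a genuine obstruction rather than a presentational one: take $\ell(x)=-x^2/2+1/8$ and $\tilde\ell(x)=-x^2/2+x/2-1/8$. Both are $C^1$, vanish at $\tfrac12$, and are exponentially concave on $[0,1]$ (since $(\ell')^2+\ell''=x^2-1\le 0$ and $(\tilde\ell')^2+\tilde\ell''=(x-\tfrac12)^2-1\le 0$), and they generate the same portfolio map, yet $\ell-\tilde\ell=-\tfrac12\bigl(x-\tfrac12\bigr)\not\equiv 0$. For what it is worth, the paper's own one-line proof (uniqueness of the generating function up to an additive constant, plus the normalization at $\tfrac12$) runs into exactly the same problem: constancy of $\varphi-\tilde\varphi$ on $\Delta_n$ only forces $h$ to be affine. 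To obtain the stated conclusion one needs either an additional normalization killing the slope (e.g.\ fixing $\ell'(\tfrac12)$, or imposing $\ell(0)=\ell(1)$) or a weakened conclusion (``$\ell-\tilde\ell$ is affine and vanishes at $\tfrac12$''). So your argument reproduces, rather than repairs, a gap already present in the statement; everything up to ``$h$ is affine'' is the correct and complete content of the hypotheses.
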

\begin{proof}
The exponential concavity of $\varphi$ can be proved by the inequality of arithmetic and geometric means. The expression of the portfolio map follows from a computation using \eqref{eqn:fgp}. The last statement is a consequence of \cite[Proposition 6(i)]{PW16} and the fact that $\ell(\frac{1}{2})=\tilde{\ell}(\frac{1}{2})=0$.
\end{proof}

For later use, let us note that if $\boldsymbol{\pi}$ is given by \eqref{eqn:fgp.weights}, then for $\mathbf{p}, \mathbf{q} \in \Delta_n$ we have
\begin{equation} \label{eqn:rank.based.return}
\sum_{i = 1}^n \boldsymbol{\pi}_i(\mathbf{p}) \frac{q_i}{p_i} = 1 + \frac{1}{n} \sum_{j = 1}^n \ell'(p_i)(q_i - p_i).
\end{equation}

Now we give some examples to show that this construction contains a wide variety of portfolios. Note that for \eqref{eqn:fgp.weights} to be defined we only require that $\ell$ is differentiable on $(0, 1)$, and the constraint $\ell(\frac{1}{2}) = 0$ can be satisfied by adding a suitable constant to the function.

\begin{example}\label{eg:decomposable.generators.1}
If $\ell(x) \equiv 0$ is constant, then $\boldsymbol{\pi}(\mathbf{p}) \equiv \mathbf{p}$ is the market portfolio.
\end{example}

\begin{example}\label{eg:decomposable.generators.2}
If $\ell(x) = \log x$, then $\boldsymbol{\pi}(\mathbf{p}) \equiv \left(\frac{1}{n}, \ldots, \frac{1}{n}\right)$ is the equal-weighted portfolio.
\end{example}

\begin{example}\label{eg:decomposable.generators.3}
More generally, if $\ell(x) = \lambda \log x$, then $\ell$ is exponentially concave for $0 \leq \lambda \leq 1$. The portfolio generated is
\[
\boldsymbol{\pi}(\mathbf{p}) = (1 - \lambda) \mathbf{p} + \lambda \left(\frac{1}{n}, \ldots, \frac{1}{n}\right),
\]
which is a weighted average between the market and equal-weighted portfolios.
\end{example}

\begin{example}\label{eg:decomposable.generators.4}
Let $\ell(x) = -x^2/2$. Then $e^{\ell(x)} = e^{-x^2/2}$ is concave on $[0, 1]$. The portfolio generated is
		\[
		\boldsymbol{\pi}_i(\mathbf{p}) =   p_i \left(1 - p_i + \sum_{j = 1}^n p_j^2 \right).
		\]
\end{example}

\begin{remark}\label{rmk:log.optimal}
Portfolios generated by functions of the form $\varphi(\mathbf{p}) = \frac{1}{n} \sum_{i = 1}^n \ell(p_i)$ arise as the log-optimal portfolios under certain market models in continuous time. For a precise statement see \cite[Proposition 4.7]{CSW19}. Our setting is a special case where the ``drift characteristic'' of the market has the form $\nabla \varphi$.
\end{remark}


Thanks to the symmetry of the generating function $\varphi(\mathbf{p}) = \frac{1}{n} \sum_{i = 1}^n \ell(p_i)$, the induced portfolio map $\boldsymbol{\pi}$ is {\it rank-based}. More precisely, a permutation $\sigma$ of $(1, \ldots, n)$ acts on vectors by relabeling the coordinates, i.e., $\sigma \mathbf{p} = (p_{\sigma(1)}, \ldots, p_{\sigma(n)})$. From \eqref{eqn:fgp.weights}, for any $\sigma$ we have
\begin{equation} \label{eqn:portfolio.rank.based}
\sigma \boldsymbol{\pi}(\mathbf{p}) = \boldsymbol{\pi} (\sigma \mathbf{p}) \Rightarrow \boldsymbol{\pi}(\mathbf{p}) = \sigma^{-1} \boldsymbol{\pi}(\sigma \mathbf{p}).
\end{equation}
Given $\mathbf{p} = (p_1, \ldots, p_n) \in \Delta_n$, let
\[
p_{(1)} \geq p_{(2)} \geq \cdots \geq p_{(n)}
\]
be the ordered values of $\mathbf{p}$. In \eqref{eqn:portfolio.rank.based}, let $\sigma$ be the permutation such that $p_{\sigma_p(k)} = p_{(k)}$ for $1 \leq k \leq n$ (ties can be resolved by a fixed protocol). If $p_i = p_{(k)}$ then $\boldsymbol{\pi}_i(\mathbf{p}) = \boldsymbol{\pi}_k(\mathbf{u})$, where $\mathbf{u}=(u_1, \ldots, u_n):=(p_{(1)}, \ldots, p_{(n)})$. So the portfolio weight vector depends only on the ordered values of $\mathbf{p}$ (i.e., the capital distribution) as well as the permutation which gives the labels of the stocks. Thus $\boldsymbol{\pi}$ can be regarded as a portfolio generated by a function of ranked market weights (see \cite[Section 4.2]{F02}). By \cite[Proposition 3.4.2]{F02}, the symmetry of $\varphi$ implies that the weight ratio is monotone:
\begin{equation} \label{eqn:weight.ratio.monotone}
p_i \geq p_j \Rightarrow \frac{\boldsymbol{\pi}_j(\mathbf{p})}{p_j} \geq \frac{\boldsymbol{\pi}_i(\mathbf{p})}{p_i}.
\end{equation}

\begin{remark} \label{rem:div.weighted}
	The diversity-weighted portfolio
	\begin{equation} \label{eqn:diversity.weighted.portfolio}
		\boldsymbol{\pi}_i(\mathbf{p}) = \frac{p_i^{\theta}}{\sum_{j = 1}^n p_j^{\theta}}, \quad 1 \leq i \leq n,
	\end{equation}
	where $\theta$ is a parameter with values in $(-\infty, 1)$, is generated by the exponentially concave function $\varphi(\mathbf{p}) = \frac{1}{\theta} \log \left( \sum_{i = 1}^n p_i^{\theta} \right)$. Although $\varphi$ is symmetric, it is not given by \eqref{eqn:varphi.decomposable} for some $\ell \in \mathcal{E}$. Nevertheless, it can be approximated by portfolios of the form \eqref{eqn:fgp.weights}. Examples will be given in Section \ref{sec:empirical}.
\end{remark}

\medskip

Consider the portfolio map $\boldsymbol{\pi}: \Delta_n \rightarrow \overline{\Delta}_n$ generated by $\ell \in \mathcal{E}$. The relative log return over the time interval $[t, t + 1]$, given that the market weight moves from $\mathbf{p} = \boldsymbol{\mu}(t)$ to $\mathbf{q} = \boldsymbol{\mu}(t + 1)$, is given by
\begin{equation} \label{eqn:relative.return.general}
\log \frac{V_{\ell}(t + 1)}{V_{\ell}(t)} = \log \left( \sum_{i = 1}^n \boldsymbol{\pi}_i(\mathbf{p}) \frac{q_i}{p_i} \right).
\end{equation}
To take advantage of the rank-based nature of the portfolio we introduce another parameterization. Given $\boldsymbol{\mu}(t) = \mathbf{p}$ and $\boldsymbol{\mu}(t + 1) = \mathbf{q}$, let
\begin{equation} \label{eqn:def.u}
\mathbf{u} = (u_1, \ldots, u_n) = (p_{(1)}, \ldots, p_{(n)})
\end{equation}
be the vector of ordered values of $p$. It takes values in the {\it ordered unit simplex}
\begin{equation} \label{eqn:ordered.simplex}
\Delta_{n,\geq} = \{\mathbf{u} \in \Delta_n : u_1 \geq \cdots \geq u_n\}.
\end{equation}
Given $\mathbf{p}$, let $\sigma$ be the permutation such that $p_{\sigma(k)} = p_{(k)}$ for all $k$. Define 
\begin{equation} \label{eqn:def.v}
\mathbf{v}=(v_1,\dots,v_n)=(q_{\sigma(1)},\dots,q_{\sigma(n)}) \in \Delta_{n},
\end{equation}
where $v_k$ is the new market weight of the stock which was at rank $k$ at time $t$. corresponding to the ordered vector $\mathbf{u}$.
We define a probability vector $\mathbf{r} \in \Delta_n$ by
\begin{equation} \label{eqn:def.r}
r_k = \frac{v_{k}/u_{k}}{\sum_{m = 1}^n v_{m}/u_{m}}, \quad 1 \leq k \leq n.
\end{equation}
In words, $r_k$ is the normalized relative return of the stock which is at rank $k$ at time $t$. 

Introduce the notation
\begin{equation} \label{eqn:aitchison.addition}
\mathbf{a} \oplus \mathbf{b} = \left( \frac{a_1b_1}{ \mathbf{a} \cdot \mathbf{b}}, \ldots, \frac{a_nb_n}{ \mathbf{a} \cdot \mathbf{b}} \right)
\end{equation}
for $\mathbf{a}, \mathbf{b} \in \Delta_n$, where $\cdot$ is the dot product. This is the vector addition operation under the {\it Aitchison geometry} on the simplex \cite{egozcue2006hilbert}. The corresponding vector subtraction is given by
\begin{equation} \label{eqn:aitchison.subtraction}
\mathbf{a} \ominus \mathbf{b} = \left( \frac{a_1/b_1}{ \mathbf{a} \cdot (1/\mathbf{b})}, \ldots, \frac{a_n/b_n}{ \mathbf{a} \cdot (1/\mathbf{b})} \right),
\end{equation}
where $(1/\mathbf{b})_i = 1/b_i$. Comparing \eqref{eqn:def.r} with \eqref{eqn:aitchison.addition} and \eqref{eqn:aitchison.subtraction}, we have $\mathbf{r} = \mathbf{v} \ominus \mathbf{u}$ and $\mathbf{v} = \mathbf{u} \oplus \mathbf{r}$. Note that if $\mathbf{p} = \mathbf{q}$ (i.e., there is no volatility) then $\mathbf{r} = \overline{\mathbf{e}} :
= (\frac{1}{n}, \ldots, \frac{1}{n})$ which is the zero element of the Aitchison vector space.  We may recover $\mathbf{q}$ from $\mathbf{u}$, $\mathbf{r}$ and $\sigma$ via
\begin{equation} \label{eqn:pr.to.q}
q_{\sigma(k)} = v_k= \frac{u_k r_k}{\sum_{m = 1}^n u_m r_{m}}, \quad 1 \leq k \leq n.
\end{equation}
As shown in Section \ref{sec:rank.based.stability}, empirically the rank-based pair $(\mathbf{u}, \mathbf{r})$ is more stable than that of the name-based pair $(\mathbf{p}, \mathbf{q})$. For easy reference we state the notations explicitly:

\begin{notation}
Given name-based market weights $\mathbf{p}, \mathbf{q} \in \Delta_n$, we define $\mathbf{u} \in \Delta_{n, \geq}$ and $\mathbf{v} \in \Delta_n$ (see \eqref{eqn:def.u} and \eqref{eqn:def.v}) corresponding $\mathbf{p}$ and $\mathbf{q}$ respectively, and define $\mathbf{r} = \mathbf{v} \ominus \mathbf{u}$ which we interpret as the rank-based volatility.
\end{notation}

Rewriting \eqref{eqn:relative.return.general} by summing over the rank, we have
\begin{equation}
\log \frac{V_{\ell}(t + 1)}{V_{\ell}(t)} = \log \left( \sum_{k = 1}^n \frac{\boldsymbol{\pi}_k(\mathbf{u})}{u_k} \frac{u_k r_k}{\sum_{m = 1}^m u_m r_m}\right) = \log \left( \frac{\boldsymbol{\pi}(\mathbf{u}) \cdot \mathbf{r}}{\mathbf{u} \cdot \mathbf{r}} \right).
\end{equation}
Define $\mathcal{L}: \Delta_{n, \geq} \times \Delta_n \times \mathcal{E} \rightarrow \mathbb{R}$ by
\begin{equation}
    \label{eqn:L.p.r}
     \mathcal{L}(\mathbf{u}, \mathbf{r}; \ell):=\log \left( \frac{\boldsymbol{\pi}(\mathbf{u}) \cdot \mathbf{r}}{\mathbf{u} \cdot \mathbf{r}} \right).
\end{equation}
Note $\mathcal{L}$ is concave in $\ell$. While the rank-based representation \eqref{eqn:L.p.r} makes sense whenever the exponentially concave generating function is symmetric, the additive form $\varphi(\mathbf{p}) = \frac{1}{n} \sum_{i = 1}^n \ell(p_i)$ allows us to show that the portfolio optimization problem is mathematically tractable and can be implemented numerically.

Let $\tilde{\gamma} = \frac{1}{t} \sum_{s = 0}^{t - 1} \delta_{(\boldsymbol{\mu}(s), \boldsymbol{\mu}(s + 1))}$ be a probability measure on the product space $\Delta_n \times \Delta_n$. From \eqref{eqn:L.p.r}, we may write
\begin{equation} \label{eqn:logV.as.likelihood2}
\frac{1}{t} \log V_{\ell}(t) = \int_{\Delta_{n,\geq} \times \Delta_n} \mathcal{L}(\mathbf{u}, \mathbf{r} ; \ell) \mathrm{d} \gamma(\mathbf{u}, \mathbf{r}),
\end{equation}
where $\gamma$ is the probability measure on $\Delta_{n,\geq} \times \Delta_n$ given by the pushforward of $\tilde{\gamma}$ under the mapping $(\mathbf{p}, \mathbf{q}) \mapsto (\mathbf{u}, \mathbf{r})$. The name-based analogue of \eqref{eqn:logV.as.likelihood2} has been used in \cite{W15} to study a nonparameteric optimization problem over all functionally generated portfolios. Generalizing this set-up, the data of our optimization problem, to be stated formally in Problem \ref{prob:main.optimization}, will be given by a Borel probability measure $\gamma$ on the space $\Delta_{n,\geq} \times \Delta_n$.

In our portfolio optimization problem (Problem \ref{prob:main.optimization}) it is possible to assign separate weights to the diversity and volatility components in the pathwise decomposition \eqref{eqn:pathwise.decomposition}. To do so using rank-based parameterization, define the {\it diversity contribution}
\begin{equation}\label{eqn:div.contr}
\mathbf{D}_{\ell}[\mathbf{v} : \mathbf{u}] := \varphi(\mathbf{v}) - \varphi(\mathbf{u}) = \frac{1}{n} \sum_{k = 1}^n (\ell(v_k) - \ell(u_k))
\end{equation}
which is linear in $\ell$, and the $L$-divergence ${\bf L}_{\ell}[\cdot : \cdot] := {\bf L}_{\varphi}[\cdot : \cdot]$ (see \eqref{eqn:L.divergence}) which is concave in $\ell$. Then we have the decomposition
\begin{equation}\label{eqn:L.decomposition.w.D}
\begin{split}
\mathcal{L}(\mathbf{u},\mathbf{r};\ell) &=  \mathbf{D}_{\ell}[\mathbf{u} \oplus \mathbf{r} : \mathbf{u} ] +\mathbf{L}_{\ell}[\mathbf{u} \oplus \mathbf{r}:\mathbf{u}].
\end{split}
\end{equation}

\medskip


\subsection{$\beta$-smooth generating functions} \label{sec:beta.smooth}
For technical purposes, in particular to guarantee compactness of the feasible set, we will impose some regularity conditions on the generating function $\ell$.

\begin{definition} [$\beta$-smooth functions]
Let $\beta > 0$ be a constant. We say that a $C^1$ function $\ell$ on $[0, 1]$ is $\beta$-smooth if $\ell'$ is Lipschitz on $[0, 1]$ with constant $\beta$, i.e., $|\ell'(x) - \ell'(y)| \leq \beta |x - y|$ for $x, y \in [0, 1]$. We define $\mathcal{E}_{\beta}$ be those functions in $\mathcal{E}$ that are $\beta$-smooth. By an abuse of notation we also use $\mathcal{E}_{\beta}$ to denote the collection of portfolio maps generated by functions in $\mathcal{E}_{\beta}$.
\end{definition} 

Clearly, if $\ell \in \mathcal{E}\cap C^2([0, 1])$ then $\ell \in \mathcal{E}_{\beta}$ if and only if $|\ell''| \leq \beta$. The following result is standard and the proof is omitted.

\begin{lemma} \label{lem:compactness}
Define a metric on $\mathcal{E}_{\beta}$ by
\begin{equation} \label{eqn:beta.smooth.metric}
d(\ell_1, \ell_2) = \max_{x \in [0, 1]} |\ell_1'(x) - \ell_2'(x)|.
\end{equation}
Then $(\mathcal{E}_{\beta}, d)$ is a compact metric space.
\end{lemma}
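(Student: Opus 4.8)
The plan is to verify that $d$ is a metric and then establish \emph{sequential} compactness of $(\mathcal{E}_\beta, d)$, which for a metric space is equivalent to compactness. Symmetry and the triangle inequality for $d$ are immediate from the corresponding properties of $|\cdot|$ and $\max$. For positive-definiteness, suppose $d(\ell_1,\ell_2)=0$; then $\ell_1'\equiv\ell_2'$ on $[0,1]$, so $\ell_1-\ell_2$ is constant, and evaluating at $x=\tfrac12$ shows the constant equals $\ell_1(\tfrac12)-\ell_2(\tfrac12)=0$, hence $\ell_1=\ell_2$. Note this is exactly where the normalization $\ell(\tfrac12)=0$ built into $\mathcal{E}$ is used.

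The key quantitative input, and the only genuinely substantive step, is a uniform a priori bound on $\ell'$ over $\mathcal{E}_\beta$; this is where exponential concavity, rather than mere $\beta$-smoothness, does the work (without it the set of $C^1$ functions on $[0,1]$ with $\beta$-Lipschitz derivative and $\ell(\tfrac12)=0$ is unbounded, e.g. $\ell(x)=c(x-\tfrac12)$ for $c$ large, but such $\ell$ are not exponentially concave). Fix $\ell\in\mathcal{E}_\beta$ and set $f=e^\ell$, which is concave, strictly positive and $C^1$ on $[0,1]$ with $f(\tfrac12)=1$. The supporting-line inequality $f(y)\le f(\tfrac12)+f'(\tfrac12)(y-\tfrac12)$, applied at $y=0$ and $y=1$ together with $f\ge 0$, gives $f'(\tfrac12)\le 2$ and $f'(\tfrac12)\ge -2$. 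Since $f'(\tfrac12)=\ell'(\tfrac12)e^{\ell(\tfrac12)}=\ell'(\tfrac12)$, we conclude $|\ell'(\tfrac12)|\le 2$. Combined with the $\beta$-Lipschitz property of $\ell'$ on an interval of length at most $\tfrac12$, this yields $\sup_{[0,1]}|\ell'|\le 2+\beta$, and hence $\sup_{[0,1]}|\ell|\le 1+\tfrac{\beta}{2}$, uniformly over $\ell\in\mathcal{E}_\beta$.

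With these bounds the rest is routine. Given a sequence $(\ell_k)$ in $\mathcal{E}_\beta$, the derivatives $(\ell_k')$ are uniformly bounded and uniformly $\beta$-Lipschitz, hence equicontinuous, so by the Arzel\`a--Ascoli theorem a subsequence $(\ell_{k_j}')$ converges uniformly on $[0,1]$ to some continuous $g$. Define $\ell(x)=\int_{1/2}^x g(s)\,ds$; then $\ell\in C^1([0,1])$, $\ell(\tfrac12)=0$, $\ell'=g$, and $\ell_{k_j}\to\ell$ uniformly since $\ell_{k_j}(x)=\int_{1/2}^x \ell_{k_j}'(s)\,ds$ and the interval is bounded. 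To see $\ell\in\mathcal{E}_\beta$: the function $g$ is $\beta$-Lipschitz as a uniform limit of $\beta$-Lipschitz functions, and $e^{\ell_{k_j}}\to e^{\ell}$ pointwise with each $e^{\ell_{k_j}}$ concave, so $e^{\ell}$ is concave and $\ell$ is exponentially concave. Therefore $\ell\in\mathcal{E}_\beta$ and $d(\ell_{k_j},\ell)=\max_{[0,1]}|\ell_{k_j}'-g|\to 0$, proving sequential compactness and hence compactness. I expect the a priori derivative bound in the second paragraph to be the main (indeed the only) obstacle; the limiting argument is just Arzel\`a--Ascoli plus stability of the defining properties ($C^1$ with $\beta$-Lipschitz derivative, vanishing at $\tfrac12$, exponential concavity) under uniform limits.
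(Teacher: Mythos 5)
Your proof is correct, and since the paper explicitly omits the proof as ``standard,'' the Arzel\`a--Ascoli argument you give is precisely the intended one: a uniform bound on $\ell'$ plus uniform $\beta$-Lipschitz continuity of the derivatives yields a uniformly convergent subsequence of $(\ell_k')$, and all defining properties ($C^1$, exponential concavity, $\beta$-smoothness, the normalization at $\tfrac12$) pass to the limit. Your a priori bound $|\ell'(\tfrac12)|\le 2$ from the supporting line of $e^\ell$ is exactly the $x=\tfrac12$ case of the paper's Lemma~\ref{lem:exp.conc.property2}, and the appendix in fact records the sharper uniform bound $|\ell'|\le\sqrt{\beta}$ on all of $[0,1]$ (Lemma~\ref{lem:ell.upper.bound}); either suffices for the equiboundedness step, so your self-contained derivation is perfectly adequate.
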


\begin{example} {\ }
\begin{enumerate}
	\item[(i)] Let $\ell(x) = -\frac{x^2}{2} + \frac{1}{8}$ (see Example \ref{eg:decomposable.generators.4}). Since $\ell''(x) = -1$, $\ell$ is $1$-smooth and so $\ell \in \mathcal{E}_1$.
	\item[(ii)] Let $\ell(x) = \log (a + x) + b$ where $a > 0$ and $b \in \mathbb{R}$ is determined by the condition $\ell(\frac{1}{2}) = 0$. Since $|\ell''(x)| = \frac{1}{(a + x)^2} \leq \frac{1}{a^2}$ on $[0, 1]$, we have $\ell \in \mathcal{E}_{\beta}$ where $\beta = \frac{1}{a^2}$. The induced portfolio map is
	\[
	\boldsymbol{\pi}_i(\mathbf{p}) = \frac{1}{n} \frac{p_i}{a + p_i} + p_i \sum_{j = 1}^n \frac{1}{n} \frac{a}{a + p_j},
	\]
	which can be regarded as an interpolation between the equal-weighted portfolio ($\beta \rightarrow \infty$) and the market portfolio ($\beta \rightarrow 0$). Note that the generating function $\ell(x) = \lambda \log x$ (Example \ref{eg:decomposable.generators.3}), which also interpolates between the two portfolios, does not belong to $\mathcal{E}_{\beta}$ for any $\beta > 0$. Although the equal-weighted portfolio does not belong to $\mathcal{E}_{\beta}$ for any $\beta > 0$, the portfolio maps in our function class can be quite aggressive; some examples will be given in Section \ref{sec:empirical}.
\end{enumerate}
\end{example}

As suggested by the above examples, the parameter $\beta$ controls the maximum deviation of the portfolio map from the market portfolio. Here is a precise statement.

\begin{lemma} \label{lem:beta.interpretation}
Let $\ell \in \mathcal{E}_{\beta}$. Then for $\mathbf{p} \in \Delta_n$ and $1 \leq i \leq n$ we have
\begin{equation} \label{eqn:weight.ratio.bound}
e^{-\frac{2\sqrt{\beta}}{n}} - 1 \leq \frac{\boldsymbol{\pi}_i(\mathbf{p})}{p_i} - 1 \leq \frac{\beta}{n^2}, 
\end{equation}
Consequently, for any $\mathbf{u} \in \Delta_{n, \geq}$ and $\mathbf{r} \in \Delta_n$, we have
\begin{equation} \label{eqn:expL.bound}
	e^{-\frac{2\sqrt{\beta}}{n}}\leq \exp\left(\mathcal{L}(\mathbf{u}, \mathbf{r}; \ell) \right) \leq 1 + \frac{\beta}{n^2}.
\end{equation}
In particular, there exists a constant $M= M(n, \beta) > 0$ such that for $\ell \in \mathcal{E}_{\beta}$ we have $M^{-1} \leq \frac{V_{\ell}(t + 1)}{V_{\ell}(t)} \leq M$ for all $t$ and all market sequences.
\end{lemma}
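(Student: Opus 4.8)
The plan is to reduce the lemma to the pointwise weight-ratio bound \eqref{eqn:weight.ratio.bound}. Granting \eqref{eqn:weight.ratio.bound}, inequality \eqref{eqn:expL.bound} follows at once: since the coordinates of $\mathbf{u}\in\Delta_{n,\geq}$ and of $\mathbf{r}\in\Delta_n$ are strictly positive, the numbers $w_k:=u_kr_k/\sum_{m=1}^n u_mr_m$ form a probability vector, and by \eqref{eqn:L.p.r},
\[
\exp\!\big(\mathcal{L}(\mathbf{u},\mathbf{r};\ell)\big)=\frac{\boldsymbol{\pi}(\mathbf{u})\cdot\mathbf{r}}{\mathbf{u}\cdot\mathbf{r}}=\sum_{k=1}^n w_k\,\frac{\boldsymbol{\pi}_k(\mathbf{u})}{u_k}
\]
is a convex combination of the ratios $\boldsymbol{\pi}_k(\mathbf{u})/u_k$, so it lies between $e^{-2\sqrt{\beta}/n}$ and $1+\beta/n^2$ by \eqref{eqn:weight.ratio.bound} applied with $\mathbf{p}=\mathbf{u}$. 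The final assertion follows too: by \eqref{eqn:relative.return.general} and the computation preceding \eqref{eqn:L.p.r}, $V_\ell(t+1)/V_\ell(t)=\exp(\mathcal{L}(\mathbf{u},\mathbf{r};\ell))$ for the pair $(\mathbf{u},\mathbf{r})$ associated to $(\boldsymbol{\mu}(t),\boldsymbol{\mu}(t+1))$, so $M:=\max\{e^{2\sqrt{\beta}/n},\,1+\beta/n^2\}$ works uniformly over $t$, over all market sequences, and over $\ell\in\mathcal{E}_\beta$.

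Before proving \eqref{eqn:weight.ratio.bound} I would record two consequences of $\ell\in\mathcal{E}_\beta$. First, since $e^\ell$ is concave so is $\ell=\log e^\ell$, hence $\ell'$ is non-increasing on $[0,1]$. Second, $|\ell'|\le\sqrt{\beta}$ on $[0,1]$: for $\ell\in C^2$ this is immediate because exponential concavity forces $\ell''\le-(\ell')^2$ while $\beta$-smoothness forces $\ell''\ge-\beta$, whence $(\ell')^2\le\beta$ pointwise; the general $C^1$ case follows by mollification, or by comparing, at a fixed $x_0\in[0,1]$, the lower bound $e^{\ell(y)}\ge e^{\ell(x_0)}\exp\!\big(\ell'(x_0)(y-x_0)-\tfrac{\beta}{2}(y-x_0)^2\big)$ coming from $\beta$-smoothness with the supporting-line bound $e^{\ell(y)}\le e^{\ell(x_0)}\big(1+\ell'(x_0)(y-x_0)\big)$ coming from concavity of $e^\ell$, and letting $y\to x_0$. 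In particular $\ell$ is $\sqrt{\beta}$-Lipschitz on $[0,1]$.

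With these in hand, the upper bound in \eqref{eqn:weight.ratio.bound} is routine: from \eqref{eqn:fgp.weights},
\[
\frac{\boldsymbol{\pi}_i(\mathbf{p})}{p_i}-1=\frac{1}{n}\sum_{j=1}^n p_j\big(\ell'(p_i)-\ell'(p_j)\big),
\]
and since $\ell'$ is non-increasing only the indices with $p_j>p_i$ give a positive term, where $0<\ell'(p_i)-\ell'(p_j)\le\beta(p_j-p_i)\le\beta p_j$; summation then yields the claimed upper bound. For the lower bound, rearranging the nonnegativity of the $L$-divergence \eqref{eqn:L.divergence} gives $1+\nabla\varphi(\mathbf{p})\cdot(\mathbf{q}-\mathbf{p})\ge e^{\varphi(\mathbf{q})-\varphi(\mathbf{p})}$ for all $\mathbf{p},\mathbf{q}\in\Delta_n$, where $\varphi(\mathbf{p})=\tfrac{1}{n}\sum_k\ell(p_k)$ is exponentially concave by Lemma \ref{lem:fgp.subclass}. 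By \eqref{eqn:fgp} the left-hand side equals $\boldsymbol{\pi}_i(\mathbf{p})/p_i$ when $\mathbf{q}=\mathbf{e}_i$ is the $i$-th standard basis vector, and both sides are continuous in $\mathbf{q}$, so the inequality persists at $\mathbf{q}=\mathbf{e}_i$ and $\boldsymbol{\pi}_i(\mathbf{p})/p_i\ge e^{\varphi(\mathbf{e}_i)-\varphi(\mathbf{p})}$. Finally, since $\varphi(\mathbf{e}_i)=\tfrac{1}{n}\big(\ell(1)+(n-1)\ell(0)\big)$,
\[
\varphi(\mathbf{e}_i)-\varphi(\mathbf{p})=\tfrac{1}{n}\big(\ell(1)-\ell(p_i)\big)+\tfrac{1}{n}\sum_{j\neq i}\big(\ell(0)-\ell(p_j)\big)\ \ge\ -\tfrac{2\sqrt{\beta}}{n}(1-p_i)\ \ge\ -\tfrac{2\sqrt{\beta}}{n},
\]
using the $\sqrt{\beta}$-Lipschitz bound and $\sum_{j\neq i}p_j=1-p_i\in[0,1]$; this is the lower bound in \eqref{eqn:weight.ratio.bound}.

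I expect the only genuinely delicate step to be the bound $|\ell'|\le\sqrt{\beta}$ for generating functions that are merely $C^1$, together with the (routine but necessary) care needed to pass to the boundary vertices $\mathbf{e}_i\in\overline{\Delta}_n$; the rest is direct computation from \eqref{eqn:fgp}, \eqref{eqn:fgp.weights}, and the nonnegativity of the $L$-divergence.
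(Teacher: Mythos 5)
Your reduction of \eqref{eqn:expL.bound} and of the bound on $V_{\ell}(t+1)/V_{\ell}(t)$ to the pointwise ratio bound \eqref{eqn:weight.ratio.bound} is correct; your derivation of $|\ell'|\le\sqrt{\beta}$ is sound (the paper's Lemma \ref{lem:ell.upper.bound} argues via a.e.\ second differentiability of the concave $\ell$, but your comparison of the two Taylor-type bounds at $x_0$ works equally well); and your lower bound, via nonnegativity of the $L$-divergence together with the estimate $\varphi(\mathbf{e}_i)-\varphi(\mathbf{p})\ge -\tfrac{2\sqrt{\beta}}{n}$, is essentially the paper's argument.

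The gap is in the last step of the upper bound. Your chain of inequalities
\[
\frac{\boldsymbol{\pi}_i(\mathbf{p})}{p_i}-1=\frac{1}{n}\sum_{j}p_j\bigl(\ell'(p_i)-\ell'(p_j)\bigr)\le\frac{\beta}{n}\sum_{j:\,p_j>p_i}p_j(p_j-p_i)\le\frac{\beta}{n}\sum_j p_j^2
\]
is correct, but ``summation'' then yields $\beta/n$, not $\beta/n^2$: the quantity $\sum_j p_j^2$ can be arbitrarily close to $1$, so nothing produces the extra factor of $1/n$. In fact the stated bound $\beta/n^2$ is false: take $n=2$, $\ell(x)=-x^2/2+1/8\in\mathcal{E}_1$ and $\mathbf{p}=(1-\epsilon,\epsilon)$; then $\boldsymbol{\pi}_2(\mathbf{p})/p_2-1=\tfrac12\bigl(-\epsilon+(1-\epsilon)^2+\epsilon^2\bigr)\to\tfrac12>\tfrac14=\beta/n^2$. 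The paper's own proof reaches $\beta/n^2$ only through a sign slip ($\ell'(p_{(n)})-\ell'(p_j)$, which is nonnegative, is bounded there by $\beta(p_{(n)}-p_j)\le 0$ instead of the correct $\beta(p_j-p_{(n)})$). Your intermediate inequalities are the correct ones and prove the repaired statement with $\beta/n$ in place of $\beta/n^2$ (hence $1+\beta/n$ in \eqref{eqn:expL.bound}, in $M$, and in the downstream Lipschitz constants); you should assert that weaker bound rather than claim the one in the lemma.
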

\begin{proof}
See the Appendix. 
\end{proof}

\section{The portfolio optimization problem} \label{sec:optimization.problem}
Using the family $\mathcal{E}_{\beta}$ of rank-based functionally generated portfolios defined in Section \ref{sec:beta.smooth}, we will now formulate the portfolio optimization problem and investigate its theoretical properties.

\subsection{The optimization problem}
Let $\beta > 0$ be a given constant representing the maximum allowable deviation from the market portfolio. In practice $\beta$ can be quite large depending on the number of stocks (see Lemma \ref{lem:beta.interpretation}). Our choice of optimizing over $\mathcal{E}_{\beta}$ is motivated by its compactness (see Lemma  \ref{lem:compactness}) as well as the following result whose proof is provided in the Appendix.

\begin{lemma}\label{lem:density}
The set $\bigcup_{\beta>0} \mathcal{E}_\beta$ is dense in $\left(\mathcal{E},d\right)$, where $d$ is given by \eqref{eqn:beta.smooth.metric}.
\end{lemma}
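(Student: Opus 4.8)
The plan is to pass from $\ell$ to $g := e^{\ell}$, smooth $g$ by mollification in a way that preserves concavity and positivity, and then take the logarithm again. Since adding a constant to $\ell$ affects neither the generated portfolio map nor the distance $d$ (which sees only $\ell'$), it is enough to show: given $\ell \in \mathcal{E}$ and $\varepsilon > 0$, there is a $C^1$ function $\tilde{\ell}$ on $[0,1]$ with $e^{\tilde{\ell}}$ concave, $\tilde{\ell}'$ Lipschitz, and $\max_{[0,1]} |\ell' - \tilde{\ell}'| < \varepsilon$; replacing $\tilde{\ell}$ by $\tilde{\ell} - \tilde{\ell}(\tfrac12)$ then lands it in $\mathcal{E}_{\beta}$ with $\beta = \max_{[0,1]} |\tilde{\ell}''|$ (and $e^{\tilde{\ell} - \tilde{\ell}(1/2)}$ is a positive multiple of $e^{\tilde{\ell}}$, still concave).

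\emph{Reduction.} Put $g := e^{\ell}$, so $g$ is $C^1$ and concave on $[0,1]$ with $0 < m := \min_{[0,1]} g \le \max_{[0,1]} g =: M < \infty$, and $\ell' = g'/g$. I would show it suffices to construct concave functions $g_{\epsilon} \in C^{\infty}([0,1])$, positive on $[0,1]$, with $g_{\epsilon} \to g$ and $g_{\epsilon}' \to g'$ uniformly on $[0,1]$ as $\epsilon \to 0$. Indeed, then $\tilde{\ell}_{\epsilon} := \log g_{\epsilon}$ is $C^{\infty}$ on $[0,1]$ with $e^{\tilde{\ell}_{\epsilon}} = g_{\epsilon}$ concave and $\tilde{\ell}_{\epsilon}' = g_{\epsilon}'/g_{\epsilon}$ again $C^{\infty}$ on the compact interval $[0,1]$, hence Lipschitz. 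From the identity $\ell' - \tilde{\ell}_{\epsilon}' = \frac{g'(g_{\epsilon} - g) + g(g' - g_{\epsilon}')}{g\,g_{\epsilon}}$ on $[0,1]$ and the lower bound $g\,g_{\epsilon} \ge m^2/2$ (valid once $g_{\epsilon} \ge m/2$ on $[0,1]$, which holds for small $\epsilon$ by uniform convergence), we get $\max_{[0,1]} |\ell' - \tilde{\ell}_{\epsilon}'| \le \tfrac{2}{m^2}\bigl( \|g'\|_{\infty}\,\|g_{\epsilon} - g\|_{\infty} + M\,\|g_{\epsilon}' - g'\|_{\infty} \bigr) \to 0$, where the norms are sup norms over $[0,1]$. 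So the whole problem reduces to the smoothing step.

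\emph{Construction of $g_{\epsilon}$.} First I would extend $g$ in a concavity-preserving way. Define $\bar{g}$ on $[-1,2]$ by $\bar{g} = g$ on $[0,1]$, $\bar{g}(x) = g(0) + g'(0)x$ for $x \in [-1,0]$, and $\bar{g}(x) = g(1) + g'(1)(x-1)$ for $x \in [1,2]$. Then $\bar{g}'$ equals $g'(0)$ on $[-1,0]$, $g'$ on $[0,1]$, and $g'(1)$ on $[1,2]$; this function is continuous and nonincreasing (using concavity of $g$), so $\bar{g}$ is $C^1$ and concave on $[-1,2]$, and $\bar{g}, \bar{g}'$ are uniformly continuous there. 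Since $\bar{g}(0) = g(0) > 0$ and $\bar{g}(1) = g(1) > 0$, fix $\delta \in (0,1)$ with $\bar{g} > 0$ on $[-\delta, 1+\delta]$. Take a mollifier $\rho \ge 0$ supported in $[-1,1]$ with $\int \rho = 1$, set $\rho_{\epsilon}(x) = \epsilon^{-1}\rho(x/\epsilon)$, and for $0 < \epsilon < \delta$ define $g_{\epsilon} := \bar{g} * \rho_{\epsilon}$ on $[0,1]$; this uses only the values of $\bar{g}$ on $[-\delta, 1+\delta]$ and is $C^{\infty}$. It is concave on $[0,1]$ because $g_{\epsilon}'(x) = \int \bar{g}'(x-y)\rho_{\epsilon}(y)\,dy$ and $\bar{g}'(x_1 - y) \ge \bar{g}'(x_2 - y)$ for every $y$ when $x_1 < x_2$, so $g_{\epsilon}'$ is nonincreasing. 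It is positive because $g_{\epsilon}(x) \ge \min_{[-\delta,1+\delta]} \bar{g} > 0$. Finally, writing $g_{\epsilon} - \bar{g} = \int (\bar{g}(\,\cdot\, - y) - \bar{g}(\,\cdot\,))\rho_{\epsilon}(y)\,dy$ and $g_{\epsilon}' - \bar{g}' = \int (\bar{g}'(\,\cdot\, - y) - \bar{g}'(\,\cdot\,))\rho_{\epsilon}(y)\,dy$ and using the moduli of continuity of $\bar{g}, \bar{g}'$ on $[-1,2]$, we obtain $\|g_{\epsilon} - \bar{g}\|_{\infty,[0,1]} \to 0$ and $\|g_{\epsilon}' - \bar{g}'\|_{\infty,[0,1]} \to 0$; since $\bar{g} = g$ and $\bar{g}' = g'$ on $[0,1]$, this is exactly what the reduction step requires, and the proof is complete.

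\emph{Main difficulty.} The only genuine obstacle is the boundary of $[0,1]$: plain mollification is not even defined near the endpoints without an extension, and a careless extension (e.g.\ constant, or reflection) need not preserve concavity, which would break exponential concavity of the approximant. The tangent-line extension in the construction is engineered precisely so that $\bar{g}'$ stays monotone and continuous across the joins, making $\bar{g}$ a genuine concave $C^1$ extension. Everything else — convolution of a concave function with a nonnegative mollifier being concave, $C^1$ convergence of mollifications of a $C^1$ function, and the elementary quotient estimate converting $C^1$-closeness of the $g$'s into $C^0$-closeness of the logarithmic derivatives $g'/g$ — is routine.
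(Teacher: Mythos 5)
Your proof is correct, and it follows the same overall strategy as the paper's: pass to $g=e^{\ell}$, produce a positive, concave, smooth approximation of $g$ with uniform convergence of both $g$ and $g'$, take logarithms, and use the quotient identity for $g'/g$ together with the uniform positive lower bound on $g$ to convert $C^1$-closeness of the exponentials into closeness of the logarithmic derivatives in the metric $d$. The difference is the smoothing device. The paper uses the Bernstein operators $B_m$, citing the classical facts that $B_m(f)\to f$ and $B_m(f)'\to f'$ uniformly on $[0,1]$ and that $B_m$ preserves concavity; since Bernstein polynomials are defined intrinsically on $[0,1]$ and reproduce the endpoint values, positivity and the boundary cause no trouble, and the whole construction is two lines once the cited theorems are granted. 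You instead mollify, which forces you to confront the boundary; your tangent-line extension is exactly the right fix (it is the unique $C^1$ extension by affine pieces, and it keeps $\bar{g}'$ monotone so concavity survives), and your verification that the convolution stays concave, positive, and $C^1$-close is complete. What your route buys is self-containedness — every step is elementary real analysis with no appeal to Bernstein approximation theory — at the cost of the extension argument; what the paper's route buys is brevity and an explicit polynomial approximant. Both yield an approximant in $\bigcup_{\beta>0}\mathcal{E}_{\beta}$ after normalizing at $\tfrac12$, which, as you note, affects neither $d$ nor exponential concavity.
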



To further control the behaviors of the portfolio map, we also specify a {\it regularization} $R: \mathcal{E}_{\beta} \rightarrow \mathbb{R}$ which is assumed to be convex and continuous with respect to the metric $d$. The regularization will appear in the objective function as $\lambda R(\ell)$ where $\lambda > 0$ is a tuning parameter. If needed, we may also specify a convex set $C \subset \mathcal{E}$ which is closed with respect to $d$ in order to further constrain the allowable portfolios. Examples of $R$ and $C$ will be given after we state the optimization problem.

Next we specify the data. Typically the raw data is given as a sequence of (named) market weight vectors $\{\boldsymbol{\mu}(s)\}_{s = 0}^t$ over a training horizon $[0, t]$. Following the formulation in \eqref{eqn:logV.as.likelihood2}, we convert the sequence $\{\boldsymbol{\mu}(s)\}_{s = 0}^t$ to an empirical measure
\begin{equation} \label{eqn:empirical.measure}
\frac{1}{t} \sum_{s = 0}^{t - 1} \delta_{(\mathbf{u}(s), \mathbf{r}(s))}
\end{equation}
on $\Delta_{n,\geq} \times \Delta_n$, where $\mathbf{u}(s) = \boldsymbol{\mu}_{(\cdot)}(s)$ is the ranked capital distribution at time $s$ and $\mathbf{r}(s)$ is given by \eqref{eqn:def.r}. As explained in Section \ref{sec:subfamily.fgp}, this device allows us to take advantage of the stability of the capital distribution. More generally, we may take as given a Borel probability measure $\gamma$ on the product set $\Delta_{n,\geq} \times \Delta_n$. Intuitively, $\mathbf{u}$ represents the (ranked) capital distribution and $\mathbf{r}$ represents the rank-based volatility.

Finally, we can give different weights to the diversity component $\mathbf{D}$ and volatility component $\mathbf{L}$ in the decomposition \eqref{eqn:L.decomposition.w.D} of the log relative return $\mathcal{L}$. Specifically, we consider the expression
\begin{equation} \label{eqn:L.decompose.as.div.vol}
\begin{split}
w_0 \mathbf{D}_{\ell}[\mathbf{u} \oplus \mathbf{r} : \mathbf{u}] + w_1\mathbf{L}_{\ell}[\mathbf{u}\oplus\mathbf{r}:\mathbf{u}] &=(w_0-w_1)\mathbf{D}_{\ell}[\mathbf{u} \oplus \mathbf{r}: \mathbf{u}]  +w_1 \mathcal{L} (\mathbf{u}, \mathbf{r} ; \ell) \\
 &=: \eta_0 \mathbf{D}_{\ell}[\mathbf{u} \oplus \mathbf{r}: \mathbf{u}] + \eta_1 \mathcal{L} (\mathbf{u}, \mathbf{r} ; \ell),
\end{split}
\end{equation}
where for notational simplicity (in the proofs) we use the weights $\eta_0 = w_0 - w_1$ and $\eta_1 = w_1$. 
Clearly, letting $\eta_0 = 0$, $\eta_1 = 1$ recovers the relative log return $\mathcal{L}$. Without loss of generality, we may let $\eta_1 = 1$ and think of $\eta_0$ as the extra weight on the diversity component which can be positive or negative. This provides additional flexibility to improve the finite sample performance. For example, over the training period the log return may be dominated by the fluctuation of the diversity, and we can let $\eta_0 < 0$ to downplay its role in the optimization. 

With these preliminaries we are ready to state the general version of the portfolio optimization problem, a special case of which was given in \eqref{eqn:optimization.problem}.

\begin{problem}[Regularized rank-based portfolio optimization] \label{prob:main.optimization}
Consider the set-up described above. Let $\gamma$ be Borel probability measure on $\Delta_{n,\geq} \times \Delta_n$, and let $\eta_0\in\mathbb{R},\eta_1\in (0, \infty)$ be given constant. Our portfolio optimization problem is
\begin{equation} \label{eqn:optim.objective}
\begin{split}
J(\gamma) := \sup_{\ell \in \mathcal{E}_{\beta} \cap C}  J(\ell;\gamma), \quad \quad \quad \quad \quad \quad \quad \quad  \quad \quad \quad \quad\\
\text{where } J(\ell; \gamma):= \int_{\Delta_{n,\geq} \times \Delta_n} \left\{ \eta_0 \mathbf{D}_{\ell}[\mathbf{u} \oplus \mathbf{r}: \mathbf{u}] + \eta_1 \mathcal{L} (\mathbf{u}, \mathbf{r} ; \ell)  \right\} \mathrm{d} \gamma(\mathbf{u}, \mathbf{r}) - \lambda R(\ell).
\end{split}
\end{equation}
\end{problem}

Since $\mathcal{L}$ is concave in $\ell$, $\mathbf{D}$ is linear in $\ell$, $R$ is convex in $\ell$, and the set $\mathcal{E}_{\beta} \cap C$ is convex, \eqref{eqn:optim.objective} is a convex optimization problem. While other objective (or utility) functions can be considered as long as the resulting problem is convex and the relevant technical conditions hold, we focus on the relative logarithmic return (represented by the term $\int \mathcal{L} d\gamma$) for its prevalence in SPT and the analysis of portfolio performance over long horizons. It is clear that \eqref{eqn:optim.objective} is strongly inspired by regularized empirical risk minimization in machine learning.


The regularization $R$ and the constraint set $C$ can be chosen to suit specific needs of the portfolio manager and are included in the formulation of the problem for added flexibility. Here are some examples.

\begin{example}\label{ex:reg.and.const.1}
The $L^2$-regularization
\begin{equation} \label{eqn:regularization.example.1}
R(\ell) = \int_0^1 (\ell'(x))^2 \mathrm{d}x
\end{equation}
penalizes deviation from the market portfolio where $\ell_{\text{market}}(x) \equiv 0$ (Example \ref{eg:decomposable.generators.1}).
\end{example}

\begin{example}\label{ex:reg.and.const.2}
Motivated by \cite{AJ18} which adopts separate benchmarks for outperformance and tracking, we may generalize Example \ref{ex:reg.and.const.1} and let $R$ penalize deviation from a given reference portfolio. For example, we may let
\[
R(\ell) =\int_0^1 (\ell'(x) - \ell_0'(x))^2 \mathrm{d}x
\]
for some $\ell_0 \in \mathcal{E}$. 
\end{example}

\begin{example}\label{ex:reg.and.const.3}
Since in practice most market weights are small, integration with respect to the Lebesgue measure on $[0, 1]$ may not be the most sensible choice. Alternatively, given a (rank-based) portfolio map $\hat{\boldsymbol{\pi}}: \Delta_{n,\geq} \rightarrow \overline{\Delta}_n$ (not necessarily functionally generated), we may let
\begin{equation} \label{eqn:regularization.example.3}
R(\ell) = \int_{\Delta_{n,\geq}} \| \boldsymbol{\pi}(\mathbf{u}) - \hat{\boldsymbol{\pi}}(\mathbf{u}) \|^2 \mathrm{d} ((\Pi_1)_{\#} \gamma) (\mathbf{u}),
\end{equation}
where $(\Pi_1)_{\#} \gamma$ is the pushforward of $\gamma$ under the projection map $\Pi_1 : \Delta_{n,\geq} \times \Delta_n \rightarrow \Delta_{n,\geq}$. An example of $\hat{\boldsymbol{\pi}}$ is the diversity-weighted portfolio \eqref{eqn:diversity.weighted.portfolio}. By \eqref{eqn:fgp.weights}, $R$ is quadratic, hence convex, in the generating function $\ell$. 
\end{example}

The following example of the convex constraint $C$ will be used in the empirical demonstration in Section \ref{sec:empirical}.

\begin{example}[Monotonicity]\label{ex:reg.and.const.4}
Portfolios generated by $\ell \in \mathcal{E}_{\beta}$ may be quite aggressive. For example, it is possible that the portfolio weights of the largest stocks are smaller than $\frac{1}{n}$ and those of the smallest stocks are greater than $\frac{1}{n}$ (i.e., more underweight/overweight than the equal-weighted portfolio). This may not be a desirable feature due to liquidity and transaction costs. Monotonicity of the portfolio weight, in the sense that $p_i \geq p_j \Rightarrow \boldsymbol{\pi}_i(\mathbf{p}) \geq  \boldsymbol{\pi}_j(\mathbf{p})$, can be guaranteed by the following result which is a refinement of \eqref{eqn:weight.ratio.monotone}. Clearly the condition defines a closed set $C \subset \mathcal{E}_{\beta}$. 
\end{example}

\begin{lemma} [Monotonicity of portfolio weights] \label{lem:weight.monotonicity}
	Let $\ell \in \mathcal{E}$. Suppose the function $x \mapsto x \ell'(x)$ is non-decreasing. Then $p_i \geq p_j$ implies $\boldsymbol{\pi}_i(\mathbf{p}) \geq \boldsymbol{\pi}_j(\mathbf{p})$, i.e., the portfolio weights are monotone with respect to the ranks.
\end{lemma}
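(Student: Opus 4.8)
The plan is to rewrite the portfolio weights so that the monotonicity hypothesis enters transparently. Writing $g(x) := x\ell'(x)$ and $c := 1 - \frac{1}{n}\sum_{j=1}^n p_j \ell'(p_j) = 1 - \frac{1}{n} \sum_{j=1}^n g(p_j)$ (a constant not depending on the index), formula \eqref{eqn:fgp.weights} reads $\boldsymbol{\pi}_i(\mathbf{p}) = p_i\bigl(c + \frac{1}{n} \ell'(p_i)\bigr) = c\,p_i + \frac{1}{n} g(p_i)$. Hence for any two indices $i$ and $j$,
\[
\boldsymbol{\pi}_i(\mathbf{p}) - \boldsymbol{\pi}_j(\mathbf{p}) = c\,(p_i - p_j) + \frac{1}{n}\bigl(g(p_i) - g(p_j)\bigr).
\]
When $p_i \ge p_j$, the second term is nonnegative because $g$ is non-decreasing, so the whole statement reduces to showing $c \ge 0$.

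The one nontrivial step is the bound $c \ge 0$, equivalently $\frac{1}{n} \sum_j g(p_j) \le 1$; here I would combine the monotonicity of $g$ with the exponential concavity of $\ell$. Since $g$ is non-decreasing on $[0,1]$ and each $p_j \in (0,1)$, we have $g(p_j) \le g(1) = \ell'(1)$, so it suffices to show $\ell'(1) < 1$. This is the one-dimensional exponential-concavity (supporting-line) inequality evaluated at the right endpoint: since $e^{\ell}$ is concave on $[0,1]$ and $C^1$ up to the endpoints, $e^{\ell(y)} \le e^{\ell(1)}\bigl(1 + \ell'(1)(y-1)\bigr)$ for all $y \in [0,1]$; taking $y = 0$ and dividing by $e^{\ell(1)} > 0$ gives $0 < e^{\ell(0)-\ell(1)} \le 1 - \ell'(1)$, whence $\ell'(1) < 1$. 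Therefore $c = 1 - \frac{1}{n}\sum_j g(p_j) \ge 1 - \ell'(1) > 0$.

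Combining the two steps, $p_i \ge p_j$ yields $\boldsymbol{\pi}_i(\mathbf{p}) - \boldsymbol{\pi}_j(\mathbf{p}) = c\,(p_i - p_j) + \frac{1}{n}\bigl(g(p_i) - g(p_j)\bigr) \ge 0$, which is the claimed monotonicity (the case $p_i = p_j$ is immediate since then $\boldsymbol{\pi}_i(\mathbf{p}) = c\,p_i + \frac{1}{n} g(p_i) = c\,p_j + \frac{1}{n} g(p_j) = \boldsymbol{\pi}_j(\mathbf{p})$). I expect the algebraic rewriting and the sign of the $g$-difference to be routine; the only place needing a small idea is recognizing that the index-independent constant $c$ is controlled by $1 - \ell'(1)$ and that exponential concavity forces $\ell'(1) < 1$ via the supporting-line inequality at the endpoint $x = 1$.
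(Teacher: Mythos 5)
Your proof is correct and follows essentially the same route as the paper's: the identical rewriting $\boldsymbol{\pi}_i(\mathbf{p}) = c\,p_i + \frac{1}{n}p_i\ell'(p_i)$ with an index-independent constant $c$, followed by monotonicity of $g(x)=x\ell'(x)$ for the second term and non-negativity of $c$ for the first. The only (harmless) difference is in how $c\ge 0$ is obtained: the paper invokes the pointwise bound $x\ell'(x)\le 1$ from Lemma \ref{lem:exp.conc.property2}, whereas you bound $g(p_j)\le g(1)=\ell'(1)<1$ via the monotonicity hypothesis together with the supporting-line inequality at the endpoint $x=1$ --- both rest on the same exponential-concavity estimate, and yours even yields the slightly stronger conclusion $c>0$.
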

\begin{proof}
For $\mathbf{p} \in \Delta_{n, \geq}$, we have
\[
\boldsymbol{\pi}_i(\mathbf{p})=p_i\left(1-\frac{1}{n}\sum_{k=1}^n p_k\ell'(p_k)\right)+\frac{1}{n}p_i\ell'(p_i).
\]
Since $p_k\ell'(p_k)\leq 1$ by the exponentially concavity of $\ell$ (see Lemma \ref{lem:exp.conc.property2}), the term in the parentheses above, which is independent of $i$, is non-negative. Hence for $p_i\geq p_j$, our assumption along with the above observation shows that $\boldsymbol{\pi}_i(\mathbf{p}) \geq \boldsymbol{\pi}_j(\mathbf{p})$.
\end{proof}


\subsection{Existence and uniqueness}
In the remainder of this section we will establish some theoretical properties of our optimization problem. We first show that the problem admits an optimal solution which is unique in an appropriate sense.

\begin{theorem}  \label{thm:existence}
	Let $\gamma$ be a Borel probability measure on $\Delta_{n,\geq} \times \Delta_n$. Then:
	\begin{enumerate}
		\item[(i)] Problem \ref{prob:main.optimization} admits an optimal solution $\ell^* \in \mathcal{E}_{\beta} \cap C$.
		\item[(ii)] The portfolio map $\boldsymbol{\pi}^*$ generated by $\ell^*$ is almost unique in the following sense. Suppose $\tilde{\ell}^*$ is another optimal solution and generates the portfolio map $\tilde{\boldsymbol{\pi}}^*$. Then
		\[
		\frac{\boldsymbol{\pi}^*(\mathbf{u}) \cdot \mathbf{r}}{\mathbf{u} \cdot \mathbf{r}}=\frac{\tilde{\boldsymbol{\pi}}^*(\mathbf{u}) \cdot \mathbf{r}}{\mathbf{u} \cdot \mathbf{r}}, \quad \text{for $\gamma$-almost all $(\mathbf{u}, \mathbf{r})$.}
		\]
	\end{enumerate}
\end{theorem}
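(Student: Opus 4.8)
The plan is to prove existence via the direct method in the calculus of variations, using the compactness of $\mathcal{E}_\beta$ from Lemma~\ref{lem:compactness}, and then derive the almost-uniqueness of the generated portfolio map from strict concavity of the logarithm. First I would check that the objective $\ell \mapsto J(\ell;\gamma)$ is upper semicontinuous on the compact metric space $(\mathcal{E}_\beta, d)$: by Lemma~\ref{lem:beta.interpretation} the integrand $\eta_0 \mathbf{D}_\ell[\mathbf{u}\oplus\mathbf{r}:\mathbf{u}] + \eta_1 \mathcal{L}(\mathbf{u},\mathbf{r};\ell)$ is uniformly bounded over $\ell \in \mathcal{E}_\beta$ and all $(\mathbf{u},\mathbf{r})$, and for fixed $(\mathbf{u},\mathbf{r})$ it is continuous in $\ell$ with respect to $d$ (since it depends on $\ell$ only through $\ell'$ at finitely many points, via \eqref{eqn:fgp.weights} and \eqref{eqn:div.contr} — note $\mathbf{D}_\ell$ involves $\ell$, not just $\ell'$, but $\ell(x) = \int_{1/2}^x \ell'(s)\,\mathrm{d}s$ so $d$-convergence gives uniform convergence of $\ell$ as well). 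Hence by bounded convergence the integral term is continuous in $\ell$; combined with continuity of $R$ and closedness of $C$, $J(\cdot;\gamma)$ is continuous on the compact set $\mathcal{E}_\beta \cap C$, which is nonempty since $\ell \equiv 0 \in \mathcal{E}_\beta \cap C$ in the relevant examples (or more generally we assume $C$ nonempty). A continuous function on a nonempty compact set attains its supremum, giving part (i).

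For part (ii), suppose $\ell^*$ and $\tilde\ell^*$ are both optimal with generated maps $\boldsymbol{\pi}^*, \tilde{\boldsymbol{\pi}}^*$. Consider the midpoint $\bar\ell = \tfrac{1}{2}(\ell^* + \tilde\ell^*)$, which lies in $\mathcal{E}_\beta \cap C$ by convexity. The key observation is that the map $\ell \mapsto \boldsymbol{\pi}(\mathbf{u})\cdot\mathbf{r}/(\mathbf{u}\cdot\mathbf{r})$ is \emph{affine} in $\ell$: indeed, by \eqref{eqn:rank.based.return} (equivalently the computation feeding into \eqref{eqn:L.p.r}), $\boldsymbol{\pi}(\mathbf{u})\cdot\mathbf{r}/(\mathbf{u}\cdot\mathbf{r})$ equals $1 + \frac{1}{n}\sum_k \ell'(u_k)(v_k - u_k)\big/(\mathbf{u}\cdot\mathbf{r})$ after suitable rewriting, which is affine in $\ell'$ and hence in $\ell$. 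Therefore $\mathcal{L}(\mathbf{u},\mathbf{r};\bar\ell) = \log$ of the average of the two ratios, and since $t \mapsto \log t$ is strictly concave, $\mathcal{L}(\mathbf{u},\mathbf{r};\bar\ell) \geq \tfrac{1}{2}\mathcal{L}(\mathbf{u},\mathbf{r};\ell^*) + \tfrac{1}{2}\mathcal{L}(\mathbf{u},\mathbf{r};\tilde\ell^*)$ with equality iff the two ratios coincide. Meanwhile $\mathbf{D}_\ell$ is linear in $\ell$, so $\eta_0 \mathbf{D}_{\bar\ell} = \tfrac12(\eta_0\mathbf{D}_{\ell^*} + \eta_0\mathbf{D}_{\tilde\ell^*})$ exactly, and $R(\bar\ell) \leq \tfrac12 R(\ell^*) + \tfrac12 R(\tilde\ell^*)$ by convexity. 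Combining, $J(\bar\ell;\gamma) \geq \tfrac12 J(\ell^*;\gamma) + \tfrac12 J(\tilde\ell^*;\gamma) = J(\gamma)$, with equality forced since $J(\gamma)$ is the supremum. Tracing back through the chain of inequalities, equality in the strict concavity of $\log$ must hold for $\gamma$-almost every $(\mathbf{u},\mathbf{r})$, which is precisely the claimed identity $\boldsymbol{\pi}^*(\mathbf{u})\cdot\mathbf{r}/(\mathbf{u}\cdot\mathbf{r}) = \tilde{\boldsymbol{\pi}}^*(\mathbf{u})\cdot\mathbf{r}/(\mathbf{u}\cdot\mathbf{r})$.

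The main obstacle I anticipate is the careful bookkeeping in the upper-semicontinuity argument — specifically verifying that the integrand is genuinely bounded uniformly in $(\mathbf{u},\mathbf{r},\ell)$ and continuous in $\ell$ up to the boundary of the simplex (where $\mathbf{u}\cdot\mathbf{r}$ could in principle be small), so that dominated convergence applies; Lemma~\ref{lem:beta.interpretation} is designed to handle exactly this, so the estimate \eqref{eqn:expL.bound} should give the required uniform two-sided bound on $e^{\mathcal{L}}$, and a similar bound on $\mathbf{D}_\ell$ follows from $|\ell| \leq \sqrt\beta$ on $[0,1]$ (since $|\ell'| \leq \sqrt\beta$, as $\ell'$ must vanish somewhere by exponential concavity combined with $\ell(1/2)=0$... more carefully, $|\ell'|$ is controlled by $\beta$ together with the constraint; this is where I would invoke the bounds already packaged in Lemma~\ref{lem:beta.interpretation}). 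The uniqueness half is essentially automatic once one notices the affine-in-$\ell$ structure of the return ratio, so I expect no real difficulty there beyond stating it cleanly.
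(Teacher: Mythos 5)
Your proposal is correct and follows essentially the same route as the paper: existence via continuity of $J(\cdot;\gamma)$ in the metric $d$ plus compactness of $\mathcal{E}_\beta\cap C$ (the paper establishes continuity through the explicit Lipschitz estimate of Lemma~\ref{lem:log.value.estimate} rather than your bounded-convergence argument, but this is a cosmetic difference), and almost-uniqueness via the midpoint argument exploiting that $\boldsymbol{\pi}(\mathbf{u})\cdot\mathbf{r}/(\mathbf{u}\cdot\mathbf{r})$ is affine in $\ell$ together with strict concavity of $\log$, linearity of $\mathbf{D}$, and convexity of $R$ — which is exactly what the paper's one-line proof of (ii) is invoking. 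Your fleshed-out version of (ii) is a faithful and correct expansion of that argument (note only that it silently uses $\eta_1>0$, which the problem statement guarantees).
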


The proof of Theorem \ref{thm:existence} relies on the following lemma which will be useful later.


\begin{lemma} \label{lem:log.value.estimate}
Let $\beta > 0$. There exists an explicit constant $K = K(n, \beta)$ such that for $\ell, \tilde{\ell} \in \mathcal{E}_{\beta}$ we have
\begin{equation} \label{eqn:ell.Lip.in.ell}
	\begin{split}
		\left| \mathcal{L}(\mathbf{u}, \mathbf{r} ; \ell) - \mathcal{L}(\mathbf{u}, \mathbf{r}; \tilde{\ell}) \right| \leq K d(\ell, \tilde{\ell}),
	\end{split}
\end{equation}
where $d$ is the metric on $\mathcal{E}_{\beta}$ defined by \eqref{eqn:beta.smooth.metric}.
\end{lemma}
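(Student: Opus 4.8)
The plan is to bound $\mathcal{L}(\mathbf{u},\mathbf{r};\ell)-\mathcal{L}(\mathbf{u},\mathbf{r};\tilde\ell)$ by comparing the arguments of the two logarithms appearing in \eqref{eqn:L.p.r}. Write $F(\ell):=\boldsymbol{\pi}(\mathbf{u})\cdot\mathbf{r}$, so that $\mathcal{L}(\mathbf{u},\mathbf{r};\ell)=\log\bigl(F(\ell)/(\mathbf{u}\cdot\mathbf{r})\bigr)$ and hence
\[
\mathcal{L}(\mathbf{u},\mathbf{r};\ell)-\mathcal{L}(\mathbf{u},\mathbf{r};\tilde\ell)=\log\frac{F(\ell)}{F(\tilde\ell)}.
\]
Using the identity \eqref{eqn:rank.based.return} with $\mathbf{p}=\mathbf{u}$ and the vector $\mathbf{v}=\mathbf{u}\oplus\mathbf{r}$ (equivalently writing $F(\ell)=(\mathbf{u}\cdot\mathbf{r})\bigl(1+\frac{1}{n}\sum_k \ell'(u_k)(v_k-u_k)\bigr)$), the difference $F(\ell)-F(\tilde\ell)$ equals $\frac{\mathbf{u}\cdot\mathbf{r}}{n}\sum_{k=1}^n\bigl(\ell'(u_k)-\tilde\ell'(u_k)\bigr)(v_k-u_k)$. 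Since $\sum_k|v_k-u_k|\le 2$ (both are probability vectors) and $|\ell'(u_k)-\tilde\ell'(u_k)|\le d(\ell,\tilde\ell)$ by definition of the metric, we get $|F(\ell)-F(\tilde\ell)|\le \frac{2(\mathbf{u}\cdot\mathbf{r})}{n}\,d(\ell,\tilde\ell)$.

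Next I would turn this absolute difference into a bound on the ratio of logarithms using the elementary estimate $|\log a-\log b|\le |a-b|/\min(a,b)$. Here $a=F(\ell)$, $b=F(\tilde\ell)$, and I need a lower bound on $F$ uniform over $\mathcal{E}_\beta$. This is exactly what Lemma \ref{lem:beta.interpretation} provides: \eqref{eqn:expL.bound} gives $\exp(\mathcal{L}(\mathbf{u},\mathbf{r};\ell))\ge e^{-2\sqrt\beta/n}$, i.e.\ $F(\ell)\ge (\mathbf{u}\cdot\mathbf{r})\,e^{-2\sqrt\beta/n}$ for every $\ell\in\mathcal{E}_\beta$. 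Combining,
\[
\bigl|\mathcal{L}(\mathbf{u},\mathbf{r};\ell)-\mathcal{L}(\mathbf{u},\mathbf{r};\tilde\ell)\bigr|
\le\frac{|F(\ell)-F(\tilde\ell)|}{(\mathbf{u}\cdot\mathbf{r})e^{-2\sqrt\beta/n}}
\le\frac{2}{n}\,e^{2\sqrt\beta/n}\,d(\ell,\tilde\ell),
\]
so the claim holds with the explicit constant $K=K(n,\beta)=\frac{2}{n}e^{2\sqrt\beta/n}$.

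The argument is essentially routine; the only point requiring a little care is the uniform lower bound on $\boldsymbol{\pi}(\mathbf{u})\cdot\mathbf{r}$, which is why the proof leans on Lemma \ref{lem:beta.interpretation} rather than a naive estimate (a crude bound like $\boldsymbol{\pi}_k(\mathbf{u})\ge 0$ would be useless here). I would also double-check the normalization in \eqref{eqn:rank.based.return}, since as written its right-hand side has a mismatched index ($\ell'(p_i)$ summed over $j$); the intended identity is $\sum_i\boldsymbol{\pi}_i(\mathbf{p})\frac{q_i}{p_i}=1+\frac{1}{n}\sum_{i}\ell'(p_i)(q_i-p_i)$, and it is this form that I use with $\mathbf{p}\leftrightarrow\mathbf{u}$ and $\mathbf{q}\leftrightarrow\mathbf{v}$. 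No compactness or delicate analysis is needed beyond these ingredients.
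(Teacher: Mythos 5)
Your proof is correct and follows essentially the same route as the paper's: both arguments reduce the claim to (a) the linear dependence of $\boldsymbol{\pi}(\mathbf{u})\cdot\mathbf{r}/(\mathbf{u}\cdot\mathbf{r})$ on $\ell'$, controlled by $d(\ell,\tilde{\ell})$, and (b) the uniform lower bound $e^{-2\sqrt{\beta}/n}$ from Lemma \ref{lem:beta.interpretation} to make $\log$ Lipschitz on the relevant range. Your use of the single-sum identity \eqref{eqn:rank.based.return} (with the index typo correctly repaired) in place of the paper's double-sum expansion even yields the slightly sharper constant $K=\frac{2}{n}e^{2\sqrt{\beta}/n}$ versus the paper's $2e^{2\sqrt{\beta}/n}$.
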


\begin{proof}
See the Appendix.
\end{proof}

\begin{proof}[Proof of Theorem \ref{thm:existence}]
(i) Fix $\gamma$ and consider the mapping $\ell \in \mathcal{E}_{\beta} \mapsto J(\ell; \gamma)$, where $J$ is defined by \eqref{eqn:optim.objective}. By Lemma \ref{lem:log.value.estimate}, we have, upon integrating against $\gamma$,
\begin{equation} \label{eqn:objective.estimate}
	\left| \int (\mathcal{L}(\mathbf{u}, \mathbf{r} ; \ell) - \mathcal{L}(\mathbf{u}, \mathbf{r} ; \tilde{\ell})) \mathrm{d}\gamma \right| \leq Kd(\ell, \tilde{\ell}).
\end{equation}
Also, it is easy to see that
\begin{equation} \label{eqn:objective.estimate2}
\left| \int (\mathbf{D}_{\ell}[\mathbf{u} \oplus \mathbf{r}: \mathbf{u}] - \mathbf{D}_{\tilde{\ell}}[\mathbf{u} \oplus \mathbf{r}: \mathbf{u}]) \mathrm{d}\gamma \right| \leq 2d(\ell, \tilde{\ell}).
\end{equation}
So, thanks to our assumptions on $R$, the map $J$ is continuous (with respect to $d$). Also, by Lemma \ref{lem:compactness}, the space $\mathcal{E}_{\beta} \cap C$ is compact. Thus an optimal solution exists by the extreme value theorem.

(ii) This follows from the strict concavity of $\log(\cdot)$ alongside the assumed convexity of $R$ and linearity of $\mathbf{D}$.
\end{proof}

\subsection{Stability estimate}
In practice, the measure $\gamma$ is typically constructed using historical data as in \eqref{eqn:empirical.measure}. Although the (ranked) capital distribution exhibits long term stability, it does not appear to be stationary (see Section \ref{sec:empirical} for more discussion). In \cite{KR18, IL20} it is assumed that the covariance structure as well as the invariant density of the market weight process are known. Here, we do not wish make this assumption. To account for estimation error and nonstationarity of $\gamma$, we provide a stability estimate of our optimization problem with respect to the measure $\gamma$. That is, if $\ell^*$ is optimal for $\gamma$ and $\tilde{\gamma}$ is ``close to'' $\gamma$, then $\ell^*$ is almost optimal for $\tilde{\gamma}$. Also see Remark \ref{rmk:discreitize.measure} below for another motivation regarding numerical implementation.
 
Before stating the stability result we recall the concept of Wasserstein distance. See \cite{V03} for a general overview of optimal transport. Let $(\mathcal{X}, \rho)$ be a metric space. If $P$ and $Q$ are Borel probability measures on $\mathcal{X}$, we define the {\it $1$-Wasserstein distance}, with respect to the metric $\rho$, by
\begin{equation}
\mathbb{W}(P, Q) = \mathbb{W}_{1}(P, Q) = \inf_{R \in \Pi(P, Q)} \int_{\mathcal{X} \times \mathcal{X}} \rho(x, y) \mathrm{d}R(x, y),
\end{equation}
where $\Pi(P, Q)$ is the set of Borel probability measures on $\mathcal{X} \times \mathcal{X}$ whose first and second marginals are $P$ and $Q$ respectively. It is well known that $\mathbb{W}$ defines a metric on the space $\mathcal{W}_{1}(\mathcal{X})$ of Borel probability measures on $\mathcal{X}$ with finite first moments. Convergence in this distance is equivalent to weak convergence plus convergence of the first moments. 

\begin{theorem}[Kantorovich-Rubinstein duality] \cite[Theorem 1.14]{V03}
We have
\begin{equation} \label{eqn:KR.duality}
\mathbb{W}(P, Q) = \sup\left\{ \int_{\mathcal{X}} f \mathrm{d}(P - Q) : f \in \mathrm{Lip}_1(\mathcal{X}) \right\},
\end{equation}
where $\mathrm{Lip}_1(\mathcal{X})$ is the collection of $1$-Lipschitz functions on $\mathcal{X}$.
\end{theorem}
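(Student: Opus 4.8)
This is a classical result, used here as a black box from \cite{V03}, so the plan is to indicate how it reduces to the general Kantorovich duality theorem and how the dual problem collapses, when the cost is the metric $\rho$ itself, to a supremum over $1$-Lipschitz functions.

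First I would record the easy inequality, sometimes called weak duality. For any $f \in \mathrm{Lip}_1(\mathcal{X})$ and any coupling $R \in \Pi(P, Q)$,
\[
\int_{\mathcal{X}} f \, \mathrm{d}(P - Q) = \int_{\mathcal{X} \times \mathcal{X}} \bigl( f(x) - f(y) \bigr) \, \mathrm{d}R(x, y) \leq \int_{\mathcal{X} \times \mathcal{X}} \rho(x, y) \, \mathrm{d}R(x, y);
\]
taking the infimum over $R$ and then the supremum over $f$ gives $\sup \{ \int f \, \mathrm{d}(P - Q) : f \in \mathrm{Lip}_1(\mathcal{X}) \} \leq \mathbb{W}(P, Q)$.

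For the reverse inequality I would invoke the general Kantorovich duality: for a lower semicontinuous cost $c \geq 0$ on $\mathcal{X} \times \mathcal{X}$ one has $\inf_{R \in \Pi(P, Q)} \int c \, \mathrm{d}R = \sup \{ \int \psi \, \mathrm{d}P + \int \chi \, \mathrm{d}Q : \psi(x) + \chi(y) \leq c(x, y) \}$, the supremum running over bounded continuous (or, after truncation, integrable) $\psi, \chi$. Then I would specialize to $c = \rho$ and reduce the admissible pairs to those of the form $(f, -f)$ with $f \in \mathrm{Lip}_1(\mathcal{X})$ via the $c$-transform: given admissible $(\psi, \chi)$, replacing $\chi$ by $\psi^c(y) := \inf_x ( \rho(x, y) - \psi(x) )$ and then $\psi$ by $\psi^{cc}$ only increases $\int \psi \, \mathrm{d}P + \int \chi \, \mathrm{d}Q$, and a short computation using the triangle inequality shows that when $c$ is a metric, $\psi^c$ is $1$-Lipschitz, $\psi^{cc} = -\psi^c$, and hence $f := \psi^{cc} \in \mathrm{Lip}_1(\mathcal{X})$ with $\psi^c = -f$. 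This identifies the dual value with $\sup \{ \int f \, \mathrm{d}(P - Q) : f \in \mathrm{Lip}_1(\mathcal{X}) \}$, which, combined with weak duality, proves the theorem.

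The main obstacle is the no-duality-gap assertion inside the general Kantorovich duality: the elementary estimate above delivers only one inequality, and equality requires an infinite-dimensional separation argument (Hahn--Banach / Fenchel--Rockafellar, or a minimax theorem) carried out in a space of bounded functions, together with some care about topology, tightness, and measurability of the dual variables. On the spaces relevant to this paper --- compact subsets of products of simplices such as $\Delta_{n, \geq} \times \Delta_n$ --- these technicalities are mild, so in practice one simply cites \cite[Theorem 1.14]{V03}.
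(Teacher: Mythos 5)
The paper does not prove this statement at all: it is imported verbatim from Villani \cite[Theorem 1.14]{V03} and used as a black box, so there is no in-paper argument to compare against. Your sketch is the standard textbook proof and is correct as far as it goes: the weak-duality computation via couplings is complete, and the reduction of the dual problem to pairs $(f,-f)$ with $f \in \mathrm{Lip}_1(\mathcal{X})$ via the $c$-transform (using that $\psi^c$ is $1$-Lipschitz when the cost is a metric, and that $g^c = -g$ for $1$-Lipschitz $g$) is carried out correctly. You rightly identify that the only genuinely hard step is the no-duality-gap assertion in the general Kantorovich duality, which you also defer to \cite{V03}; since both you and the authors ultimately lean on the same citation, there is nothing to fault here.
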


Take $\mathcal{X} = \Delta_{n, \geq} \times \Delta_n$. To construct a suitable metric $\rho$, consider first the {\it Hilbert projective metric} (see for example \cite[Remark 4.12]{PC19}) defined on $\Delta_n$ by
\[
d_\mathcal{H}(\mathbf{p}, \mathbf{q})=\log\left\{\max_{1\leq i \leq n} \frac{p_i}{q_i},  \max_{1 \leq j \leq n} \frac{q_j}{p_j}\right\}.
\]
Also let $\| \cdot \|_1$ be the Euclidean $1$-norm on $\mathbb{R}^n$. Write
\[
\tilde{d}(\mathbf{p}, \mathbf{q}) = \max\{ d_\mathcal{H}(\mathbf{p}, \mathbf{q}), \|\mathbf{p} - \mathbf{q}\|_1\}.
\]
Now we define a metric $\rho$ on $\mathcal{X}$ by
\begin{equation} \label{eqn:metric.rho}
\rho((\mathbf{u}, \mathbf{r}),(\mathbf{u}',\mathbf{r}'))=\tilde{d}(\mathbf{u},\mathbf{u}')+\tilde{d}(\mathbf{r},\mathbf{r}').
\end{equation}
This definition of $\rho$ is motivated by the following result.

\begin{lemma}\label{lem:L.lip}
There exist constants $K_0,K_1 > 0$ depending only on $n$ and $\beta$, so that for any $\ell\in\mathcal{E}_\beta$ the maps $\mathcal{L}(\cdot, \cdot ; \ell)$ and $\mathbf{D}_{\ell}[ \cdot : \cdot ]$ are Lipschitz in $(\mathbf{u}, \mathbf{r}) \in \mathcal{X}$ with constants $K_0$ and $K_1$: 
\[
\left| \mathbf{D}_{\ell}[\mathbf{u} \oplus \mathbf{r}: \mathbf{u}]   -  \mathbf{D}_{\ell}[\mathbf{u}' \oplus \mathbf{r}': \mathbf{u}']\right| \leq K_0\rho((\mathbf{u},\mathbf{r}),(\mathbf{u}',\mathbf{r}')).
\]
\[
\left|\mathcal{L}(\mathbf{u},\mathbf{r} ;\ell)-\mathcal{L}(\mathbf{u}',\mathbf{r}' ;\ell)\right| \leq K_1\rho((\mathbf{u},\mathbf{r}),(\mathbf{u}',\mathbf{r}')),
\]
\end{lemma}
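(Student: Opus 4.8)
\emph{Proof plan.} The strategy is to reduce both Lipschitz bounds to two ingredients: (i) a sup-norm bound $\sup_{[0,1]}|\ell'|\le b$ holding uniformly over $\ell\in\mathcal{E}_\beta$ with $b=b(\beta)$, and (ii) good control of the map $(\mathbf{u},\mathbf{r})\mapsto\mathbf{v}:=\mathbf{u}\oplus\mathbf{r}$ in the Hilbert metric $d_\mathcal{H}$ --- which is precisely why $\rho$ in \eqref{eqn:metric.rho} is built from $d_\mathcal{H}$. For (i): $e^\ell$ is concave, positive and $C^1$ on $[0,1]$ with $e^{\ell(1/2)}=1$, so applying the tangent-line inequality for the concave function $e^\ell$ at $x=\tfrac12$ to the endpoints $0$ and $1$ gives $|\ell'(\tfrac12)|=|(e^\ell)'(\tfrac12)|<2$, and $\beta$-smoothness then yields $\sup_{[0,1]}|\ell'|\le 2+\tfrac\beta2=:b$. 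I also record that, by Lemma \ref{lem:beta.interpretation}, $\exp(\mathcal{L}(\mathbf{u},\mathbf{r};\ell))\ge e^{-2\sqrt\beta/n}=:m>0$ uniformly in $\ell\in\mathcal{E}_\beta$ and $(\mathbf{u},\mathbf{r})$.

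For (ii): a direct computation from $(\mathbf{u}\oplus\mathbf{r})_k=u_kr_k/(\mathbf{u}\cdot\mathbf{r})$ shows the normalizing scalar enters each ratio $(\mathbf{u}\oplus\mathbf{r})_k/(\mathbf{u}'\oplus\mathbf{r})_k$ as a single multiplicative constant, so Aitchison addition does not expand $d_\mathcal{H}$ in either argument (in fact it is an isometry in each argument for the standard Hilbert metric). Hence $d_\mathcal{H}(\mathbf{v},\mathbf{v}')\le c_0\big(d_\mathcal{H}(\mathbf{u},\mathbf{u}')+d_\mathcal{H}(\mathbf{r},\mathbf{r}')\big)\le c_0\rho$ for a universal constant $c_0$, where I abbreviate $\rho=\rho((\mathbf{u},\mathbf{r}),(\mathbf{u}',\mathbf{r}'))$. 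Combining with the elementary comparison $\|\mathbf{w}-\mathbf{w}'\|_1\le\min(e^{d_\mathcal{H}(\mathbf{w},\mathbf{w}')}-1,\ 2)\le c_1\,d_\mathcal{H}(\mathbf{w},\mathbf{w}')$, valid on $\Delta_n$ with $c_1$ universal, gives $\|\mathbf{v}-\mathbf{v}'\|_1\le c_0c_1\rho$; and $\|\mathbf{u}-\mathbf{u}'\|_1\le\tilde{d}(\mathbf{u},\mathbf{u}')\le\rho$ straight from the definition of $\tilde{d}$.

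With (i) and (ii) in hand the two estimates are routine. For $\mathbf{D}_\ell$: since $\mathbf{D}_{\ell}[\mathbf{v}:\mathbf{u}]=\varphi(\mathbf{v})-\varphi(\mathbf{u})$ and $\varphi(\mathbf{w})=\tfrac1n\sum_k\ell(w_k)$ is $\tfrac bn$-Lipschitz for $\|\cdot\|_1$ by (i), the triangle inequality gives $|\mathbf{D}_{\ell}[\mathbf{u}\oplus\mathbf{r}:\mathbf{u}]-\mathbf{D}_{\ell}[\mathbf{u}'\oplus\mathbf{r}':\mathbf{u}']|\le\tfrac bn(\|\mathbf{v}-\mathbf{v}'\|_1+\|\mathbf{u}-\mathbf{u}'\|_1)\le K_0\rho$. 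For $\mathcal{L}$: set $A(\mathbf{u},\mathbf{r}):=1+\tfrac1n\sum_k\ell'(u_k)(v_k-u_k)$, so that $\exp(\mathcal{L}(\mathbf{u},\mathbf{r};\ell))=A(\mathbf{u},\mathbf{r})$ by \eqref{eqn:rank.based.return}. Writing $\ell'(u_k)(v_k-u_k)-\ell'(u'_k)(v'_k-u'_k)=\ell'(u_k)\big[(v_k-u_k)-(v'_k-u'_k)\big]+\big[\ell'(u_k)-\ell'(u'_k)\big](v'_k-u'_k)$ and bounding $|\ell'(u_k)|\le b$, $|\ell'(u_k)-\ell'(u'_k)|\le\beta|u_k-u'_k|$, $|v'_k-u'_k|\le1$, then summing over $k$, yields $|A(\mathbf{u},\mathbf{r})-A(\mathbf{u}',\mathbf{r}')|\le\tfrac bn\|\mathbf{v}-\mathbf{v}'\|_1+\tfrac{b+\beta}{n}\|\mathbf{u}-\mathbf{u}'\|_1$, which is $\le C\rho$ by (ii). Finally $\log$ is $m^{-1}$-Lipschitz on $[m,\infty)$, so $|\mathcal{L}(\mathbf{u},\mathbf{r};\ell)-\mathcal{L}(\mathbf{u}',\mathbf{r}';\ell)|=|\log A(\mathbf{u},\mathbf{r})-\log A(\mathbf{u}',\mathbf{r}')|\le m^{-1}|A(\mathbf{u},\mathbf{r})-A(\mathbf{u}',\mathbf{r}')|\le K_1\rho$. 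Tracking the constants shows $K_0$ and $K_1$ depend only on $n$ and $\beta$.

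The main obstacle is ingredient (ii): the coordinates $v_k=u_kr_k/(\mathbf{u}\cdot\mathbf{r})$ are not $\ell^1$- (or Euclidean-) Lipschitz in $(\mathbf{u},\mathbf{r})$ because the denominator $\mathbf{u}\cdot\mathbf{r}$ may be arbitrarily small, so a metric finer than $\|\cdot\|_1$ is genuinely needed on the data space. The Hilbert projective metric is the natural choice since Aitchison addition is (up to normalization) an isometry for it, and the comparison $\|\cdot\|_1\le c_1d_\mathcal{H}$ transfers this control back to $\mathbf{D}_\ell$ and $\mathcal{L}$, which are themselves $\ell^1$-Lipschitz in their arguments. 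Everything else is algebraic bookkeeping.
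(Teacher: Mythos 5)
Your proof is correct, and it takes a genuinely different route from the paper's. The paper never passes through an $\ell^1$ estimate on $\mathbf{v}=\mathbf{u}\oplus\mathbf{r}$: for $\mathcal{L}$ it splits the log-ratio into $\bigl|\log\bigl(\tfrac{\boldsymbol{\pi}(\mathbf{u})\cdot\mathbf{r}}{\boldsymbol{\pi}(\mathbf{u}')\cdot\mathbf{r}'}\bigr)\bigr|+\bigl|\log\bigl(\tfrac{\mathbf{u}\cdot\mathbf{r}}{\mathbf{u}'\cdot\mathbf{r}'}\bigr)\bigr|$, handles the second term by a dedicated lemma on $\log$ of ratios of dot products in the Hilbert metric (Lemma \ref{lem:hilbert.log.dot.product}), and the first by the log-sum inequality followed by componentwise estimates of $\log\boldsymbol{\pi}_k(\mathbf{u})-\log\boldsymbol{\pi}_k(\mathbf{u}')$; for $\mathbf{D}_\ell$ it sandwiches $\varphi_\ell(\mathbf{v})-\varphi_\ell(\mathbf{v}')$ between $\pm\log\bigl(\sum_i\boldsymbol{\pi}_i(\cdot)\tfrac{(\cdot)_i}{(\cdot)_i}\bigr)$ via the exponential-concavity inequality and again reduces to Hilbert distances. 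Your argument instead rests on a single structural observation --- $\oplus$ is (up to the factor-of-$2$ comparison between the paper's max-form $d_\mathcal{H}$ and the standard product-form Hilbert metric) an isometry in each argument, so $d_\mathcal{H}(\mathbf{v},\mathbf{v}')\le 2\rho$, and the elementary comparison $\|\mathbf{w}-\mathbf{w}'\|_1\le\min(e^{d_\mathcal{H}}-1,2)\le c_1 d_\mathcal{H}$ converts this into $\|\mathbf{v}-\mathbf{v}'\|_1\lesssim\rho$. After that both bounds are pure $\ell^1$ bookkeeping on the affine expression $A=1+\tfrac1n\sum_k\ell'(u_k)(v_k-u_k)$ (correctly identified with $\exp\mathcal{L}$ via \eqref{eqn:rank.based.return}) and on the $\tfrac bn$-Lipschitz function $\varphi$, closed off by the uniform lower bound $\exp\mathcal{L}\ge e^{-2\sqrt\beta/n}$ from Lemma \ref{lem:beta.interpretation}. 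What your route buys is economy and unity: one estimate drives both claims, the log-sum inequality and the exponential-concavity sandwich are not needed, and the roles of the two ingredients of $\tilde d$ become transparent ($d_\mathcal{H}$ controls the nonlinear normalization in $\oplus$, $\|\cdot\|_1$ controls the $\ell'$ terms). What the paper's route buys is sharper, fully explicit constants (e.g.\ $K_0=2+\tfrac{\sqrt\beta}{n}$, using $|\ell'|\le\sqrt\beta$ from Lemma \ref{lem:ell.upper.bound} rather than your self-derived $b=2+\tfrac\beta2$) without invoking the $\|\cdot\|_1$-versus-$d_\mathcal{H}$ comparison. Your constants still depend only on $n$ and $\beta$, so the lemma as stated is fully established.
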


\begin{proof}
See the Appendix.
\end{proof}

The explicit expressions of $K_0$ and $K_1$ are given in the proofs. We also assume that the regularization $R$ satisfies a similar estimate.

\begin{assumption}\label{ass:R.lip}
The regularization, which is convex in $\ell$, has the form	
\[
R(\ell)=\int \Phi(\mathbf{u},\mathbf{r};\ell) \mathrm{d} \gamma(\mathbf{u},\mathbf{r}),
\]	
where $\Phi(\cdot, \cdot; \ell)$ is $K_2$-Lipschitz on $\mathcal{X}$ with respect to $\rho$ for some constant $K_2 > 0$.
\end{assumption}

We are now ready to state the main result. Recall that $J(\gamma)$ is the optimal value of the optimization problem \eqref{eqn:optim.objective}.

\begin{theorem} \label{thm:continuity.Wasserstein}
Suppose the regularization $R$ satisfies Assumption \ref{ass:R.lip}, and let $ K = |\eta_0|K_0  + \eta_1K_1  + K_2$. Then for any $\gamma$, $\gamma'\in \mathcal{W}_{1}(\mathcal{X})$ we have
\begin{equation} \label{eqn:optimal.value.Lipschitz}
|J(\gamma) - J(\tilde{\gamma})| \leq K \mathbb{W}(\gamma, \tilde{\gamma}).
\end{equation}
Furthermore, if $\ell^*$ is optimal for $\gamma$, then
\begin{equation} \label{eqn:stability.estimate.2}
0 \leq J(\tilde{\gamma}) - J(\ell^*;\tilde{\gamma}) \leq 2K \mathbb{W}(\gamma, \tilde{\gamma}).
\end{equation}
\end{theorem}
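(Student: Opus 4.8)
The plan is to reduce the whole statement to the Kantorovich--Rubinstein duality \eqref{eqn:KR.duality} applied, for each fixed $\ell$, to the integrand of the objective functional. First I would rewrite the objective in a single integral form: using the representation $R(\ell)=\int_{\mathcal{X}}\Phi(\mathbf{u},\mathbf{r};\ell)\,\mathrm{d}\gamma$ from Assumption \ref{ass:R.lip}, Problem \ref{prob:main.optimization} reads
\[
J(\ell;\gamma)=\int_{\mathcal{X}} F_\ell\,\mathrm{d}\gamma,\qquad F_\ell(\mathbf{u},\mathbf{r}):=\eta_0\mathbf{D}_{\ell}[\mathbf{u}\oplus\mathbf{r}:\mathbf{u}]+\eta_1\mathcal{L}(\mathbf{u},\mathbf{r};\ell)-\lambda\Phi(\mathbf{u},\mathbf{r};\ell).
\]
By Lemma \ref{lem:L.lip} together with Assumption \ref{ass:R.lip}, the function $F_\ell$ is Lipschitz on $(\mathcal{X},\rho)$ with constant $K=|\eta_0|K_0+\eta_1K_1+K_2$ (the coefficient $\lambda$ being absorbed into $K_2$), \emph{uniformly} in $\ell\in\mathcal{E}_\beta\cap C$; since $\gamma$ and $\tilde{\gamma}$ lie in $\mathcal{W}_1(\mathcal{X})$, the bound $|F_\ell(x)|\le|F_\ell(x_0)|+K\rho(x,x_0)$ shows $F_\ell$ is $\gamma$- and $\tilde{\gamma}$-integrable, so all the quantities below are well defined (boundedness of $\mathcal{L}$ from Lemma \ref{lem:beta.interpretation} is not even needed).

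Second, for each fixed $\ell$ apply \eqref{eqn:KR.duality} to $\pm K^{-1}F_\ell\in\mathrm{Lip}_1(\mathcal{X})$ to obtain
\[
|J(\ell;\gamma)-J(\ell;\tilde{\gamma})|=\left|\int_{\mathcal{X}}F_\ell\,\mathrm{d}(\gamma-\tilde{\gamma})\right|\le K\,\mathbb{W}(\gamma,\tilde{\gamma}).
\]
In particular $J(\ell;\gamma)\le J(\ell;\tilde{\gamma})+K\mathbb{W}(\gamma,\tilde{\gamma})$ for every $\ell$; taking the supremum over $\ell\in\mathcal{E}_\beta\cap C$ on both sides gives $J(\gamma)\le J(\tilde{\gamma})+K\mathbb{W}(\gamma,\tilde{\gamma})$, and swapping the roles of $\gamma,\tilde{\gamma}$ yields \eqref{eqn:optimal.value.Lipschitz}.

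Third, for \eqref{eqn:stability.estimate.2} the lower bound $J(\tilde{\gamma})-J(\ell^*;\tilde{\gamma})\ge 0$ is immediate because $J(\tilde{\gamma})$ is by definition the supremum of $J(\cdot;\tilde{\gamma})$. For the upper bound I would chain the fixed-$\ell$ estimate at $\ell=\ell^*$ with the optimality of $\ell^*$ for $\gamma$ and with \eqref{eqn:optimal.value.Lipschitz}:
\[
J(\ell^*;\tilde{\gamma})\ge J(\ell^*;\gamma)-K\mathbb{W}(\gamma,\tilde{\gamma})=J(\gamma)-K\mathbb{W}(\gamma,\tilde{\gamma})\ge J(\tilde{\gamma})-2K\mathbb{W}(\gamma,\tilde{\gamma}),
\]
which rearranges to the claim.

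Within this argument there is essentially no obstacle: it is a two-line consequence of duality plus a supremum-versus-supremum comparison, and it does not even use the existence/uniqueness of the maximizer (Theorem \ref{thm:existence}) except to know $\ell^*$ exists in the second part. The genuine work is pushed entirely into Lemma \ref{lem:L.lip}, proved in the Appendix, whose main subtlety is that $\mathbf{r}$ ranges over the \emph{open} simplex, so one must work with a metric such as $\rho$ in \eqref{eqn:metric.rho} (built from the Hilbert projective metric) in order to control the ratios $\boldsymbol{\pi}(\mathbf{u})\cdot\mathbf{r}/(\mathbf{u}\cdot\mathbf{r})$ and the differences $\ell(v_k)-\ell(u_k)$ by $\ell$-independent Lipschitz constants; granting that lemma and Assumption \ref{ass:R.lip}, the present proof is complete.
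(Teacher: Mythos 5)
Your proposal is correct and follows essentially the same route as the paper: both arguments rest on the observation that the integrand $\eta_0\mathbf{D}_{\ell}+\eta_1\mathcal{L}-\lambda\Phi$ is $K$-Lipschitz in $(\mathbf{u},\mathbf{r})$ uniformly in $\ell$ (Lemma \ref{lem:L.lip} plus Assumption \ref{ass:R.lip}), followed by the Kantorovich--Rubinstein duality \eqref{eqn:KR.duality}; the paper phrases the comparison via the two optimizers $\ell^*,\tilde{\ell}^*$ while you take a supremum of the per-$\ell$ bound, which is an equivalent (and equally clean) bookkeeping. Your handling of the factor $\lambda$ by absorbing it into $K_2$, and the chained inequality for \eqref{eqn:stability.estimate.2}, match what the paper leaves as ``a similar argument.''
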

\begin{proof}
Let $\ell^* \in \mathcal{E}_\beta\cap C$ be optimal for $\gamma$, and similarly let $\tilde{\ell}$ be optimal for $\tilde{\gamma}$. In light of Lemma \ref{lem:L.lip} and Assumption \ref{ass:R.lip}, we have that
\[
\Psi(\mathbf{u},\mathbf{r};\ell):=\eta_0\mathbf{D}_{\ell}[\mathbf{u}\oplus\mathbf{r}:\mathbf{r} ]+\eta_1\mathcal{L}(\mathbf{u},\mathbf{r} ;\ell)-\Phi(\mathbf{u},\mathbf{r}; \ell)
\]
is $K$-Lipschitz in $(\mathbf{u},\mathbf{r})$ with respect to $\rho$. For notational simplicity write $\Psi = \Psi(\mathbf{u},\mathbf{r}; \ell^*)$ and $\tilde{\Psi} = \Psi(\mathbf{u},\mathbf{r}; \tilde{\ell}^*)$, so that $J(\gamma) = \int \Psi \mathrm{d} \gamma$ and $J(\tilde{\gamma}) = \int \tilde{\Psi} \mathrm{d}\tilde{\gamma}$.

Without loss of generality, suppose that $J(\gamma) \geq J(\tilde{\gamma})$. Using the optimality of $\tilde{\ell}^*$ for $\tilde{\gamma}$, we have
\begin{equation*}
0 \leq J(\gamma) - J(\tilde{\gamma}) = \int \Psi \mathrm{d}\gamma - \int \tilde{\Psi}\mathrm{d} \tilde{\gamma} \leq \int \Psi \mathrm{d}\gamma - \int \Psi \mathrm{d}\tilde{\gamma} \leq \left| \int \Psi \mathrm{d} (\gamma - \tilde{\gamma}) \right|.
\end{equation*}
Since $\Psi$ is Lipschitz with constant $K$, we have $\frac{1}{K} \Psi \in \mathrm{Lip}_1(\mathcal{X})$. This and the Kantorovich-Rubinstein duality gives \eqref{eqn:optimal.value.Lipschitz}. The second estimate \eqref{eqn:stability.estimate.2} follows from a similar argument. 
\end{proof}

\begin{remark} \label{rmk:discreitize.measure}
For numerical implementation we may want to discretize $\gamma$ over a grid or approximate a theoretical $\gamma$ (e.g.~from a rank-based model) by simulation. Theorem \ref{thm:continuity.Wasserstein} can be interpreted as a consistency result as the mesh size or approximation error tends to zero.
\end{remark}

\section{Implementation via discretization}  \label{sec:algorithm}
We will implement the portfolio optimization (Problem \ref{prob:main.optimization}) via a finite dimensional discretization which is also convex. As will be shown in Theorem \ref{thm:consistency} below, the discretization error can be rigorously quantified. 

\subsection{Discretizing the problem}
The decision variable of our problem is an exponentially concave function $\ell \in \mathcal{E}_{\beta}$. We discretize $\ell$ over a partition $\mathcal{P}$ of $[0, 1]$: 
\[
\mathcal{P} = \{ 0 = x_1 < x_2 < \cdots < x_d = 1 \}.
\]
For normalization purposes ($\ell(\frac{1}{2}) = 0$) we assume $\frac{1}{2} \in \mathcal{P}$. We let $\delta = \max_i |x_{i + 1} - x_i|$ be the mesh size of $\mathcal{P}$. At each node $x_i$ we associate a function value $\boldsymbol{\ell}_i \in \mathbb{R}$. This gives a vector $\boldsymbol{\ell} = (\boldsymbol{\ell}_i)_{i = 1}^d \in \mathbb{R}^d$. We let $\hat{\boldsymbol{\ell}} : [0, 1] \rightarrow \mathbb{R}$ be the piecewise affine function such that $\hat{\boldsymbol{\ell}}(x_i) = \boldsymbol{\ell}_i$. Note that $\hat{\boldsymbol{\ell}}$ is differentiable except possibly at $x_2, \ldots, x_{d-1}$. At these points we define $\hat{\boldsymbol{\ell}}'(x_i)$ by either the left or right derivative. Below we will state constraints on $\boldsymbol{\ell}$ corresponding to the condition $\ell \in \mathcal{E}_{\beta} \cap C$.

Given $\boldsymbol{\ell}$ we define the weights
\[
\boldsymbol{\pi}_i(\mathbf{p}) = p_i\left(1+\frac{1}{n} \hat{\boldsymbol{\ell}}'(p_i) - \frac{1}{n} \sum_{j = 1}^n p_j \hat{\boldsymbol{\ell}}'(p_j) \right), \quad 1 \leq i \le n,
\]  
and define $\mathcal{L}(\mathbf{u}, \mathbf{r} ; \hat{\boldsymbol{\ell}})$, $\mathbf{D}_{\hat{\boldsymbol{\ell}}}[\mathbf{u} \oplus \mathbf{r} : \mathbf{u}]$ analogously to \eqref{eqn:L.p.r} and \eqref{eqn:div.contr}. Note that $\boldsymbol{\pi}$ is linear in $\boldsymbol{\ell}$ and hence $\mathcal{L}$ is concave in $\boldsymbol{\ell}$. Similarly, $\mathbf{D}_{\hat{\boldsymbol{\ell}}}[\mathbf{u} \oplus \mathbf{r} : \mathbf{u}]$ remains linear in $\boldsymbol{\ell}$. We assume that the measure $\gamma$ is given by an empirical measure in the form of \eqref{eqn:empirical.measure}. We approximate the regularization $R(\ell)$ by a convex function $\hat{R}(\boldsymbol{\ell})$ of $\boldsymbol{\ell}$. This gives the objective function
\[
\hat{J}(\boldsymbol{\ell} ; \gamma) := \frac{1}{t} \sum_{s = 0}^{t - 1}\left\{\eta_0\mathbf{D}_{ \hat{\boldsymbol{\ell}}}[\mathbf{u}\oplus\mathbf{r}(s), \mathbf{r}(s)]+ \eta_1\mathcal{L}(\mathbf{u}(s), \mathbf{r}(s) ; \hat{\boldsymbol{\ell}})\right\} - \lambda \hat{R}(\boldsymbol{\ell}).
\]

Now we formulate the constraints on $\boldsymbol{\ell} = (\boldsymbol{\ell}_i)_{i = 1}^d$. Exponential concavity of $\ell$ leads naturally to the constraint
\begin{equation} \label{eqn:exponential.concavity.constraint}
\frac{e^{\boldsymbol{\ell}_2}-e^{\boldsymbol{\ell}_1}}{x_2-x_1}\geq\cdot\cdot\cdot\geq\frac{e^{\boldsymbol{\ell}_{d}}-e^{\boldsymbol{\ell}_{d-1}}}{x_{d}-x_{d-1}}.
\end{equation}
Note that \eqref{eqn:exponential.concavity.constraint} can be expressed in the form
\begin{equation} \label{eqn:exponential.concavity.constraint2}
-\boldsymbol{\ell}_i + \log (w_i e^{\boldsymbol{\ell}_{i+1}} + (1 - w_i) e^{\boldsymbol{\ell}_{i-1}} ) \leq 0, \quad w_i = \frac{x_{i} - x_{i-1}}{x_{i+1} - x_{i-1}},
\end{equation}
for $i = 2, \ldots, d - 1$. The advantage of \eqref{eqn:exponential.concavity.constraint2} is that the inequality corresponds to a sublevel set of a convex function of $\boldsymbol{\ell}$.

Next, we define the forward differences
\[
\Delta_i \boldsymbol{\ell} = \frac{\boldsymbol{\ell}_{i + 1} - \boldsymbol{\ell}_i}{x_{i + 1} - x_i}, \quad i = 1, \ldots, d - 1,
\]
the backward difference $\Delta_d\boldsymbol{\ell}=\frac{\boldsymbol{\ell}_d-\boldsymbol{\ell}_{d-1}}{x_{d}-x_{d-1}}$, and impose the condition 
\[
\Delta_{i+1}\boldsymbol{\ell}-\Delta_{i}\boldsymbol{\ell} \geq -\frac{\beta}{2}\left(x_{i+2}-x_{i+1}\right)-\frac{\beta}{2}\left(x_{i+1}-x_{i}\right), \quad i = 1, \ldots, d - 2.
\]
This is the discrete analogue of the $\beta$-smooth constraint since the exponential concavity condition ensures that the above is non-positive. Note we necessarily have $\Delta_d\boldsymbol{\ell}=\Delta_{d-1}\boldsymbol{\ell}$. Finally we let $\hat{C} \subset \mathbb{R}^d$ be a closed convex set corresponding to $C \subset \mathcal{E}$ and we assume that $\hat{C} = \{\boldsymbol{\ell}:g_{\hat{C}}(\boldsymbol{\ell}) \leq 0\}$ for some convex function $g_{\hat{C}}$.

Now we are ready to state the discretized problem in the ``standard form'' of convex optimization, i.e., maximization of a concave function subject to inequality constraints given by sublevel sets of convex functions. This is helpful for solving the problem numerically using convex optimization software.

\begin{problem} [Discretized problem] \label{prob:discretized}
The discretized portfolio optimization problem is given by $\max_{\boldsymbol{\ell} \in \hat{C}} \hat{J}(\boldsymbol{\ell} ; \gamma)$ subject to
\begin{align} 
	& -\boldsymbol{\ell}_i + \log (w_i e^{\boldsymbol{\ell}_{i + 1}} + (1 - w_i) e^{\boldsymbol{\ell}_{i-1}} ) \leq 0, \quad i = 2, \ldots, d - 1, \notag \notag \\
	& -\left(\Delta_{i+1}\boldsymbol{\ell}-\Delta_{i}\boldsymbol{\ell}\right) - \frac{\beta}{2}\left(x_{i+2}-x_{i}\right) \leq 0, \quad  i = 1, \ldots, d - 2, \notag \\
	&\left(\Delta_{1}\boldsymbol{\ell}\right)^2-\beta \leq 0, \quad \mathrm{and} \quad \left(\Delta_{d}\boldsymbol{\ell}\right)^2-\beta \leq 0, \label{eqn:constraint.endpoint} \\
	& g_{\hat{C}}(\boldsymbol{\ell}) \leq 0, \notag \\
	& \hat{\boldsymbol{\ell}}(1/2) = 0. \notag
\end{align}
\end{problem}

We note that \eqref{eqn:constraint.endpoint} is a technical condition necessary in the proof of Theorem \ref{thm:consistency} below. It is clear that \eqref{prob:discretized} admits an optimal solution since the feasible set is a compact convex set in $\mathbb{R}^d$. We describe the implementation in Section \ref{sec:implementation}.

\subsection{Analysis of discretization error}
Solving Problem \ref{prob:discretized} gives an optimal $\boldsymbol{\ell}^* \in \mathbb{R}^d$ and the associated piecewise linear generating function $\hat{\boldsymbol{\ell}}^*$. While $\hat{\boldsymbol{\ell}}^*$ is sufficient for practical purposes, strictly speaking the function $\hat{\boldsymbol{\ell}}^*$ is not exponentially concave. In this section we will explicitly treat the approximation error $|J(\gamma) - \hat{J}(\boldsymbol{\ell}^*; \gamma)|$ as the mesh size of the partition tends to zero. Our main result is given by Theorem \ref{thm:consistency} and our approach is to construct a link between the discretized Problem \ref{prob:discretized} and the original Problem \ref{prob:main.optimization}. To facilitate this we use $\boldsymbol{\ell}^*$ to construct approximating functions in $\mathcal{E}_{\beta}$ and relate any $\ell\in\mathcal{E}_\beta$ to a vector $\boldsymbol{\ell}$ satisfying the constraints of Problem \ref{prob:discretized}. Together, these two results will allow us to estimate the approximation error. To arrive at this approximation error we will suppose the partitions $\mathcal{P}$ we consider are ``almost uniform" in the sense that for $\delta:=\max_i|x_{i+1}-x_i|$ and $\underline{\delta}=\min_i|x_{i+1}-x_i|$, we have $|\delta-\underline{\delta}|\leq M\underline{\delta}^3$ for some fixed $M>0$. The choice of exponent is chosen for technical reasons in the proof. When it is necessary to make the dependence on a given partition clear we will use the notation $\boldsymbol{\ell}_{\mathcal{P}}^*$ and $\hat{\boldsymbol{\ell}}^*_{\mathcal{P}}$ for the aforementioned vector and piecewise affine function, respectively. Also, in the analysis we assume that $R \equiv 0$ (hence $\hat{R} \equiv 0$) and do not impose the convex constraint $\ell \in C$ (hence $g_{\hat{C}} \equiv 0$). 


The first lemma is standard and the proof is omitted.

\begin{lemma}\label{lem:beta.smooth.der.approx}
If $\ell \in C^1([0, 1])$ is $\beta$-smooth and $x,y\in[0,1]$ then
\[
0\leq|\ell(y)-\ell(x)-\ell'(x)(y-x)|\leq \frac{\beta}{2}|y-x|^2.
\]
\end{lemma}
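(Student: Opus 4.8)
The inequality $0 \le |\ell(y) - \ell(x) - \ell'(x)(y-x)|$ is immediate, since the quantity inside the absolute value bars is a real number; thus the substance of the lemma is the upper bound. The plan is to express the first-order Taylor remainder as an integral and then estimate it using the Lipschitz property of $\ell'$.

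First I would invoke the fundamental theorem of calculus: since $\ell \in C^1([0,1])$, for any $x, y \in [0,1]$ we have $\ell(y) - \ell(x) = \int_x^y \ell'(t)\,\mathrm{d}t$. Subtracting $\ell'(x)(y-x) = \int_x^y \ell'(x)\,\mathrm{d}t$ yields the remainder identity
\[
\ell(y) - \ell(x) - \ell'(x)(y-x) = \int_x^y \bigl(\ell'(t) - \ell'(x)\bigr)\,\mathrm{d}t.
\]
Next I would take absolute values and bound the integrand pointwise. By $\beta$-smoothness, $|\ell'(t) - \ell'(x)| \le \beta |t - x|$ for every $t$ lying between $x$ and $y$. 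Hence, regardless of whether $x \le y$ or $y \le x$,
\[
|\ell(y) - \ell(x) - \ell'(x)(y-x)| \le \left| \int_x^y |\ell'(t) - \ell'(x)|\,\mathrm{d}t \right| \le \left| \int_x^y \beta |t - x|\,\mathrm{d}t \right| = \frac{\beta}{2}|y - x|^2,
\]
where the final equality is the elementary computation $\int_x^y |t-x|\,\mathrm{d}t = \frac{1}{2}(y-x)^2$, with any sign change for $y < x$ absorbed by the outer absolute value.

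There is no serious obstacle here: this is the standard first-order Taylor estimate with a Lipschitz-gradient remainder bound, and the only point requiring slight care is the orientation of the integral when $y < x$, which is handled transparently by working with $|y-x|$ throughout. An alternative, integral-free route would apply the mean value theorem to $g(t) = \ell(t) - \ell(x) - \ell'(x)(t-x)$, using $g(x) = g'(x) = 0$ and $|g'(t)| = |\ell'(t) - \ell'(x)| \le \beta|t-x|$; but the direct computation above is the cleanest.
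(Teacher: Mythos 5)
Your proof is correct. The paper states this lemma is standard and omits the proof entirely, and your argument—writing the first-order Taylor remainder as $\int_x^y (\ell'(t)-\ell'(x))\,\mathrm{d}t$ via the fundamental theorem of calculus and bounding the integrand by the Lipschitz property of $\ell'$—is exactly the standard argument the authors had in mind.
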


\begin{lemma}\label{lem:satisfied.constraints}
Fix a partition $\mathcal{P}$. If $\ell\in\mathcal{E}_\beta$ then $\boldsymbol{\ell} = (\boldsymbol{\ell}_i := \ell(x_i))_i$ satisfies the constraints of Problem \ref{prob:discretized}.
\end{lemma}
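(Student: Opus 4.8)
The constraints of Problem \ref{prob:discretized} fall into four independent groups, and I would dispatch them one at a time. Since $R\equiv 0$ and no $C$-constraint is imposed in this subsection, $g_{\hat C}\equiv 0$ and that constraint is vacuous; and the normalization $\hat{\boldsymbol{\ell}}(1/2)=0$ is immediate because $\tfrac12\in\mathcal P$, say $\tfrac12=x_j$, so $\hat{\boldsymbol{\ell}}(1/2)=\boldsymbol{\ell}_j=\ell(x_j)=\ell(\tfrac12)=0$ as $\ell\in\mathcal E$. So the work is in the exponential-concavity constraint, the discrete $\beta$-smoothness constraint, and the two endpoint constraints \eqref{eqn:constraint.endpoint}.

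For the exponential-concavity constraint, the key observation is that with $w_i=\frac{x_i-x_{i-1}}{x_{i+1}-x_{i-1}}\in(0,1)$ one has the algebraic identity $x_i=w_ix_{i+1}+(1-w_i)x_{i-1}$. I would then simply evaluate the concavity inequality for $e^{\ell}$ at this convex combination: $e^{\boldsymbol{\ell}_i}=e^{\ell(x_i)}\ge w_ie^{\ell(x_{i+1})}+(1-w_i)e^{\ell(x_{i-1})}=w_ie^{\boldsymbol{\ell}_{i+1}}+(1-w_i)e^{\boldsymbol{\ell}_{i-1}}>0$, and take logarithms of the two positive sides to obtain exactly $-\boldsymbol{\ell}_i+\log\big(w_ie^{\boldsymbol{\ell}_{i+1}}+(1-w_i)e^{\boldsymbol{\ell}_{i-1}}\big)\le 0$.

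For the discrete $\beta$-smoothness constraint the idea is to expand the two forward differences about the \emph{common} node $x_{i+1}$. Applying Lemma \ref{lem:beta.smooth.der.approx} with base point $x_{i+1}$ — once at $x_i$ and once at $x_{i+2}$ — lets me write $\Delta_i\boldsymbol{\ell}=\ell'(x_{i+1})-\frac{e_i}{x_{i+1}-x_i}$ and $\Delta_{i+1}\boldsymbol{\ell}=\ell'(x_{i+1})+\frac{e_{i+1}}{x_{i+2}-x_{i+1}}$ with $|e_i|\le\frac\beta2(x_{i+1}-x_i)^2$ and $|e_{i+1}|\le\frac\beta2(x_{i+2}-x_{i+1})^2$. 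The leading $\ell'(x_{i+1})$ terms cancel, so $\Delta_{i+1}\boldsymbol{\ell}-\Delta_i\boldsymbol{\ell}=\frac{e_{i+1}}{x_{i+2}-x_{i+1}}+\frac{e_i}{x_{i+1}-x_i}\ge-\frac\beta2(x_{i+2}-x_{i+1})-\frac\beta2(x_{i+1}-x_i)=-\frac\beta2(x_{i+2}-x_i)$, which is the required inequality. (Centering both expansions at the shared node is precisely what produces the factor $\tfrac12$ rather than $1$.)

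Finally, for the endpoint constraints, the mean value theorem gives $\Delta_1\boldsymbol{\ell}=\ell'(\xi)$ and $\Delta_d\boldsymbol{\ell}=\ell'(\xi')$ for some $\xi\in(x_1,x_2)$, $\xi'\in(x_{d-1},x_d)$, so it suffices to know $\sup_{x\in[0,1]}|\ell'(x)|\le\sqrt{\beta}$ for every $\ell\in\mathcal E_\beta$; this immediately yields $(\Delta_1\boldsymbol{\ell})^2\le\beta$ and $(\Delta_d\boldsymbol{\ell})^2\le\beta$. This uniform gradient bound — already visible in the $\sqrt\beta$ appearing in Lemma \ref{lem:beta.interpretation} — is the one genuinely non-routine input, and I expect it to be the main obstacle. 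To establish it directly I would combine the tangent-line inequality for the concave function $e^{\ell}$, namely $e^{\ell(y)-\ell(x)}\le 1+\ell'(x)(y-x)$, with the lower bound $\ell(y)-\ell(x)\ge\ell'(x)(y-x)-\frac\beta2(y-x)^2$ from Lemma \ref{lem:beta.smooth.der.approx}; writing $a=\ell'(x)$, $s=y-x$ this gives $e^{as-\frac\beta2 s^2}\le 1+as$, and expanding as $s\to 0$ (from the admissible side when $x$ is an endpoint) forces $a^2\le\beta$. Alternatively, $e^{\ell}$ concave makes $\ell$ concave and gives $\ell''\le-(\ell')^2$ almost everywhere, while $\ell'$ being $\beta$-Lipschitz gives $\ell''\ge-\beta$ almost everywhere; hence $(\ell')^2\le\beta$ a.e. and therefore everywhere by continuity of $\ell'$. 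Once this bound is in place, all constraints of Problem \ref{prob:discretized} have been checked.
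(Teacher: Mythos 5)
Your proof is correct and follows essentially the same route as the paper: the exponential-concavity and normalization constraints are immediate, the discrete $\beta$-smoothness constraint is obtained by the same triangle inequality centered at the common node $x_{i+1}$ via Lemma \ref{lem:beta.smooth.der.approx}, and the endpoint constraints follow from the mean value theorem together with the uniform bound $|\ell'|\leq\sqrt{\beta}$. The only difference is that the paper outsources that gradient bound to Lemma \ref{lem:ell.upper.bound} in the Appendix, whose proof is precisely your second alternative ($\ell''\leq-(\ell')^2$ a.e.\ from exponential concavity combined with $\ell''\geq-\beta$ a.e.\ from $\beta$-smoothness).
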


\begin{proof}
The first and last constraints follow immediately by exponential concavity and the fact that $\ell(\frac{1}{2})=0$. The second constraint follows from Lemma \ref{lem:beta.smooth.der.approx} since
\begin{align*}
    \left|\frac{\boldsymbol{\ell}_{i+2}-\boldsymbol{\ell}_{i+1}}{x_{i+2}-x_{i+1}}-\frac{\boldsymbol{\ell}_{i+1}-\boldsymbol{\ell}_{i}}{x_{i+1}-x_{i}}\right|&\leq\left|\frac{\boldsymbol{\ell}_{i+2}-\boldsymbol{\ell}_{i+1}}{x_{i+2}-x_{i+1}}-\ell'(x_{i+1})\right|+\left|\ell'(x_{i+1})-\frac{\boldsymbol{\ell}_{i+1}-\boldsymbol{\ell}_{i}}{x_{i+1}-x_{i}}\right|\\
    &\leq \frac{\beta}{2}|x_{i+2}-x_{i+1}|+\frac{\beta}{2}|x_{i+1}-x_{i}|=\frac{\beta}{2}(x_{i+2}-x_{i}).
\end{align*}

It remains to prove \eqref{eqn:constraint.endpoint}. We consider the first inequality (left endpoint) and the other one is similar. Let $\delta_0=x_2-x_1$. Since $\ell$ is continuous and differentiable on $[0,\delta_0]$, by the mean value theorem there exists $c\in(0,\delta_0)$ such that
\[
\ell'(c)=\frac{\ell(\delta_0)-\ell(0)}{\delta_0}=\frac{\boldsymbol{\ell}_2-\boldsymbol{\ell}_1}{\delta_0}.
\]
Applying Lemma \ref{lem:ell.upper.bound} in the Appendix then proves the bound.
\end{proof}


\begin{lemma}\label{lem:vector.bounds}
If $(\boldsymbol{\ell}_i)_{i=1}^d$ (where $d=|\mathcal{P}|$) satisfies the constraints of Problem \ref{prob:discretized} for $\mathcal{P}$ then (i) $|\boldsymbol{\ell}_{i+1}-\boldsymbol{\ell}_i|\leq \sqrt{\beta}|x_{i+1}-x_i|$ and (ii) $-\frac{\sqrt{\beta}}{2}\leq \boldsymbol{\ell}_i\leq \log(2)$.
\end{lemma}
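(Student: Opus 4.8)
The plan is to extract both bounds directly from the constraints of Problem \ref{prob:discretized}, which encode (a discretized version of) exponential concavity, $\beta$-smoothness, the endpoint slope bounds \eqref{eqn:constraint.endpoint}, and the normalization $\hat{\boldsymbol{\ell}}(1/2)=0$. For part (i), I would first observe that the discrete $\beta$-smooth constraint $\Delta_{i+1}\boldsymbol{\ell}-\Delta_i\boldsymbol{\ell}\geq-\tfrac{\beta}{2}(x_{i+2}-x_i)$ together with the discrete exponential concavity \eqref{eqn:exponential.concavity.constraint2} (which forces $\Delta_{i+1}\boldsymbol{\ell}\leq\Delta_i\boldsymbol{\ell}$ after the elementary estimate $\frac{e^b-e^a}{b-a}\geq e^{\min(a,b)}\cdots$, or more simply since a concave piecewise-affine interpolant of $e^{\boldsymbol{\ell}_i}$ has decreasing slopes — one should check which inequality is actually needed) lets me control the oscillation of the forward differences $\Delta_i\boldsymbol{\ell}$: they are nonincreasing up to an error controlled by $\tfrac{\beta}{2}\sum(x_{i+2}-x_i)\le \beta$. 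Combined with the endpoint constraints $(\Delta_1\boldsymbol{\ell})^2\le\beta$ and $(\Delta_d\boldsymbol{\ell})^2\le\beta$, which pin the first and last slopes into $[-\sqrt\beta,\sqrt\beta]$, one concludes that every $\Delta_i\boldsymbol{\ell}$ lies in an interval of length $O(\sqrt\beta)$ around $[-\sqrt\beta,\sqrt\beta]$; sharpening this to exactly $|\Delta_i\boldsymbol{\ell}|\le\sqrt\beta$ is the crux. I expect the clean way is: exponential concavity makes $i\mapsto\Delta_i\boldsymbol{\ell}$ ``essentially'' monotone, so $\Delta_i\boldsymbol{\ell}$ is squeezed between $\Delta_d\boldsymbol{\ell}\ge-\sqrt\beta$ and $\Delta_1\boldsymbol{\ell}\le\sqrt\beta$; the $\beta$-smooth slack only enters to absorb the failure of exact monotonicity and should telescope to something that still yields the stated bound. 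Once $|\Delta_i\boldsymbol{\ell}|\le\sqrt\beta$ we immediately get $|\boldsymbol{\ell}_{i+1}-\boldsymbol{\ell}_i|\le\sqrt\beta|x_{i+1}-x_i|$.

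For part (ii), I would use the normalization together with (i). Since $\tfrac12\in\mathcal{P}$, say $x_m=\tfrac12$, we have $\boldsymbol{\ell}_m=0$, and then for any $i$, summing the telescoping differences and applying (i),
\[
|\boldsymbol{\ell}_i|=\Bigl|\sum_{j} (\boldsymbol{\ell}_{j+1}-\boldsymbol{\ell}_j)\Bigr|\le \sqrt\beta\,\Bigl|\sum_j (x_{j+1}-x_j)\Bigr|=\sqrt\beta\,|x_i-\tfrac12|\le\tfrac{\sqrt\beta}{2},
\]
since $|x_i-\tfrac12|\le\tfrac12$ on $[0,1]$. This gives the lower bound $\boldsymbol{\ell}_i\ge-\tfrac{\sqrt\beta}{2}$. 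For the upper bound $\boldsymbol{\ell}_i\le\log 2$ I would instead exploit discrete exponential concavity more carefully rather than (i): the concavity constraint \eqref{eqn:exponential.concavity.constraint} says the piecewise-affine interpolant $h$ of the values $e^{\boldsymbol{\ell}_i}$ is concave on $[0,1]$, and $h(\tfrac12)=e^{\boldsymbol{\ell}_m}=1$; a concave function on $[0,1]$ with value $1$ at the midpoint satisfies $h(x)\le 2$ for all $x\in[0,1]$ (extending the chord through $(\tfrac12,1)$ to the boundary while staying below the concave graph — concretely $h(\tfrac12)\ge\tfrac12 h(0)+\tfrac12 h(x)$ type estimates give $h\le 2$ at the endpoints, hence everywhere by concavity). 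Therefore $e^{\boldsymbol{\ell}_i}=h(x_i)\le 2$, i.e.\ $\boldsymbol{\ell}_i\le\log 2$.

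The main obstacle I anticipate is part (i): getting the sharp constant $\sqrt\beta$ rather than a constant-times-$\sqrt\beta$ bound. The subtlety is that discrete exponential concavity does not literally say $\Delta_{i+1}\boldsymbol{\ell}\le\Delta_i\boldsymbol{\ell}$ — it says the slopes of $e^{\boldsymbol{\ell}}$'s interpolant decrease — so one must convert that into a statement about the slopes of $\boldsymbol{\ell}$'s interpolant. Because $t\mapsto e^t$ is increasing and convex, and the interpolation nodes and weights are the same, decreasing slopes of the $e^{\boldsymbol{\ell}_i}$-interpolant do in fact force $\Delta_{i+1}\boldsymbol{\ell}\le\Delta_i\boldsymbol{\ell}$ (this is exactly the content of \eqref{eqn:exponential.concavity.constraint2} being the log of a weighted average), so the forward differences are genuinely nonincreasing; then the $\beta$-smooth constraint is not even needed for (i), and the endpoint constraints alone give $\sqrt\beta\ge\Delta_1\boldsymbol{\ell}\ge\Delta_i\boldsymbol{\ell}\ge\Delta_{d-1}\boldsymbol{\ell}=\Delta_d\boldsymbol{\ell}\ge-\sqrt\beta$. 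I would double-check the direction of the monotonicity implication carefully — that is where a sign error would hide — and then both parts follow cleanly as above.
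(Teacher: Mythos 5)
Your final argument coincides with the paper's proof: for (i) the discrete exponential concavity constraint makes the forward differences $\Delta_i\boldsymbol{\ell}$ nonincreasing (indeed $\log(w_ie^{\boldsymbol{\ell}_{i+1}}+(1-w_i)e^{\boldsymbol{\ell}_{i-1}})\ge w_i\boldsymbol{\ell}_{i+1}+(1-w_i)\boldsymbol{\ell}_{i-1}$ by Jensen, so concavity of the $e^{\boldsymbol{\ell}}$-interpolant passes to $\boldsymbol{\ell}$), and the endpoint constraints \eqref{eqn:constraint.endpoint} then squeeze every slope into $[-\sqrt{\beta},\sqrt{\beta}]$, exactly as the paper does. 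For (ii) you likewise match the paper: telescoping from the node $x_k=\tfrac12$ for the lower bound, and the concave piecewise-affine interpolant of $(e^{\boldsymbol{\ell}_i})$ with value $1$ at $\tfrac12$ (the argument of Lemma \ref{lem:exp.conc.property2}) for the upper bound $\log 2$.
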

\begin{proof}
For (i) it can be seen via the endpoint constraints \eqref{eqn:constraint.endpoint} that
\[\sqrt{\beta}\geq \frac{\boldsymbol{\ell}_{2}-\boldsymbol{\ell}_1}{x_2-x_1}\geq \frac{\boldsymbol{\ell}_{i+1}-\boldsymbol{\ell}_i}{x_{i+1}-x_i}\geq\frac{\boldsymbol{\ell}_{d}-\boldsymbol{\ell}_{d-1}}{x_d-x_{d-1}}\geq -\sqrt{\beta}\]
for $i=2,\dots,d-2$. The upper bound in (ii) follows by considering the piecewise affine function interpolating $(e^{\boldsymbol{\ell}_i})_{i=1}^d$ and applying the argument of Lemma \ref{lem:exp.conc.property2}. For the lower bound we suppose without loss of generality that $x_i\geq\frac{1}{2}=:x_k$. Then we have
\[|\boldsymbol{\ell}_i|=\left|\boldsymbol{\ell}_k+\sum_{j=0}^{i-k-1}\boldsymbol{\ell}_{k+j+1}-\boldsymbol{\ell}_{k+j}\right|\leq\sum_{j=0}^{i-k-1}\left|\boldsymbol{\ell}_{k+j+1}-\boldsymbol{\ell}_{k+j}\right|\leq \sqrt{\beta}|x_{i}-x_k|\leq\frac{\sqrt{\beta}}{2}\]
since $i\leq d$ and $x_d=1$.
\end{proof}

With this, we may link vectors $\boldsymbol{\ell}_{\mathcal{P}}\in\mathbb{R}^d$ satisfying the constraints of Problem \ref{prob:discretized} to  functions in $\mathcal{E}_\beta$. This is the main technical result required by the proof of Theorem \ref{thm:consistency}. The functions $\ell_{\alpha,\mathcal{P}}$ we construct are explicitly characterized in the proof and this may be of some independent interest. The proofs of the following two results are given in the Appendix.

\begin{proposition}\label{prop:construction}
Fix $\beta>0$ and $0<\alpha<1$. For each partition $\mathcal{P}$ satisfying $|\delta-\underline{\delta}|\leq M\underline{\delta}^3$, where $M>0$ is fixed, let $\boldsymbol{\ell}_{\mathcal{P}}=(\boldsymbol{\ell}_{\mathcal{P},i})_{i=1}^d$ be a vector satisfying the constraints of Problem \ref{prob:discretized}. Furthermore, let $\hat{\boldsymbol{\ell}}_{\mathcal{P}}$ be the piecewise affine function interpolating $\boldsymbol{\ell}_{\mathcal{P}}$. 
Then there exists a $\delta_{0,\alpha}>0$ depending only on $\alpha$, $M$ and $\beta$ such that for all such partitions with $\underline{\delta}<\delta_{0,\alpha}$ we can find a $\ell_{\alpha,\mathcal{P}}\in\mathcal{E}_\beta$ satisfying:
\[|\ell'_{\alpha,\mathcal{P}}(x)-\hat{\boldsymbol{\ell}}'_{\mathcal{P}}(x)|\leq K\underline{\delta}^{\alpha}\leq K\delta^\alpha, \ \ \ x\in[0,1],\]
where $K>0$ is a constant depending only on $\beta$ and $M$.
\end{proposition}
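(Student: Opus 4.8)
\emph{Strategy.} The plan is to work with $e^{\boldsymbol{\ell}_{\mathcal{P}}}$ rather than $\boldsymbol{\ell}_{\mathcal{P}}$ directly. Let $h$ be the piecewise affine interpolant of the values $e^{\boldsymbol{\ell}_{\mathcal{P},i}}$; by the exponential‑concavity constraint of Problem \ref{prob:discretized} it is concave, and by Lemma \ref{lem:vector.bounds} it satisfies $e^{-\sqrt{\beta}/2}\le h\le 2$. First I would record the structural consequences of the discrete constraints that make $h$ ``almost'' the graph of a function in $\mathcal{E}_\beta$. Using $1-e^{-p}\le p$ and $e^{q}-1\ge q$ one checks that the slopes $\Delta_i\boldsymbol{\ell}$ of $\hat{\boldsymbol{\ell}}_{\mathcal{P}}$ are non‑increasing, so $\hat{\boldsymbol{\ell}}_{\mathcal{P}}$ is concave with slopes in $[-\sqrt{\beta},\sqrt{\beta}]$ and $0\le\Delta_i\boldsymbol{\ell}-\Delta_{i+1}\boldsymbol{\ell}\le\tfrac{\beta}{2}(x_{i+2}-x_i)$; moreover, applying the mean value theorem on each subinterval (as in Lemma \ref{lem:beta.smooth.der.approx}) gives $|(\log h)'-\hat{\boldsymbol{\ell}}'_{\mathcal{P}}|\le C\underline{\delta}$ and $|(\log h)'|\le\sqrt{\beta}\,e^{\sqrt{\beta}\underline{\delta}}$ there, where $C$ denotes a constant depending only on $\beta,M$ that may change from line to line.

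\emph{Construction.} Next I would produce a $C^1$ (indeed $C^\infty$) positive concave function $g$ approximating $h$ by a carefully tuned regularization of $h$ — smoothing out the corners over mesh‑dependent windows near the nodes, and using the endpoint constraints \eqref{eqn:constraint.endpoint} near $x=0,1$ — arranged so that $|(\log g)'-(\log h)'|\le C\underline{\delta}^{\alpha}$ and $|(\log g)''|\le\beta\bigl(1+C\underline{\delta}^{\alpha}\bigr)$ on $[0,1]$. Since $g=e^{\log g}$ is concave, $\log g$ is exponentially concave. Finally set $\ell_{\alpha,\mathcal{P}}:=c_{\underline{\delta}}\bigl(\log g-\log g(\tfrac12)\bigr)$ with $c_{\underline{\delta}}=\bigl(1+C\underline{\delta}^{\alpha}\bigr)^{-1}\in(0,1]$, so that $|\ell_{\alpha,\mathcal{P}}''|\le\beta$. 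Because $e^{c\psi}=(e^{\psi})^{c}$ is concave for $c\in(0,1]$ and adding a constant changes neither exponential concavity nor the derivative, $\ell_{\alpha,\mathcal{P}}$ is exponentially concave, $C^1$, $\beta$‑smooth and vanishes at $\tfrac12$; that is, $\ell_{\alpha,\mathcal{P}}\in\mathcal{E}_\beta$.

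\emph{Error estimate.} The bound then follows from the triangle inequality through $(\log h)'$ and $(\log g)'$: the term $|(\log h)'-\hat{\boldsymbol{\ell}}'_{\mathcal{P}}|$ is $O(\underline{\delta})$ by the mean value estimate; the regularization error $|(\log g)'-(\log h)'|$ is $O(\underline{\delta}^{\alpha})$ by construction; and $|\ell'_{\alpha,\mathcal{P}}-(\log g)'|=|(c_{\underline{\delta}}-1)(\log g)'|=O(\underline{\delta}^{\alpha})$. Summing and relabelling yields $|\ell'_{\alpha,\mathcal{P}}-\hat{\boldsymbol{\ell}}'_{\mathcal{P}}|\le K\underline{\delta}^{\alpha}$ with $K=K(\beta,M)$, and $\underline{\delta}^{\alpha}\le\delta^{\alpha}$ is trivial. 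The threshold $\delta_{0,\alpha}(\alpha,M,\beta)$ appears naturally: it ensures that the smoothing regions fit inside the grid intervals, that $g>0$ and the $O(\underline{\delta})$‑type factors above are at most $2$, and that $c_{\underline{\delta}}\in(0,1]$.

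\emph{Main obstacle.} I expect the crux to be the construction step: securing \emph{exact} membership $\ell_{\alpha,\mathcal{P}}\in\mathcal{E}_\beta$ — in particular $\ell'_{\alpha,\mathcal{P}}$ being $\beta$‑Lipschitz up to and including the endpoints — at the cost of only an $O(\underline{\delta}^{\alpha})$ perturbation of the derivative. Writing $(\log g)''=\tfrac{g''}{g}-\bigl(\tfrac{g'}{g}\bigr)^{2}$, the squared term is already $\beta\bigl(1+O(\underline{\delta})\bigr)$, so the curvature budget of size $\beta$ is essentially exhausted before the corner‑smoothing term $\tfrac{g''}{g}$ is even accounted for; a naive fixed‑bandwidth mollification of $h$ contributes a fixed multiple of $\beta$ there, which a global rescaling could absorb only at an $O(1)$ cost. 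Bringing the smoothing‑induced curvature down to $\beta\bigl(1+O(\underline{\delta}^{\alpha})\bigr)$ is what dictates the precise, mesh‑dependent bandwidth and is exactly where the ``almost‑uniform'' hypothesis $|\delta-\underline{\delta}|\le M\underline{\delta}^{3}$ — hence the cubic exponent — must be invoked, to control the interplay between the non‑uniformity of the partition and the smoothing; the endpoint constraints \eqref{eqn:constraint.endpoint} play the analogous role at $x=0,1$. This is also the point at which $\ell_{\alpha,\mathcal{P}}$ acquires the explicit description promised in the statement, and it feeds directly into Theorem \ref{thm:consistency}.
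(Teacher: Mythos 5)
Your architecture matches the paper's: interpolate the exponentiated values $e^{\boldsymbol{\ell}_{\mathcal{P},i}}$ by a concave piecewise affine function, smooth it near the nodes into a $C^1$ concave positive function, take the logarithm, and then make a final $O(\underline{\delta}^{\alpha})$ correction to force exact $\beta$-smoothness. Your correction device is genuinely different from the paper's and is valid as far as it goes: you rescale the log by $c_{\underline{\delta}}=(1+C\underline{\delta}^{\alpha})^{-1}$, using that $g^{c}$ is concave for $c\in(0,1]$, whereas the paper instead sets $\ell_{\alpha,\mathcal{P}}=\log(s+\underline{\delta}^{\alpha})+a_{\mathcal{P}}$ and shows that the additive shift inside the logarithm produces a positive curvature correction of order $\underline{\delta}^{\alpha}$ that dominates the remainder terms (Lemma \ref{lem:interior.estimate}). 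Both devices convert a bound of the form $|(\log g)''|\le\beta(1+O(\underline{\delta}^{\alpha}))$ into $|\ell_{\alpha,\mathcal{P}}''|\le\beta$ at the cost of an $O(\underline{\delta}^{\alpha})$ perturbation of the derivative, so this part of your argument would survive.

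The genuine gap is the construction step, which you state as a specification rather than prove: you never exhibit a $g$ achieving $|(\log g)''|\le\beta\bigl(1+C\underline{\delta}^{\alpha}\bigr)$, and this inequality is the entire technical content of the proposition. As you yourself observe, a fixed-bandwidth mollification of $h$ fails: on a smoothing window of width $\underline{\delta}$ one gets $|g''|\approx$ (jump in $h'$)$/\underline{\delta}=O(\beta)$ but then $(\log g)''=g''/g-(g'/g)^2$ picks up an $O(1)$ multiple of $\beta$ from the first term with no cancellation, so no global rescaling by $1+O(\underline{\delta}^{\alpha})$ can repair it. What makes the bound true is a specific cancellation that your proposal does not supply: for the symmetric piecewise quadratic spline \eqref{eqn:s.def}, $s''$ on the window $I_i$ equals the second divided difference of $e^{\boldsymbol{\ell}}$ over the uniform stencil $\{x_i-\underline{\delta},x_i,x_i+\underline{\delta}\}$, and expanding $e^{\mathbf{g}_{\pm}^{i}}$ shows that $s''/s$ contains the term $\bigl(\tfrac{\mathbf{g}^i_0-\mathbf{g}^i_-}{\underline{\delta}}\bigr)^{2}$, which cancels $-(s'/s)^{2}$ up to $O(\underline{\delta})$, leaving exactly the discrete second difference of $\boldsymbol{\ell}$ that the constraints of Problem \ref{prob:discretized} bound below by $-\beta$ (Lemma \ref{lem:interior.est}); relating the uniform stencil to the actual mesh spacings $\delta_{i-1},\delta_{i}$ is where the hypothesis $|\delta-\underline{\delta}|\le M\underline{\delta}^{3}$ is consumed (Lemma \ref{lem:g.approx.errors}), and the end intervals require the extra constraint \eqref{eqn:constraint.endpoint}. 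Without exhibiting a concrete $g$ and verifying this cancellation (plus the one-sided-derivative extension to the window boundaries needed to invoke $\beta$-smoothness on all of $[0,1]$), the proof is incomplete at its central point; the proposition also explicitly promises an explicit characterization of $\ell_{\alpha,\mathcal{P}}$, which an existence-by-assertion argument does not deliver.
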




\begin{lemma} \label{lem:approx.error.J}
If $\ell\in\mathcal{E}_\beta$ and $(\ell_\delta)_{\delta > 0}$ are univariate functions on $[0, 1]$ satisfying
\[|\ell'(x)-\ell_\delta'(x)|\leq K_0\delta^\alpha, \ \ \ x\in[0,1],\]
for some $K_0,\alpha>0$, then there exists a $\delta_{0,\alpha}>0$ depending only on $\alpha$, $\beta$ and $n$ such that if $\delta<\delta_{0,\alpha}$ then for $R\equiv 0$ and a given probability measure $\gamma$, we have
\[
\left|J(\ell;\gamma)-J(\ell_\delta;\gamma)\right|\leq K_1\delta^\alpha,
\]
where the constant $K_1>0$ depends only on $\beta, n$ and $K_0$. (Note $J$ was originally defined on $\mathcal{E}$, but the evaluation of $J(\cdot ; \gamma)$ at $\ell_\delta$ makes sense.)
\end{lemma}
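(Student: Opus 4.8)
The plan is to turn the statement into two uniform pointwise estimates on the integrand of $J$ and then integrate against $\gamma$. Since $R\equiv 0$, writing $\mathbf{v}:=\mathbf{u}\oplus\mathbf{r}$ we have $J(\ell;\gamma)-J(\ell_\delta;\gamma)=\int\big\{\eta_0\big(\mathbf{D}_{\ell}[\mathbf{v}:\mathbf{u}]-\mathbf{D}_{\ell_\delta}[\mathbf{v}:\mathbf{u}]\big)+\eta_1\big(\mathcal{L}(\mathbf{u},\mathbf{r};\ell)-\mathcal{L}(\mathbf{u},\mathbf{r};\ell_\delta)\big)\big\}\,\mathrm{d}\gamma$, so, by the triangle inequality and $\gamma(\Delta_{n,\geq}\times\Delta_n)=1$, it is enough to bound each of $\big|\mathbf{D}_{\ell}[\mathbf{v}:\mathbf{u}]-\mathbf{D}_{\ell_\delta}[\mathbf{v}:\mathbf{u}]\big|$ and $\big|\mathcal{L}(\mathbf{u},\mathbf{r};\ell)-\mathcal{L}(\mathbf{u},\mathbf{r};\ell_\delta)\big|$ by a constant multiple of $\delta^\alpha$, uniformly over $(\mathbf{u},\mathbf{r})\in\Delta_{n,\geq}\times\Delta_n$. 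Note that both quantities depend on $\ell_\delta$ only through $\ell_\delta'$ and increments of $\ell_\delta$, so no normalization of $\ell_\delta$ is needed.

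For the diversity term I would set $g:=\ell-\ell_\delta$; the hypothesis gives $|g'|\leq K_0\delta^\alpha$ on $[0,1]$, and since $\ell_\delta$ is Lipschitz (its derivative being bounded) this yields $|g(y)-g(x)|\leq K_0\delta^\alpha|y-x|$ for all $x,y\in[0,1]$. Then by \eqref{eqn:div.contr}, $\big|\mathbf{D}_{\ell}[\mathbf{v}:\mathbf{u}]-\mathbf{D}_{\ell_\delta}[\mathbf{v}:\mathbf{u}]\big|=\frac{1}{n}\big|\sum_{k=1}^{n}\big(g(v_k)-g(u_k)\big)\big|\leq\frac{K_0\delta^\alpha}{n}\|\mathbf{v}-\mathbf{u}\|_1\leq\frac{2K_0}{n}\delta^\alpha$, the last step because $\mathbf{u},\mathbf{v}\in\Delta_n$.

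For the return term, combining \eqref{eqn:L.p.r} with \eqref{eqn:rank.based.return} (which, applied with $\mathbf{p}=\mathbf{u}$ and $\mathbf{q}=\mathbf{v}$, is a purely algebraic identity valid for any differentiable generating function) gives $\exp\big(\mathcal{L}(\mathbf{u},\mathbf{r};\ell)\big)=A_\ell:=1+\frac{1}{n}\sum_{k=1}^{n}\ell'(u_k)(v_k-u_k)$, and likewise $\exp\big(\mathcal{L}(\mathbf{u},\mathbf{r};\ell_\delta)\big)=A_{\ell_\delta}$ with $\ell_\delta'$ in place of $\ell'$. Exactly as for the diversity term, $|A_\ell-A_{\ell_\delta}|\leq\frac{K_0\delta^\alpha}{n}\|\mathbf{v}-\mathbf{u}\|_1\leq\frac{2K_0}{n}\delta^\alpha$. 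By \eqref{eqn:expL.bound} in Lemma \ref{lem:beta.interpretation} we have $A_\ell\geq e^{-2\sqrt{\beta}/n}$ (this is the only place $\ell\in\mathcal{E}_\beta$ enters). Choosing the threshold $\delta_{0,\alpha}$ so small that $\frac{2K_0}{n}\delta_{0,\alpha}^{\alpha}\leq\frac{1}{2}e^{-2\sqrt{\beta}/n}=:c$ forces, for $\delta<\delta_{0,\alpha}$, $A_{\ell_\delta}\geq A_\ell-|A_\ell-A_{\ell_\delta}|\geq c>0$; since $A_\ell,A_{\ell_\delta}\geq c$ and $\log$ is $\tfrac{1}{c}$-Lipschitz on $[c,\infty)$, $\big|\mathcal{L}(\mathbf{u},\mathbf{r};\ell)-\mathcal{L}(\mathbf{u},\mathbf{r};\ell_\delta)\big|=|\log A_\ell-\log A_{\ell_\delta}|\leq\tfrac{1}{c}|A_\ell-A_{\ell_\delta}|\leq\frac{4K_0 e^{2\sqrt{\beta}/n}}{n}\delta^\alpha$. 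Combining the two bounds yields $|J(\ell;\gamma)-J(\ell_\delta;\gamma)|\leq\frac{2K_0}{n}\big(|\eta_0|+2\eta_1 e^{2\sqrt{\beta}/n}\big)\delta^\alpha$, of the claimed form with $K_1$ depending only on $\beta$, $n$, $K_0$ (and the fixed weights $\eta_0,\eta_1$).

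The one step that needs care — and the only real ``obstacle'' — is that $\ell_\delta$ is not assumed to be exponentially concave, nor even $C^1$ in the abstract statement, so Lemma \ref{lem:beta.interpretation} cannot be applied to it directly; the lower bound $A_{\ell_\delta}\geq c$ must instead be obtained by perturbing the bound for $A_\ell$, which is precisely why a smallness threshold $\delta_{0,\alpha}$ (depending on $\alpha$, $\beta$, $n$ and $K_0$) appears. The only regularity of $\ell_\delta$ used is that, as in the hypothesis, $\ell_\delta'$ is a bounded (a.e.) derivative, so that $\ell_\delta$ is Lipschitz/absolutely continuous and the elementary increment estimate above holds; this is satisfied in the intended application, where $\ell_\delta$ is either the piecewise affine interpolant $\hat{\boldsymbol{\ell}}$ or the function $\ell_{\alpha,\mathcal{P}}\in\mathcal{E}_\beta$ produced by Proposition \ref{prop:construction}.
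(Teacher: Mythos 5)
Your proof is correct and follows essentially the same route as the paper's: a pointwise $O(\delta^\alpha)$ perturbation bound on the pre-logarithm quantity (the paper phrases this via the weight ratios $\boldsymbol{\pi}_i(\mathbf{p})/p_i$, which is algebraically the same as your use of \eqref{eqn:rank.based.return}), Lemma \ref{lem:beta.interpretation} to bound $\exp(\mathcal{L}(\cdot,\cdot;\ell))$ away from zero, a smallness threshold on $\delta$ to transfer that lower bound to $\ell_\delta$, Lipschitz continuity of $\log$ on the resulting interval, linearity for the $\mathbf{D}$ term, and integration against $\gamma$. Your observation that the threshold $\delta_{0,\alpha}$ must also depend on $K_0$ is accurate and applies equally to the paper's own argument.
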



With this we are now ready to state and prove the main theorem.

\begin{theorem}\label{thm:consistency}
Consider Problem \ref{prob:discretized} for a given probability measure $\gamma$ when there is no constraint set $C$ and $R\equiv 0$. For each partition $\mathcal{P}$ satisfying $|\delta-\underline{\delta}|\leq M\underline{\delta}^3$, where $M>0$ is fixed, consider an associated optimizer $\boldsymbol{\ell}_{\mathcal{P}}^*$. For any $0<\alpha<1$ there exists a $\delta_{0,\alpha}>0$ depending only on $\alpha$, $\beta$, $M$ and $n$ such that if $\underline{\delta}<\delta_{0,\alpha}$ then:
\[
\left|J(\gamma)-\hat{J}(\boldsymbol{\ell}_{\mathcal{P}}^*;\gamma)\right|\leq K\underline{\delta}^{\alpha} \leq K \delta^{\alpha},
\]
where $K>0$ is a constant depending only on $\beta$, $M$ and $n$.
\end{theorem}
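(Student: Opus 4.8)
The plan is to sandwich the discretized optimal value $\hat{J}(\boldsymbol{\ell}_{\mathcal{P}}^*;\gamma)$ between $J(\gamma)$ up to an error of order $\delta^{\alpha}$ on both sides, using the two transfer results already assembled: Lemma \ref{lem:satisfied.constraints} (every $\ell \in \mathcal{E}_\beta$ restricts to a feasible vector) and Proposition \ref{prop:construction} (every feasible vector lifts, via its piecewise affine interpolant, to a nearby $\ell_{\alpha,\mathcal{P}} \in \mathcal{E}_\beta$). The key auxiliary fact is Lemma \ref{lem:approx.error.J}, which says that a uniform $O(\delta^{\alpha})$ bound on the difference of derivatives of two functions translates into an $O(\delta^{\alpha})$ bound on the difference of their objective values $J(\cdot;\gamma)$; this is where the hypotheses $R\equiv 0$ and no constraint $C$ are used.

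First I would prove the lower bound $\hat{J}(\boldsymbol{\ell}_{\mathcal{P}}^*;\gamma) \geq J(\gamma) - K\delta^{\alpha}$. Take an optimizer $\ell^* \in \mathcal{E}_\beta$ of Problem \ref{prob:main.optimization}, which exists by Theorem \ref{thm:existence}(i). By Lemma \ref{lem:satisfied.constraints} the vector $\boldsymbol{\ell} := (\ell^*(x_i))_i$ is feasible for Problem \ref{prob:discretized}; its piecewise affine interpolant $\hat{\boldsymbol{\ell}}$ has derivative $\hat{\boldsymbol{\ell}}'$ that differs from $(\ell^*)'$ by at most $\tfrac{\beta}{2}\delta$ on each subinterval (Lemma \ref{lem:beta.smooth.der.approx}, applied as in the proof of Lemma \ref{lem:satisfied.constraints}), so Lemma \ref{lem:approx.error.J} with $\alpha = 1$ gives $|J(\ell^*;\gamma) - \hat{J}(\boldsymbol{\ell};\gamma)| \leq K_1 \delta$. (Here I am using that $\hat{J}(\boldsymbol{\ell};\gamma)$ equals the value of the functional $J(\cdot;\gamma)$ evaluated at the piecewise affine function $\hat{\boldsymbol{\ell}}$, which makes sense even though $\hat{\boldsymbol{\ell}} \notin \mathcal{E}_\beta$.) Since $\boldsymbol{\ell}_{\mathcal{P}}^*$ is optimal for Problem \ref{prob:discretized}, $\hat{J}(\boldsymbol{\ell}_{\mathcal{P}}^*;\gamma) \geq \hat{J}(\boldsymbol{\ell};\gamma) \geq J(\ell^*;\gamma) - K_1\delta = J(\gamma) - K_1 \delta$, which is stronger than the claimed $O(\delta^{\alpha})$ bound for $\alpha < 1$.

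For the upper bound $\hat{J}(\boldsymbol{\ell}_{\mathcal{P}}^*;\gamma) \leq J(\gamma) + K\delta^{\alpha}$, I would invoke Proposition \ref{prop:construction}: for $\underline{\delta} < \delta_{0,\alpha}$ there is $\ell_{\alpha,\mathcal{P}} \in \mathcal{E}_\beta$ with $|\ell_{\alpha,\mathcal{P}}'(x) - \hat{\boldsymbol{\ell}}_{\mathcal{P}}^{*\,\prime}(x)| \leq K\underline{\delta}^{\alpha} \leq K\delta^{\alpha}$ for all $x$, where $\hat{\boldsymbol{\ell}}_{\mathcal{P}}^*$ is the interpolant of $\boldsymbol{\ell}_{\mathcal{P}}^*$. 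Applying Lemma \ref{lem:approx.error.J} (with $\ell = \ell_{\alpha,\mathcal{P}}$ and $\ell_\delta = \hat{\boldsymbol{\ell}}_{\mathcal{P}}^*$, and constant $K_0 = K$) yields $|J(\ell_{\alpha,\mathcal{P}};\gamma) - \hat{J}(\boldsymbol{\ell}_{\mathcal{P}}^*;\gamma)| \leq K_1\delta^{\alpha}$. Since $\ell_{\alpha,\mathcal{P}} \in \mathcal{E}_\beta$ is feasible for Problem \ref{prob:main.optimization} (with $C = \mathcal{E}$, $R\equiv 0$), we get $J(\ell_{\alpha,\mathcal{P}};\gamma) \leq J(\gamma)$, hence $\hat{J}(\boldsymbol{\ell}_{\mathcal{P}}^*;\gamma) \leq J(\gamma) + K_1\delta^{\alpha}$. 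Combining the two bounds and renaming constants (collecting the various $K_1$'s from both applications of Lemma \ref{lem:approx.error.J} and the $K$ from Proposition \ref{prop:construction}, all of which depend only on $\beta, M, n$) gives $|J(\gamma) - \hat{J}(\boldsymbol{\ell}_{\mathcal{P}}^*;\gamma)| \leq K\underline{\delta}^{\alpha} \leq K\delta^{\alpha}$, and the threshold $\delta_{0,\alpha}$ is the one supplied by Proposition \ref{prop:construction} (intersected with that of Lemma \ref{lem:approx.error.J}).

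The genuinely hard work has already been quarantined into Proposition \ref{prop:construction} and Lemma \ref{lem:approx.error.J}, whose proofs are deferred to the Appendix; granting those, the theorem is essentially a two-line sandwich argument plus bookkeeping of constants. The only subtlety to be careful about in writing it up is the asymmetry between the two directions — the lower bound costs only $O(\delta)$ because discretizing a genuine $\ell \in \mathcal{E}_\beta$ is cheap, whereas the upper bound costs $O(\delta^{\alpha})$ because recovering a genuinely exponentially concave function from an arbitrary feasible vector (Proposition \ref{prop:construction}) incurs the weaker rate; the final statement simply reports the worse of the two, which is why the exponent $\alpha < 1$ appears.
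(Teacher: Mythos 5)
Your proposal is correct and follows essentially the same route as the paper's proof: the identical sandwich between the grid restriction of the continuous optimizer (Lemma \ref{lem:satisfied.constraints} plus Lemma \ref{lem:approx.error.J}, costing $O(\underline{\delta})$) and the lift of the discrete optimizer to $\mathcal{E}_\beta$ (Proposition \ref{prop:construction} plus Lemma \ref{lem:approx.error.J}, costing $O(\underline{\delta}^{\alpha})$), combined via the two optimality relations. The only quibble is that the derivative of the interpolant of $\ell^*$ differs from $(\ell^*)'$ by up to $\tfrac{3\beta}{2}\delta$ rather than $\tfrac{\beta}{2}\delta$, since one must also account for the variation of $(\ell^*)'$ within each subinterval, but this affects only the constant and not the argument.
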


\begin{proof} Consider the piecewise affine function $\hat{\boldsymbol{\ell}}_{\mathcal{P}}^*$ interpolating $\boldsymbol{\ell}^*_{\mathcal{P}}$, and an optimizer $\ell^*$ for the original problem with measure $\gamma$ over $\mathcal{E}_\beta$. We have established via Proposition \ref{prop:construction} that for any $0<\alpha<1$ there is a $\delta_{1,\alpha}$ such that if $\underline{\delta}<\delta_{1,\alpha}$ then we can find $\ell_{\alpha,\mathcal{P}}\in\mathcal{E}_\beta$ satisfying \[|\ell'_{\alpha,\mathcal{P}}(x)-\hat{\boldsymbol{\ell}}_{\mathcal{P}}^*{}'(x)|\leq K_0\underline{\delta}^{\alpha}, \quad x\in[0,1],\]
for some $K_0>0$ depending only on $\beta$ and $M$. Then by the arguments in the proof of Lemma \ref{lem:approx.error.J} we can find a $\delta_{2,\alpha}<\delta_{1,\alpha}$ depending on $\alpha,\beta$ and $n$ such that if $\underline{\delta}<\delta_{2,\alpha}$ then
\[\left|\hat{J}(\boldsymbol{\ell}_{\mathcal{P}}^*;\gamma)-J(\ell_{\alpha,\mathcal{P}};\gamma)\right|=
  \left|J(\hat{\boldsymbol{\ell}}_{\mathcal{P}}^*;\gamma)-J(\ell_{\alpha,\mathcal{P}};\gamma)\right|\leq K_1\underline{\delta}^\alpha,\]
for some $K_1$ depending only on $\beta,M$ and $n$. Similarly, let $\hat{\boldsymbol{l}}_{\mathcal{P}}^{*}$ be the piecewise affine function interpolating $(\boldsymbol{l}^*_{\mathcal{P},i})_{i=1}^d$, given by $\boldsymbol{l}^*_{\mathcal{P},i}=\ell^*(x_i)$ on the partition $\mathcal{P}$. By our definition of the derivative at the mesh points, for any $x\in[0,1]$ there is an $x_i\in\mathcal{P}$ such that $|x_i-x|\leq \delta$ (here $\delta=\max_i|x_{i+1}-x_i|$) and $\hat{\boldsymbol{l}}^*_{\mathcal{P}}{}'(x)=\hat{\boldsymbol{l}}_{\mathcal{P}}^*{}'(x_i)$. This gives \[\left|\ell^*{}'(x)-\hat{\boldsymbol{l}}_{\mathcal{P}}^*{}'(x)\right|\leq\left|\ell^*{}'(x)-\ell^*{}'(x_i)\right|+\left|\ell^*{}'(x_i)-\hat{\boldsymbol{l}}_{\mathcal{P}}^*{}'(x_i)\right|\leq\beta\delta+\frac{\beta}{2}\delta=\frac{3\beta}{2}\delta,\] 
where we have used Lemma \ref{lem:beta.smooth.der.approx} and the $\beta$-smoothness of $\ell^*$. This is in turn bounded by $\frac{3\beta}{2}(M+1)\underline{\delta}$ by our assumptions. Then again by Lemma \ref{lem:approx.error.J} there exists a $\delta_3>0$ depending only on $\beta$, $M$ and $n$ such that if $\underline{\delta}<\delta_3$ then
\[\left|J(\ell^*;\gamma)-\hat{J}(\boldsymbol{l}_{\mathcal{P}}^{*};\gamma)\right|=\left|J(\ell^*;\gamma)-J(\hat{\boldsymbol{l}}_{\mathcal{P}}^{*};\gamma)\right|\leq K_2\underline{\delta},\]
for a constant $K_2>0$ depending only on $\beta$, $M$ and $n$. We will now choose $\delta_{0,\alpha}<\min\{\delta_{2,\alpha},\delta_3\}$ and fix $\underline{\delta}<\delta_{0,\alpha}$ alongside an associated partition $\mathcal{P}$. Lemma \ref{lem:satisfied.constraints} tells us that $\boldsymbol{l}_{\mathcal{P}}^*$ satisfies the constraints of the discretized problem and we have already noted that $\ell_{\alpha,\mathcal{P}}\in\mathcal{E}_\beta$ for our choice of $\underline{\delta}$. Hence by the definition of the supremum we have
\[J(\ell^*;\gamma)+K_1\underline{\delta}^\alpha\geq  J(\ell_{\alpha,\mathcal{P}};\gamma)+K_1\underline{\delta}^\alpha \geq \hat{J}(\boldsymbol{\ell}^*_{\mathcal{P}};\gamma)\geq \hat{J}(\boldsymbol{l}^*_{\mathcal{P}};\gamma). \]
As a result
\[K_1\underline{\delta}^\alpha\geq \hat{J}(\boldsymbol{\ell}^*_{\mathcal{P}};\gamma)-J(\ell^*,\gamma)\geq \hat{J}(\boldsymbol{l}^*_{\mathcal{P}};\gamma)-J(\ell^*;\gamma)\geq- K_2\underline{\delta}\]
by lower bounding the right hand side. Taken together we find
\[
\left|J(\ell^*;\gamma)-\hat{J}(\boldsymbol{\ell}^*_{\mathcal{P}};\gamma)\right|\leq K_1\underline{\delta}^\alpha+K_2\underline{\delta} \leq K \underline{\delta}^{\alpha},
\]
where $K =K_1+K_2$. Noting that $J(\ell^*;\gamma)=J(\gamma)$ completes the proof.
\end{proof}

\subsection{Implementation}\label{sec:implementation}
The implementation of the examples to follow in Section \ref{sec:empirical} was performed using the software CVX in MATLAB \cite{gb08,cvx}. CVX is a modeling system for convex optimization that supports disciplined convex programming and graph implementations. The convex program (Problem \ref{prob:discretized}) we specify is transformed by CVX to a suitable form that can be passed to the MOSEK optimization software \cite{mosek}, which is compatible with CVX. In particular, we make use of MOSEK's ability to solve convex problems of conic type and its recently introduced support for problems involving exponential cones. This is relevant to the formulation of our problem since the exponential concavity constraints can be expressed in terms of exponential cones. Finally, we remark that it is computationally convenient to use a non-uniform grid in order to exploit the fact that the market weights are typically clustered at the smaller values in the interval $[0,1]$ (i.e.~values $\leq \frac{1}{n}$). Sample code implementing the optimization has been made available on the first author's website at \url{www.stevenacampbell.com}.

\section{Empirical examples}  \label{sec:empirical}
In this section we illustrate our framework of functional portfolio optimization using CRSP data from the US stock market. Section \ref{sec:dataset} describes the data set and explains how the stock returns are computed. In Section \ref{sec:rank.based.stability} we provide evidence of the relative stability of rank-based measures $\gamma$ when compared to their name-based counterparts. In Section \ref{sec:SPT.setting.empirical} we work under the classic SPT setting (i.e., closed market as in Section \ref{sec:market}). We illustrate the behaviors of the optimized portfolios arising from different regularizations and constraints; in particular, under the monotonicity constraint (Lemma \ref{lem:weight.monotonicity}) we obtain portfolios that are qualitatively similar to the diversity-weighted portfolio (Remark \ref{rem:div.weighted}). Our empirical examples also elucidate the results of Lemma \ref{lem:beta.interpretation} by demonstrating our portfolio's sensitivity to the parameter $\beta$. Finally, in Section \ref{sec:open.mkt.setting.empirical} we consider the more realistic open market setting, allowing for changes in constituent stocks, delisting events, defaults and transaction costs. We show that the optimized portfolios still outperform the market under low to moderate transaction costs. Nevertheless, we stress that the general open market set-up involves many considerations that are beyond the framework of this paper. Some of these challenges, that we plan to address in future research, are highlighted in the discussion.

\subsection{Description of the dataset} \label{sec:dataset}
The market data for the following examples was obtained from The Center for Research in Security Prices (CRSP) which contains traded stocks on all major US stock exchanges \cite{CRSPdata}. The database contains data from 1926, but for simplicity and relevance we focus on the most recent 50 year window spanning January 1971 to December 2020. For this period, we compute the daily market capitalization $X_i$ of each stock in the CRSP database by considering the share price and multiplying by the number of shares outstanding. This allows us to compute the market weight of any given stock. Unless otherwise specified, transactions occur every $5$ (trading) days. All transactions are made at the prevailing market prices on the first day of the 5-day period. (Thus the $s$ as in e.g.~$\boldsymbol{\mu}(s)$ indexes the 5-day period.)


To construct a closed market to be used in Section \ref{sec:SPT.setting.empirical}, we restrict to the largest $n$ stocks (where $n = 100$) at the beginning of a given time window, say $[t_1, t_2]$, and compute the market sequence $\{\boldsymbol{\mu}(s)\}_{s=0}^t$ using the renormalized market weights. Also, in this subsection we neglect transaction costs and dividends. Thus the relative return of a portfolio is given by \eqref{eqn:portfolio.relative.value}.

In Section \ref{sec:open.mkt.setting.empirical} the stock returns will incorporate dividends and delisting events. Note that this allows us to correct for the selection bias against stocks leaving the market. Here we explain how the returns are computed, and leave the discussion of the construction of the trading universe and computation of transaction costs to Section \ref{sec:open.market.construction}. Following the analysis of \cite{ruf2020impact}, for each trading day we collect return data $\{\mathfrak{r}_i\}_i$ from the CRSP database which includes the contribution of dividends. We estimate the dividend contribution as in \cite{ruf2020impact} by
\[
\mathfrak{r}_i^D=\max\left\{1+\mathfrak{r}_i-\frac{X_i'}{X_i},0\right\},
\]
where $X_i$ and $X_i'$ are respectively old and new (closing) market capitalizations of stock $i$. The realized daily return is then defined by $\mathfrak{r}_i^R=\mathfrak{r}_i-\mathfrak{r}_i^D$. These returns are then compounded to get the return over each $5$-day period. We include the maximum above to ensure the dividend yield is non-negative. One situation where this can be required is when a company issues extra stocks and so we treat such scenarios as though there are no dividends paid. When there is a delisting event we use the delisting returns that are available through the CRSP database. In the (extremely) rare circumstance where return data is missing, we assume a return of $0\%$ in that period.


\subsection{Stability of rank-based dynamics}
\label{sec:rank.based.stability}

Rank-based models have been a consistent feature of stochastic portfolio theory \cite{banner2005atlas, chatterjee2010phase, F02, pal2011analysis}. As mentioned previously, it is worth examining the stability of rank-based distributions in practice since we exploit their properties. Recent work by Banner et al.~\cite{banner2018diversification} supports our hypothesis that there is a stability in rank-based models that is missing from their name-based counterparts. Specifically, they find that the ranked average log returns are approximately the same, while the variance of the log returns is positively associated with rank. This is used to explain the outperformance of particular portfolios by appealing to their excess growth rates.

The following simple example further illustrates the rationale for using rank-based data. Consider five disjoint 5-year periods from 1996--2020. After ranking the stocks in the market at the beginning of 1996 we select the largest 100 stocks that survive until 2020. In each of these time periods the rank-based data $(\mathbf{u},\mathbf{r})$ and the name-based data $(\mathbf{p},\mathbf{q})$ (sampled every $5$ days) have associated empirical distributions. We will denote the empirical distribution of time period $i=1,...,5$ by $\gamma_i$ for the rank-based data
and $\tilde{\gamma}_i$ for the name-based data. 
In Table \ref{tab:distances} we show the Wasserstein distances $\mathbb{W}(\gamma_i, \gamma_j)$ and $\mathbb{W}(\tilde{\gamma}_i, \tilde{\gamma}_j)$, where $\rho$ is the metric on $\Delta_n \times \Delta_n$ used in Theorem \ref{thm:continuity.Wasserstein}. These computations were performed using the package \cite{flamary2017pot}. From the table, we see that the rank-based quantities are more stable in the sense that that the distributions of $(\mathbf{u},\mathbf{r})$ stay ``closer" to each other across time than their name-based counterparts. It also appears that $\gamma$ is not stationary over time. 

\begin{table}[t!]
	\begin{center}
		\begin{tabular}{|c | c c c c c|}
			\hline
			$\mathbb{W}$ & $\gamma_1$ & $\gamma_2$ & $\gamma_3$ & $\gamma_4$ & $\gamma_5$ \\ [0.5ex] 
			\hline
			$\gamma_1$ & 0 & 1.66 & 1.49 & 1.53 & 2.60\\ 
			\hline
			$\gamma_2$ & 1.66 & 0 & 1.07& 1.14 & 1.78\\
			\hline
			$\gamma_3$ & 1.49 & 1.07 & 0 & 0.91 & 1.75\\
			\hline
			$\gamma_4$ & 1.53 & 1.14 & 0.91 & 0 & 1.74\\
			\hline
			$\gamma_5$ & 2.60 & 1.78 & 1.75 & 1.74 & 0\\ 
			\hline
		\end{tabular}
		\hspace{0.5ex}
		\begin{tabular}{|c | c c c c c|}
			\hline
			$\mathbb{W}$ & $\tilde{\gamma}_1$ & $\tilde{\gamma}_2$ & $\tilde{\gamma}_3$ & $\tilde{\gamma}_4$ & $\tilde{\gamma}_5$ \\ [0.5ex] 
			\hline
			$\tilde{\gamma}_1$ & 0 & 5.96 & 8.15 & 7.04 & 10.0\\ 
			\hline
			$\tilde{\gamma}_2$ & 5.96 & 0 & 7.35 & 6.13 & 8.03\\
			\hline
			$\tilde{\gamma}_3$ & 8.15 & 7.35 & 0 & 5.44 & 6.96 \\
			\hline
			$\tilde{\gamma}_4$ & 7.04 & 6.13 & 5.44 & 0 & 4.76\\
			\hline
			$\tilde{\gamma}_5$ & 10.0 & 8.03 & 6.96 & 4.76 & 0\\ 
			\hline
		\end{tabular}
	\end{center}
	\caption{Wasserstein distance (with respect to the metric $\rho$) of empirical distributions for the rank-based data $(\mathbf{u},\mathbf{r})$ (left) and the name-based $(\mathbf{p},\mathbf{q})$ (right) across five disjoint 5-year periods and in terms of the largest surviving 100 stocks ranked at the start of period 1. Generally $\mathbb{W}(\gamma_i, \gamma_j)$ is much smaller than $\mathbb{W}(\tilde{\gamma}_i, \tilde{\gamma}_j)$.}
	\label{tab:distances}
\end{table}

We note here that although using the surviving stocks in this analysis introduces some bias, it is unavoidable if we want to compare the name and rank-based distributions directly. This comparison would not be possible if the names in the market were allowed to vary. In future research we plan to use ideas from optimal transport to carry out a deeper study of the statistical properties of the capital distribution. 

\subsection{Sample portfolios and their performance in the closed market setting}\label{sec:SPT.setting.empirical}
We now return to our portfolio optimization (Problem \ref{prob:main.optimization}) and its discrete approximation via Problem \ref{prob:discretized}. 
In all cases the optimization was implemented using the empirical measure $\gamma$ (see \eqref{eqn:empirical.measure}) corresponding to the training period under consideration.

\subsubsection{Data Considerations}
Throughout this section we want to consider the classic SPT setting of a closed market implicitly assumed in Problems \ref{prob:main.optimization} and \ref{prob:discretized}. While this is unrealistic, it reflects the theoretical formulation of our problem and allows us to test its performance in the traditional setting before considering extensions in Section \ref{sec:empirical.performance.open.mkt}. To this end, we will break up our data into $5$ year intervals. In each 5 year period we form a closed market of the top $n$ stocks ($n = 100$) as determined at the beginning of the time interval. To ensure $(\mathbf{u},\mathbf{r})\in\Delta_{n,\geq}\times\Delta_n$, we require that the top stocks selected survive the entire 5 years. (Delisting events, including bankruptcy, occur very infrequently in the top 100 stocks.) When testing the performance of our trained generating function out of sample we form a new closed market of $100$ stocks at the beginning of the 5-year testing period.



\subsubsection{Effects of constraints and penalties on portfolio weights}
We first consider the portfolio map induced by the optimized function. We consider the average market vector $\bar{\mathbf{p}}$ in a given period and the portfolio weight vector $\boldsymbol{\pi}(\bar{\mathbf{p}})$ recommended by our optimized generating function at that market weight. Throughout we include for reference the market, equal-weighted and diversity-weighted portfolios (where we take $\theta=0.5$, see Remark \ref{rem:div.weighted}) and use the weights $\eta_0=0,\eta_1=1$ in \eqref{eqn:optim.objective}. Figure \ref{fig:std.v.mon.} compares the weights of different portfolios with and without the monotone weight constraint given by Example \ref{ex:reg.and.const.4}. The optimized portfolios were fit over the 5 year period from 2003--2007 using different values of $\beta$. Specifically, $\beta=10^4,10^5,10^8$ and $\beta=10^7,5\times10^7,10^8$ for the unconstrained and monotone problems, respectively. Also we set $R \equiv 0$ so there is no further regularization. Throughout this section the choice of $\beta$ was made primarily to illustrate the range of behaviours available to the portfolios. Notably, the unconstrained portfolio can in fact be more underweight the largest stocks than the equal-weighted portfolio which does not belong to $\mathcal{E}_{\beta}$ for any $\beta > 0$. 

\begin{figure}[t!]
\begin{center}
    \includegraphics[width=0.49\textwidth]{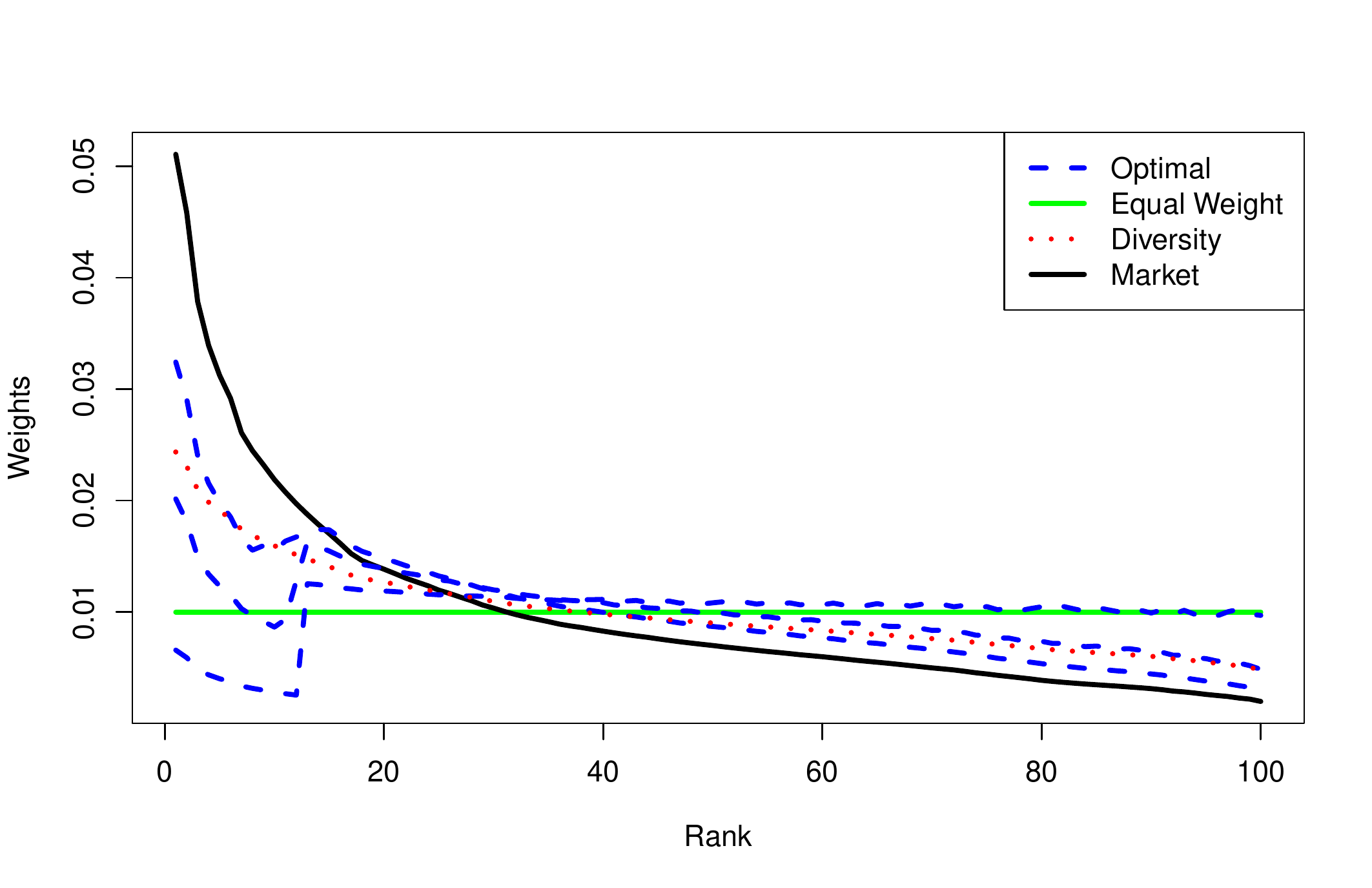}
    \includegraphics[width=0.49\textwidth]{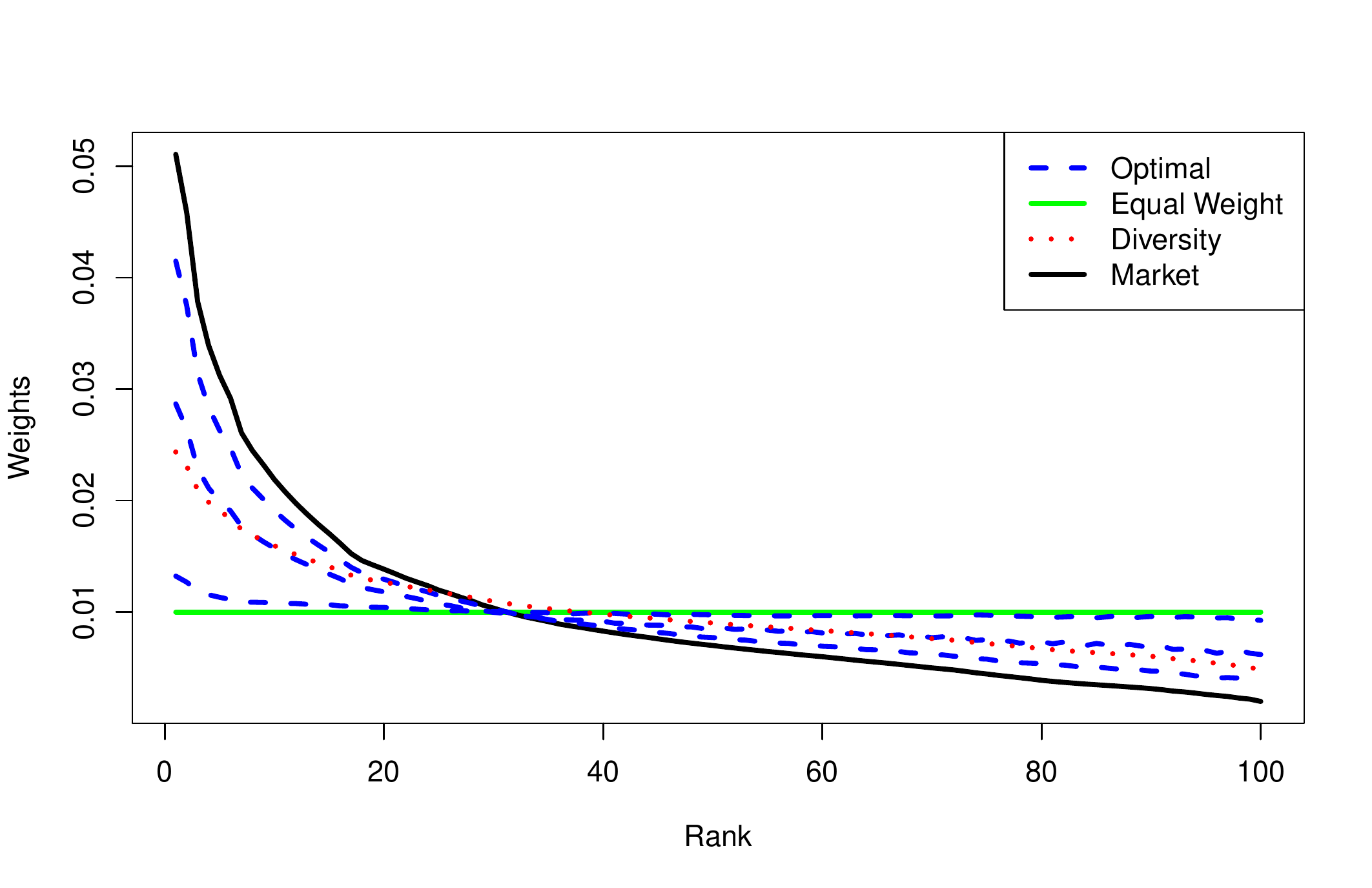}
\end{center}
\caption{Portfolio weights at the average market vector (arranged by rank) for the unconstrained (left) and monotone weight (right) versions of the problem with $\eta_0=0,\eta_1=1$ when $\beta=10^4,10^5,10^8$ and $\beta=10^7,5\times10^7,10^8$, respectively. Also $R = 0$. The equal-weighted, diversity-weighted and market portfolios are included for reference. As $\beta$ decreases the recommended portfolio moves closer to  the market portfolio.}
\label{fig:std.v.mon.}
\end{figure}

\begin{remark}\label{rem:div.similar}
In Figure \ref{fig:std.v.mon.} the portfolios with the monotonicity constraint are qualitatively similar to the diversity-weighted portfolio. This might be seen as a theoretical justification for the latter portfolio. Namely, it approximately optimizes the objective \eqref{eqn:optim.objective} under the monotone weight constraints.
\end{remark} 

Figure \ref{fig:penalties} considers the regularizations $R$ from \eqref{eqn:regularization.example.1} and \eqref{eqn:regularization.example.3}, where the reference portfolio is the market portfolio. The same 5 year period (2003--2007) is considered, but now $\beta$ is kept fixed at $10^8$. For the tuning parameter we use the values $\lambda=4\times10^{-8},4\times10^{-7},2\times10^{-6}$ for penalty \eqref{eqn:regularization.example.1} and $\lambda=10^{-3},2.5\times10^{-3},4\times10^{-3}$ for penalty \eqref{eqn:regularization.example.3}. As expected, in both cases as the penalty increases the optimal portfolio moves closer to that of the market. We also note that without the monotonicity constraint, the portfolio weight, as a function of rank, has a jump whose size and location depends on the tuning parameter. One intuition for the jump in the portfolio weight is that this arises when there is a material change in the value of the derivative of the generating function at the given weight. For example, this can be more pronounced when the derivative changes in sign from positive to negative.

\begin{figure}[h]
    \begin{center}
        \includegraphics[width=0.49\textwidth]{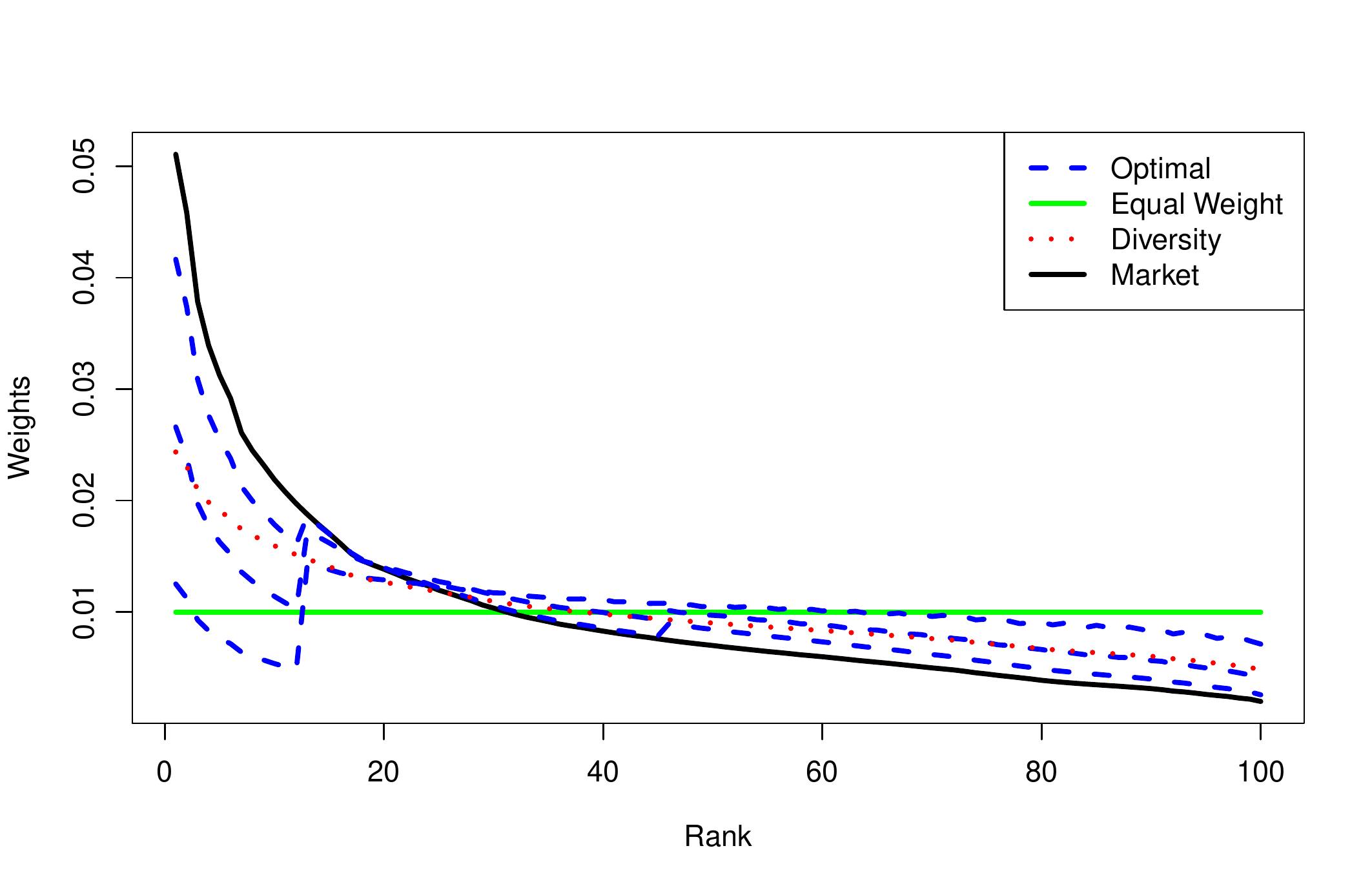}
        \includegraphics[width=0.49\textwidth]{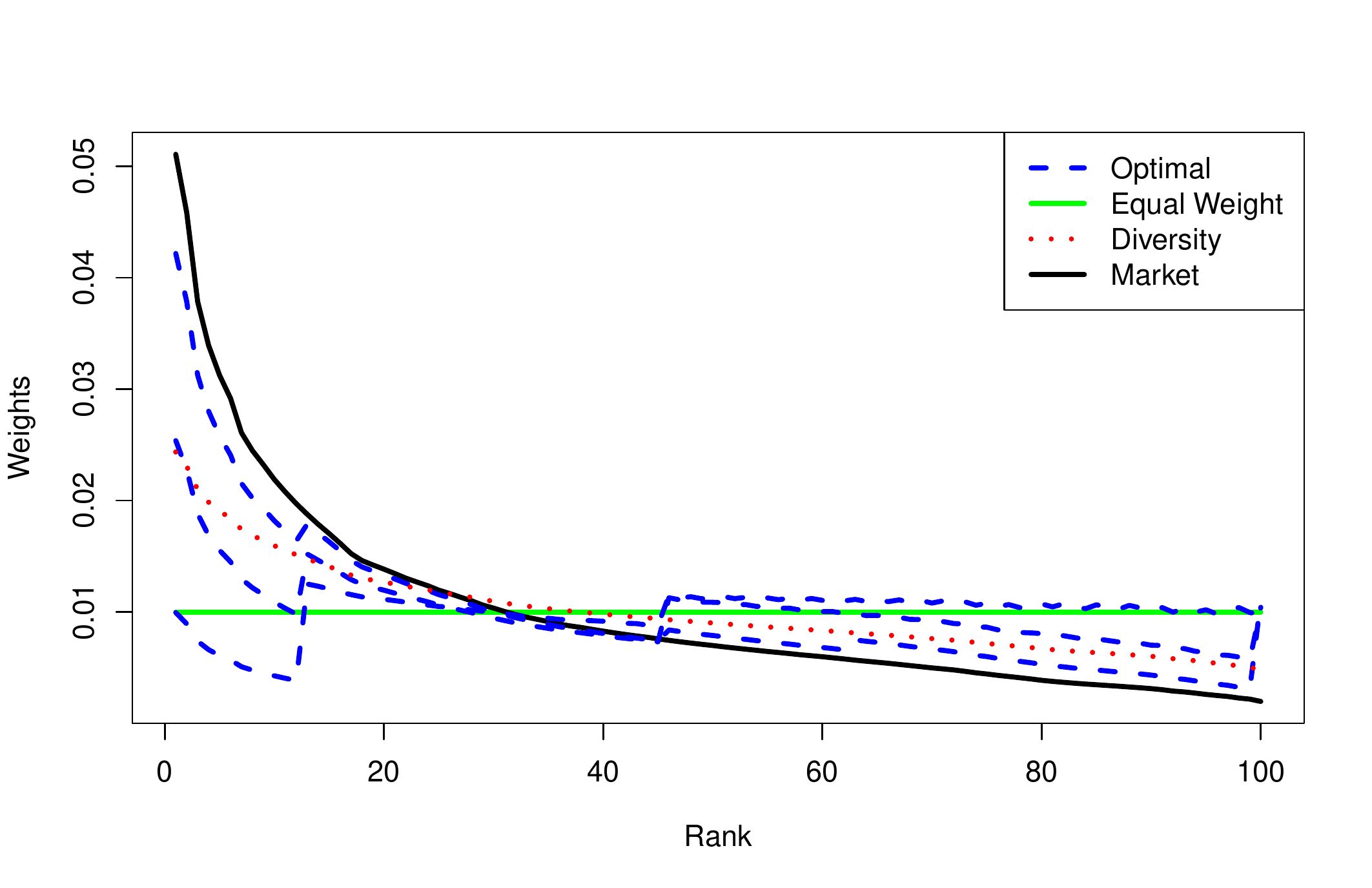}
    \end{center}
    \caption{Portfolio weights at the average market vector for $\beta=10^8$ and $\eta_0=0,\eta_1=1$ under a regularization penalty (left) with $\lambda=4\times10^{-8},4\times10^{-7},2\times10^{-6}$ , and when penalizing deviation from the market portfolio (right) with $\lambda=10^{-3},2.5\times10^{-3},4\times10^{-3}$. As $\lambda$ increases the recommended portfolio moves closer to the market portfolio.}
    \label{fig:penalties}
\end{figure}

\begin{figure}[h]
\begin{center}
    \includegraphics[width=0.49\textwidth]{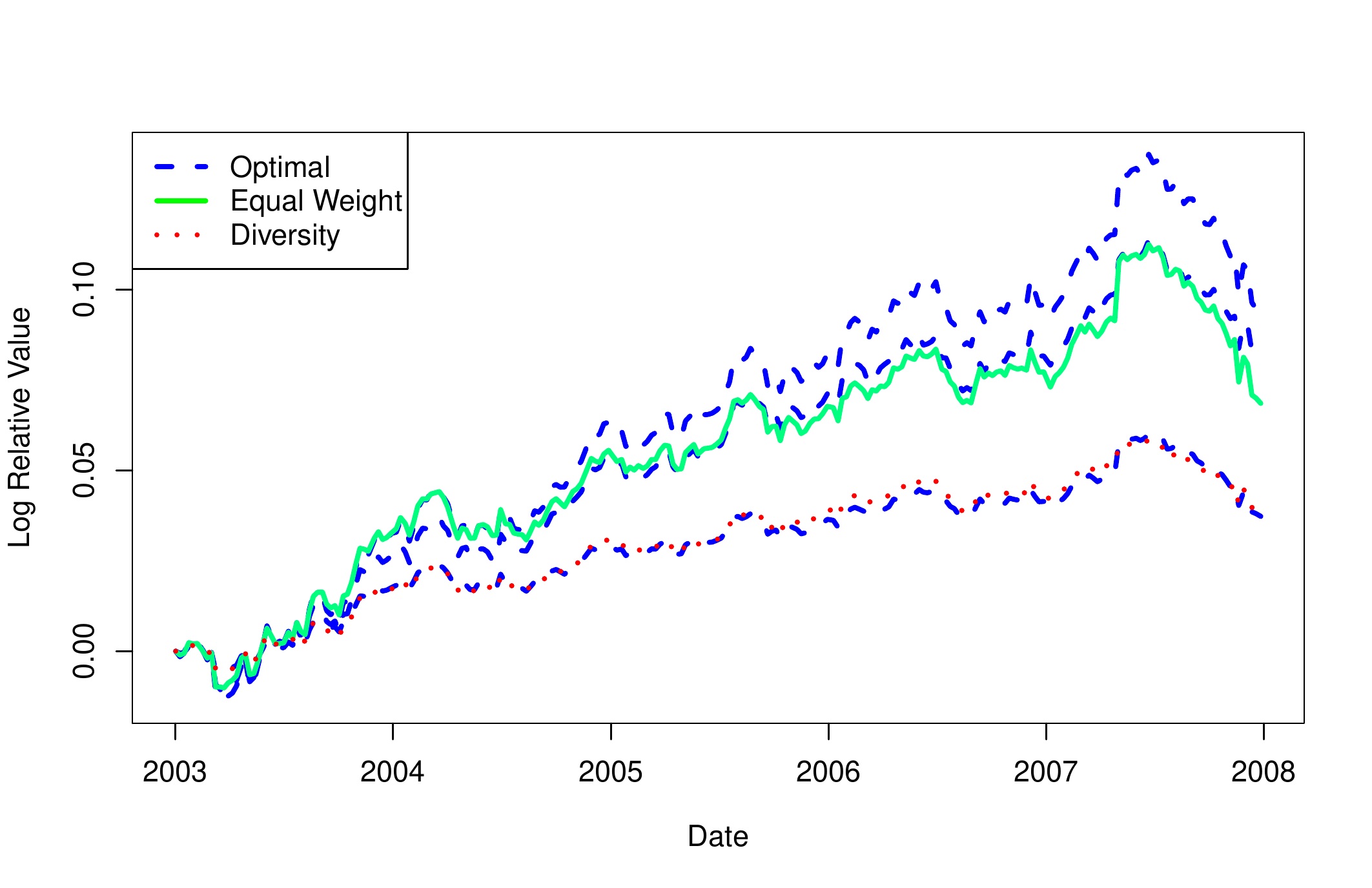}
    \includegraphics[width=0.49\textwidth]{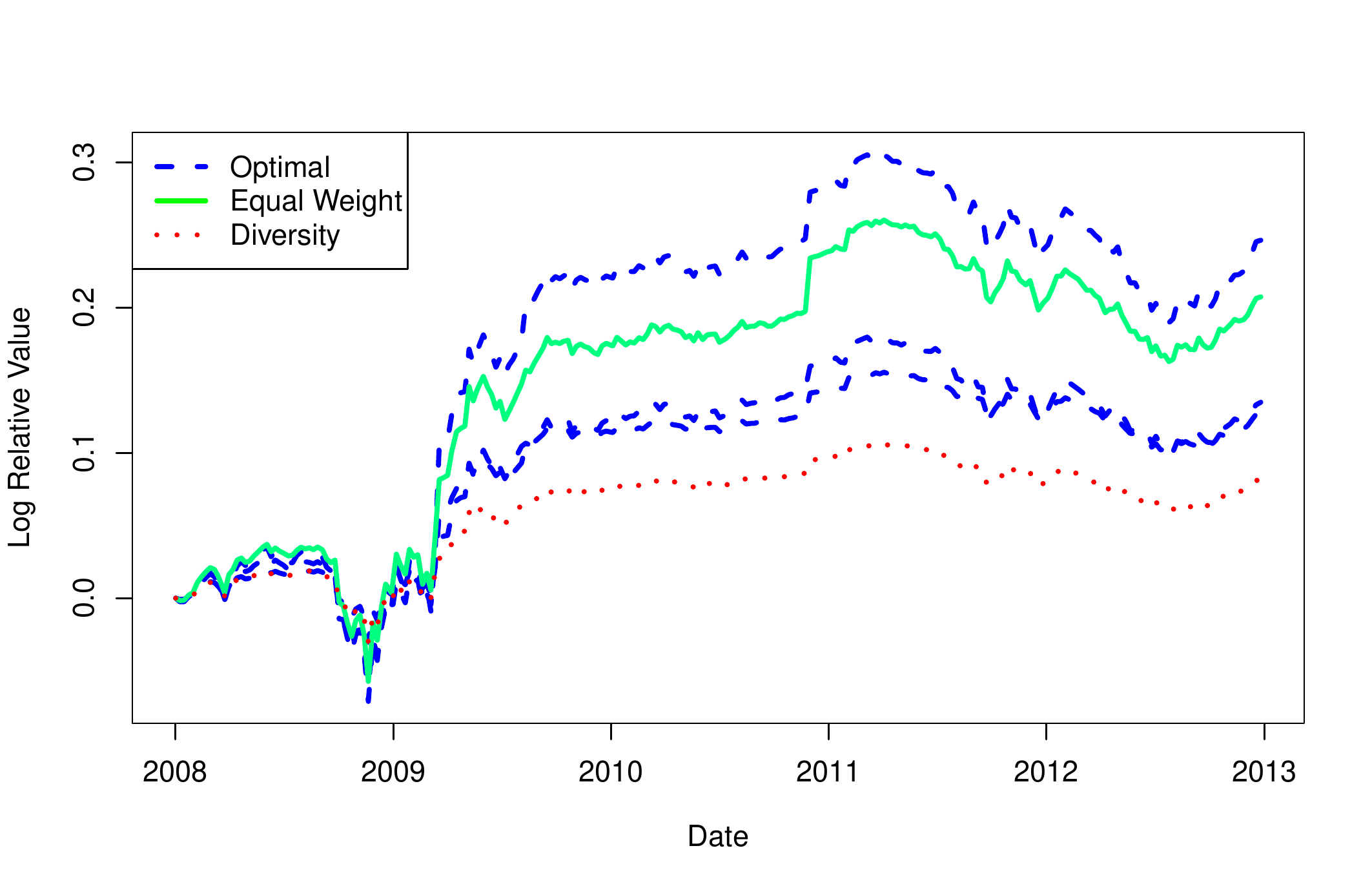}
\end{center}
\caption{Sample relative value performance  in the training period 2003--2007 (left) and in the testing period 2008--2013 (right). The three optimized portfolios involve either no additional constraints, the regularization constraint with $\lambda=4\times10^{-8}$, or the monotonicity constraint. The latter uses $\beta=5\times10^{7}$ and the former two use $\beta=10^{8}$. In all cases the optimization was conducted with $\eta_0=0,\eta_1=1$. The upper dashed curve is the unconstrained portfolio, the middle dashed curve is the regularized portfolio, and the lower dashed curve is the monotone portfolio.}
\label{fig:rel.val.perf}
\end{figure}

\subsubsection{Empirical performance} \label{sec:empirical.performance}
Finally we illustrate the performance of several of the above portfolios. Figure \ref{fig:rel.val.perf} shows the relative log value process of three optimized portfolios under the closed market setting. The training period is 2003--2007 (just before the financial crisis) and the testing period is 2008--2012. We show the performance of the portfolios in both periods. The portfolios we choose are the unconstrained version, the monotone version from Example \ref{ex:reg.and.const.4} and the regularized version from Example \ref{ex:reg.and.const.1} with $\lambda=4\times 10^{-8}$. The monotone portfolio was fit using $\beta=5\times 10^7$ and the others were fit using $\beta=10^8$. As mentioned above, each of the portfolios updates its holdings every $5$ trading days and transaction costs are neglected. Qualitatively we see a range in performance with the unconstrained portfolio outperforming the market in both the in-sample and out of sample settings. Also, the unconstrained portfolio outperforms the equal-weighted portfolio. Additionally, in keeping with Remark \ref{rem:div.similar} we see that the monotone weight portfolio performs very similarly to the diversity-weighted portfolio during the training period.

\subsection{Sample portfolio performance in the open market setting}\label{sec:open.mkt.setting.empirical}
Now we go beyond the setting of closed market considered in Section \ref{sec:SPT.setting.empirical}. The most important difference is that the universe of stocks is no longer kept fixed over time. Also, we incorporate dividends and proportional transaction costs. Since the portfolios studied in this paper are rank-based, we can consider, at a given time, the largest top $n$ (again $n = 100$ for illustrative purposes) stocks and apply $\boldsymbol{\pi}(\cdot)$ to the renormalized market weights. This setting is called an {\it open market} and is studied in the recent work \cite{karatzas2020open}. An interesting phenomenon that arises in the open market setting is the ``leakage effect"; see \cite{karatzas2020open,xie2020leakage} for detailed treatments. This setting takes into account features that are outside of the scope of the original optimization problems in Sections \ref{sec:optimization.problem} and \ref{sec:algorithm}. As a result, strictly speaking, our portfolios are not ``optimized" to this extended formulation. Here, we adapt our portfolios to the open market setting while keeping the spirit of our framework, and leave further modeling and optimization (such as prediction of market diversity and optimal rebalancing frequency) to future research. Our analysis follows the approach of \cite{ruf2020impact} which considered the performance of classic portfolios arising in SPT (e.g. equal-weighted and diversity-weighted portfolio) in the presence of these additional market features.

\subsubsection{Data Considerations} \label{sec:open.market.construction}
To construct an open market of ranked stocks we select a renewal frequency of 6 months for the constituent list. Every 6 months the largest 100 stocks are selected and delisting events or defaults are permitted in the interim. In the event of a delisting, the CRSP database provides a delisting return which we use to determine the closing value of a portfolio position. In the trading period following a delisting, all portfolios will distribute the funds amongst the remaining stocks, or the new stocks in the event that the constituent list is updated. When a stock is delisted we assume that there are no dividends paid. Our implementation of the portfolio transactions and trading costs exactly follows the approach of \cite{ruf2020impact} which the reader is referred to for further details.

When trading in the open market, we estimate our portfolio generating function in a rolling fashion. Specifically, we solve Problem \ref{prob:discretized} on the top 100 stocks over a 5 year period as in Section \ref{sec:empirical.performance}. In effect, we train our portfolio using a closed market and then test it by transacting in the open market setting. Specifically, the trained generating function is used to compute our portfolio weights for the 2 years following this training period. In this way we have a 5 year training/2 year testing split. For comparison purposes, the relative value is now calculated with respect to the value of the index tracking portfolio. That is, the portfolio with target holdings equal to the (relative) market weights of the top $100$ stocks.

\subsubsection{Empirical performance} \label{sec:empirical.performance.open.mkt}

We begin by comparing the performance of two optimized portfolios - unconstrained and monotone (see Example \ref{ex:reg.and.const.4})  - in the presence of various proportional transaction costs. The results can be found in Figure \ref{fig:mkt.performance}. The former portfolio was solved using $\beta=5\times10^8$ and the latter using $\beta=10^8$. In this first example, both optimization problems used $\eta_0=-0.5$ and $\eta_1=1$, meaning that we reduce the exposure of the optimized portfolio to the (realized) change in diversity during the training period (see \eqref{eqn:L.decompose.as.div.vol}). In Figure \ref{fig:varying.eta}, discussed below, we also illustrate the impact of varying these parameters.

\begin{figure}[h]
    \begin{center}
        \includegraphics[width=0.49\textwidth]{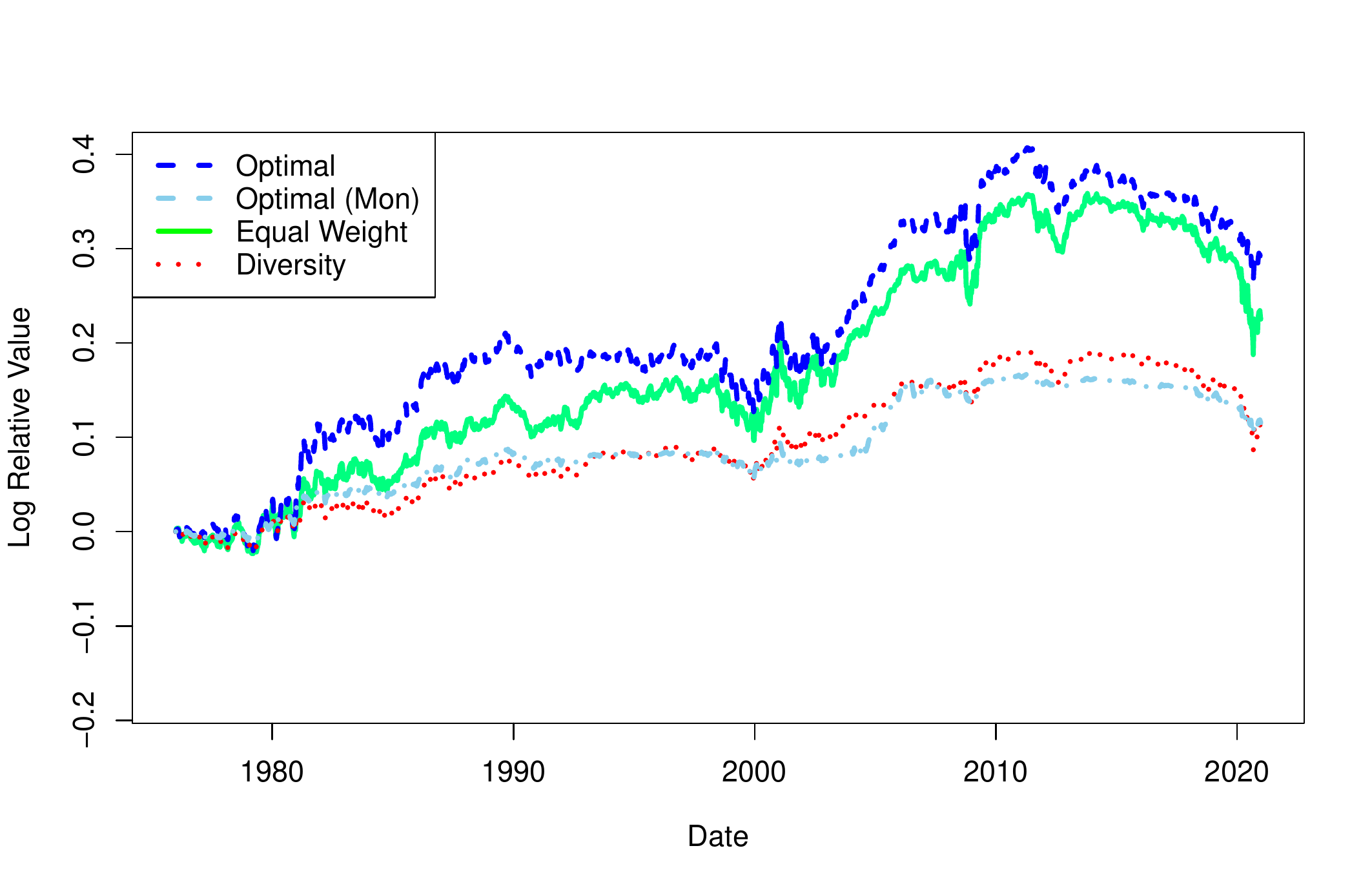}
        \includegraphics[width=0.49\textwidth]{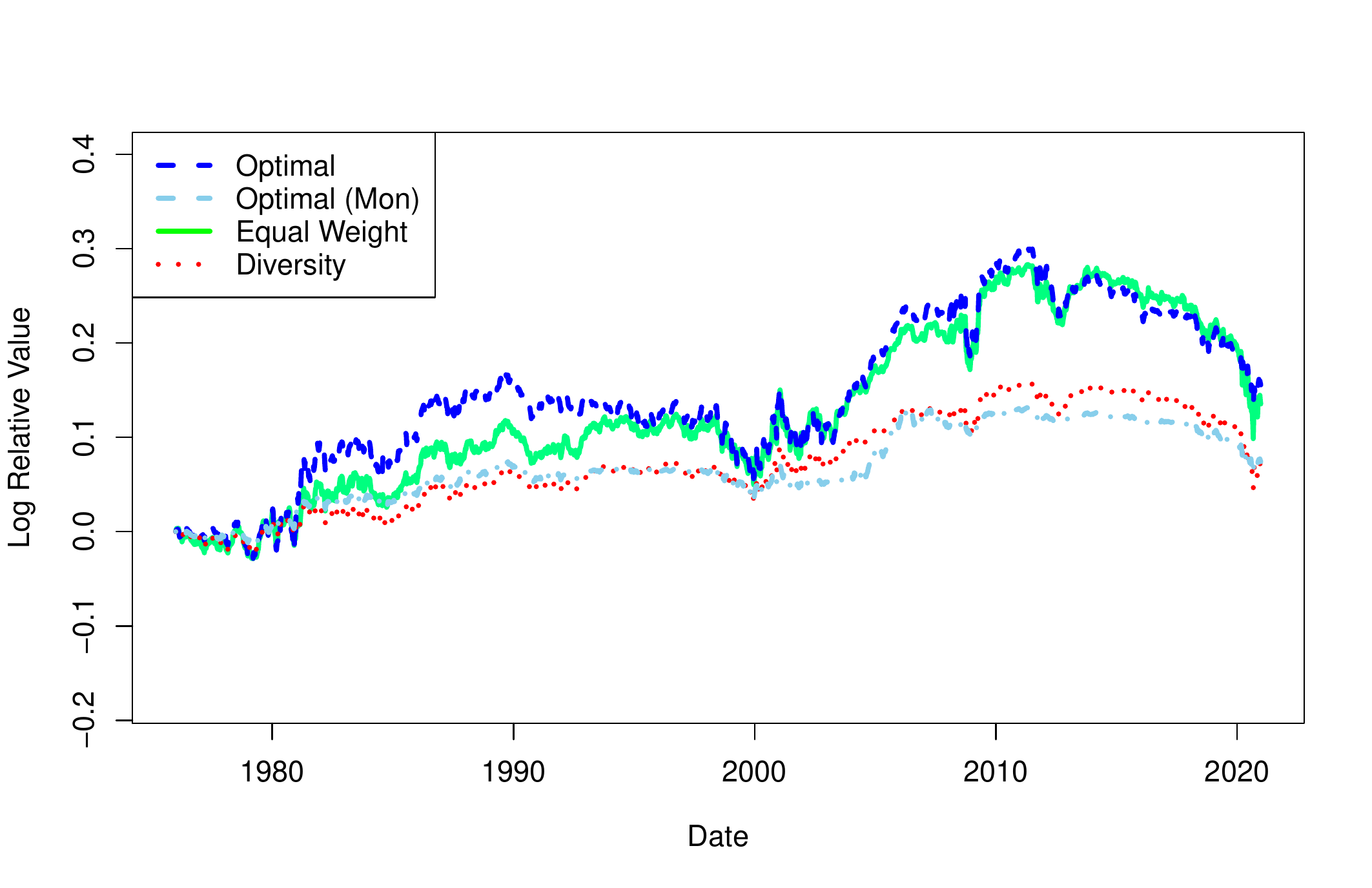}
    \end{center}
    \begin{center}
        \includegraphics[width=0.49\textwidth]{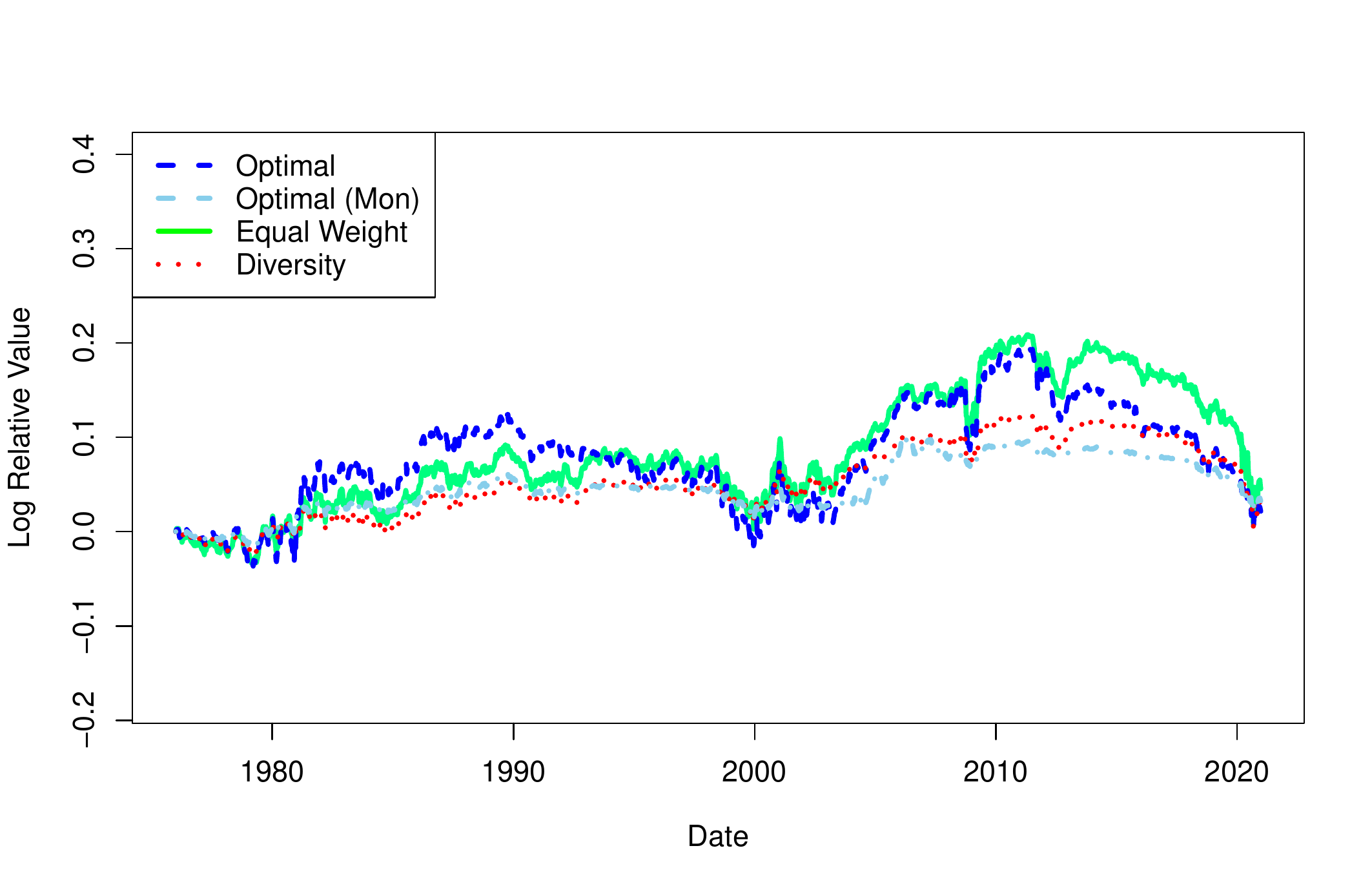}
        \includegraphics[width=0.49\textwidth]{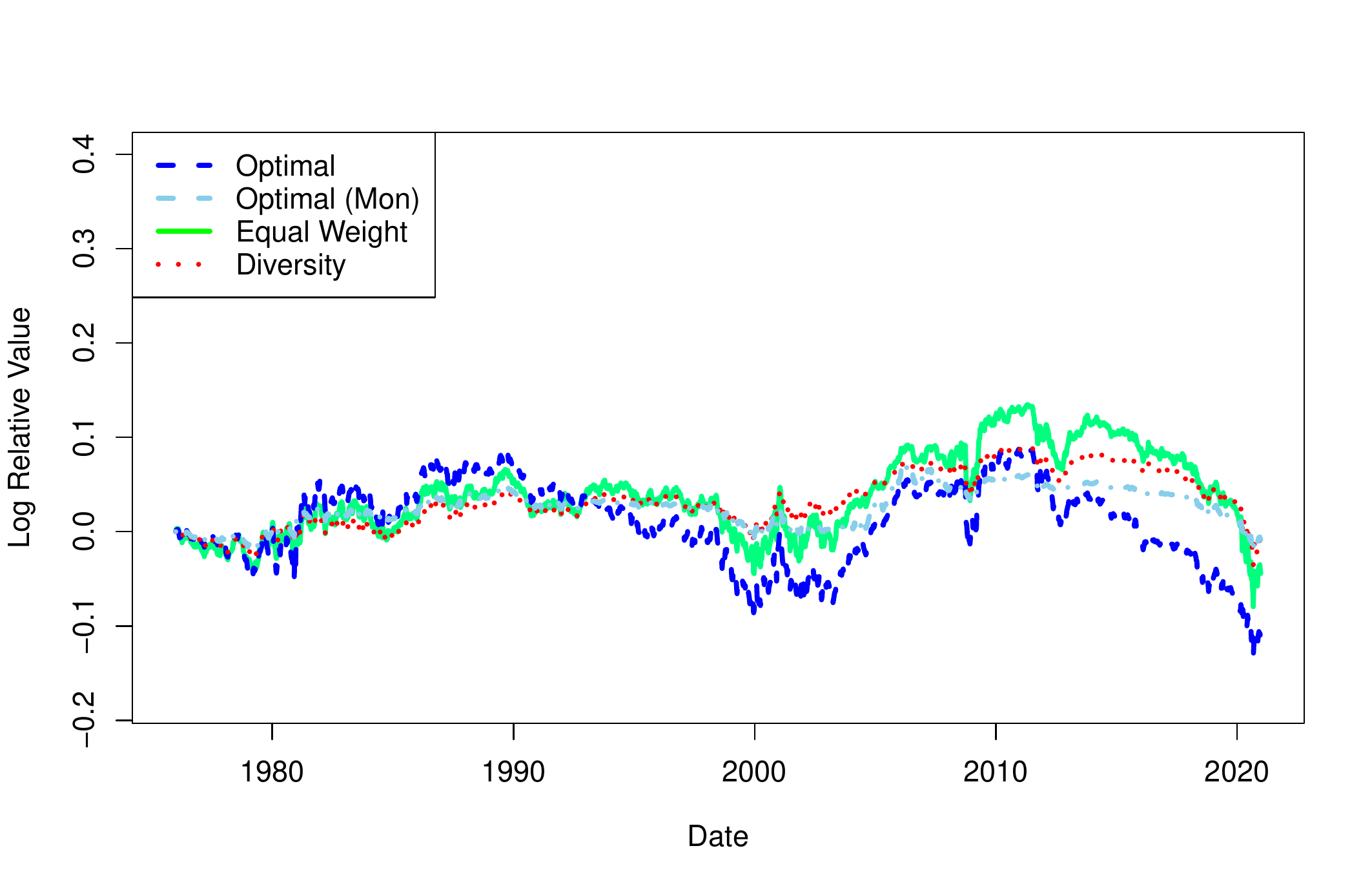}
    \end{center}
    \caption{Performance of the monotone and unconstrained portfolios in the open market setting relative to the index tracking portfolio. The diversity and equal weight portfolios are included for reference. The values of $\beta$ are $10^8$ and $5\times 10^8$, respectively. In both cases we take $\eta_1=1$, and $\eta_0=-0.5$. Proportional transaction costs are increasing from left to right, top to bottom ($tc=0\%,0.15\%,0.3\%,0.45\%$).}
    \label{fig:mkt.performance}
\end{figure}

As in \cite{ruf2020impact} we observe the deleterious impact of including transaction costs and the disproportionate effect on the equal-weighted, diversity-weighted and optimized portfolios. Note that the reference index tracking portfolio is also impacted by the imposition of transaction costs. Indeed, we see that as the proportional transaction costs (denoted by $tc$) increase, the relative performance of each of these portfolios decreases. Of these three portfolios in the low transaction cost regime, the optimized unconstrained portfolio, whose weights are the most aggressive, fares best, yet its terminal performance following the crash in February and March of 2020 is not materially different than that of the index tracking portfolio when the transaction costs are $0.3\%$. As before, the performance of the monotone weight and diversity-weighted portfolios is similar. For the highest transaction cost considered, $tc=0.45\%$, the optimized portfolio fares worst and is the most severely affected. 

\begin{figure}[h]
    \begin{center}
        \includegraphics[width=0.49\textwidth]{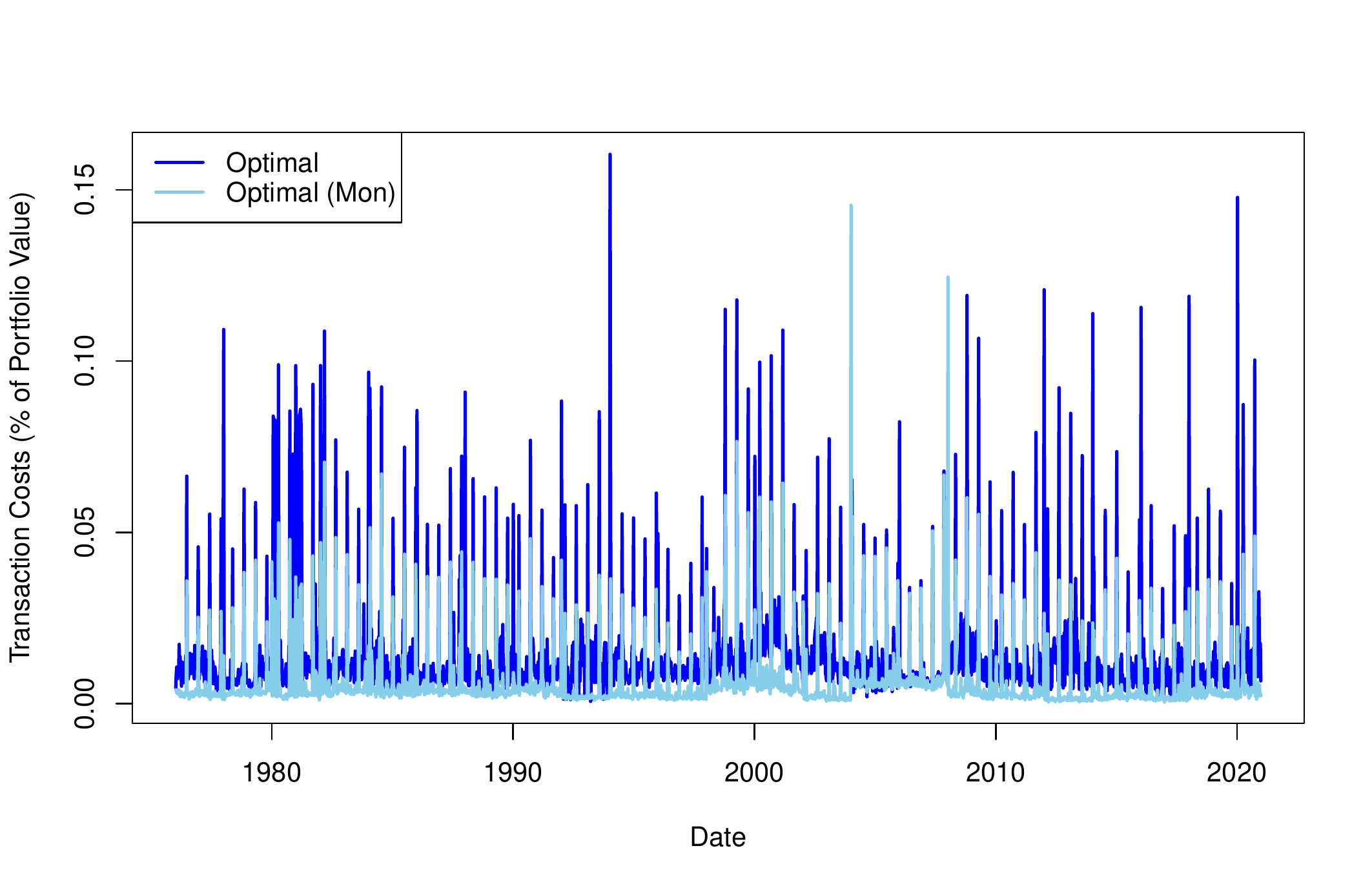}
        \includegraphics[width=0.49\textwidth]{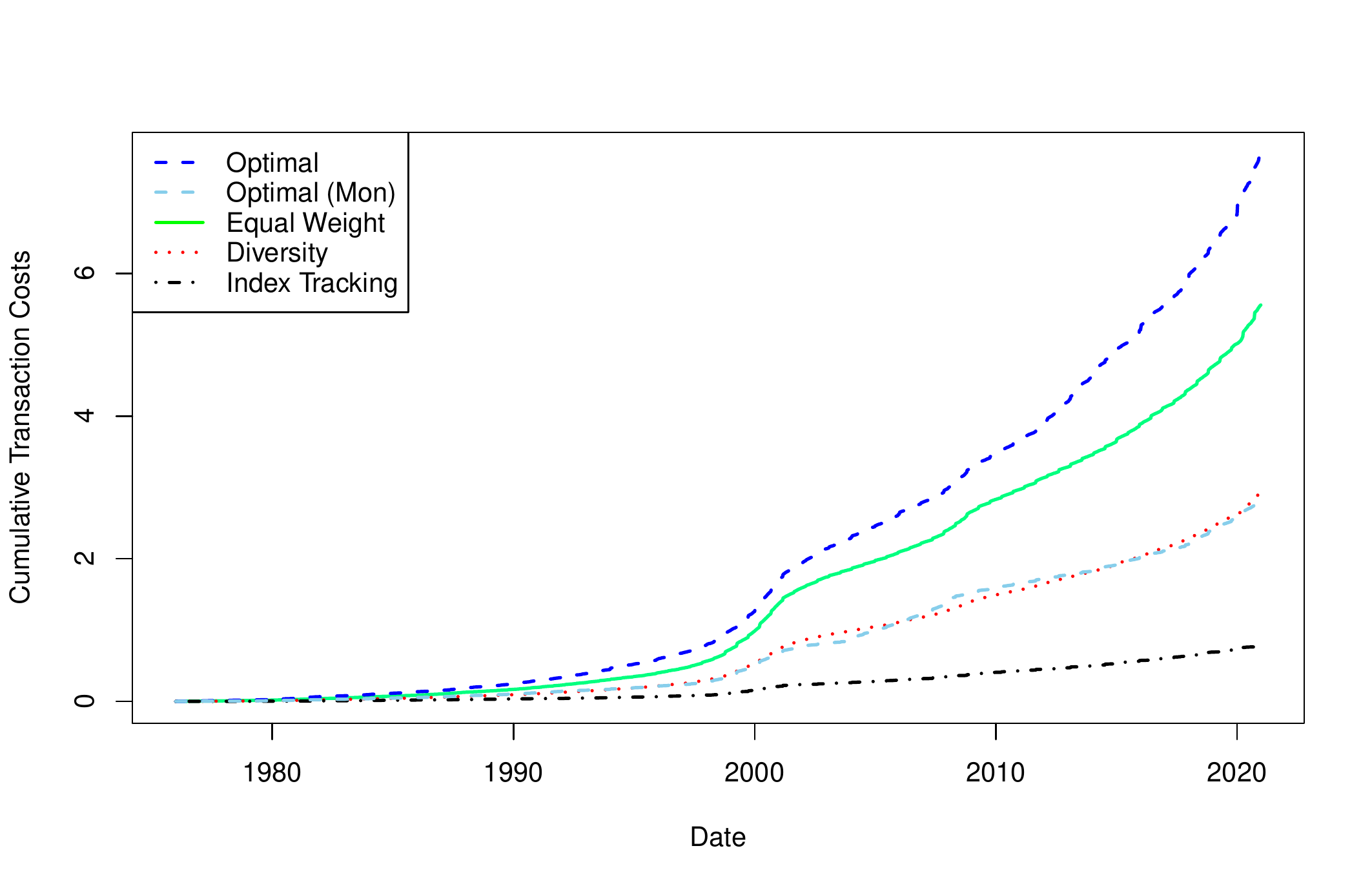}
    \end{center}
    \caption{Transaction costs incurred by the optimal portfolios from Figure \ref{fig:mkt.performance} when $tc=0.3\%$ as a percentage of the portfolio value (left). Cumulative transaction costs for all portfolios in Figure \ref{fig:mkt.performance} when $tc=0.3\%$ when the initial portfolio value is taken to be $1$ (right).}
    \label{fig:trans.costs}
\end{figure}

An illustration of the relative transaction costs is given in Figure \ref{fig:trans.costs}. We observe that a majority of the costs is due to changes in the constituent stocks. In this figure we can also see that the optimal monotone weight portfolio is less impacted than the unconstrained portfolio. In particular, there are once again similarities in the susceptibility of the monotone and diversity portfolios to transaction costs. A reference for the (nominal) value process (started at $1$) of the portfolios is given in Figure \ref{fig:div.and.value} where we also plot the time series of a measure of market diversity (defined by $D(\mathbf{\mu}) = (\sum_i \mu_i^{\theta})^{1/\theta}$ where $\theta = 0.5$).

\begin{figure}[h]
    \begin{center}
        \includegraphics[width=0.49\textwidth]{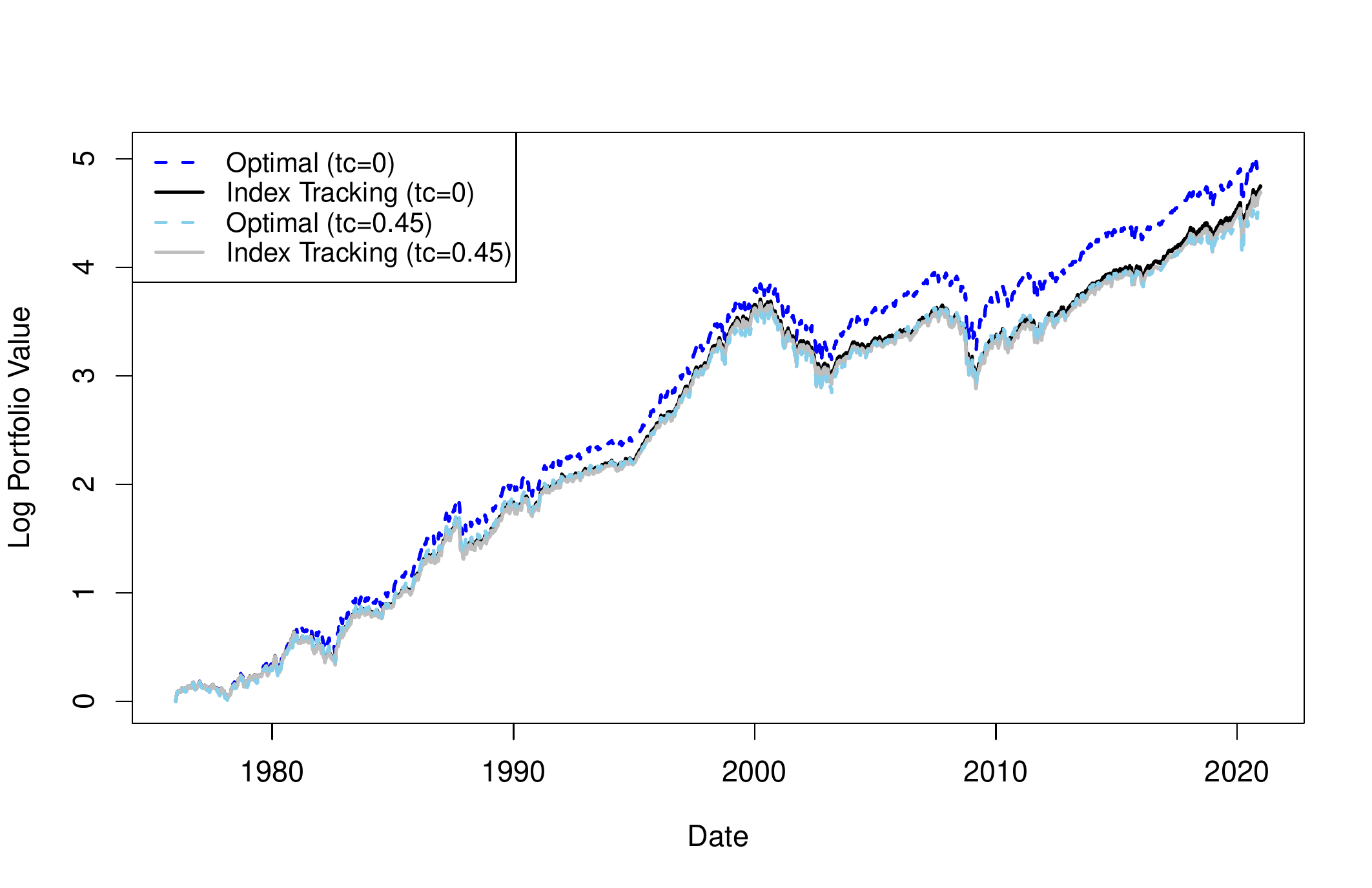}
        \includegraphics[width=0.49\textwidth]{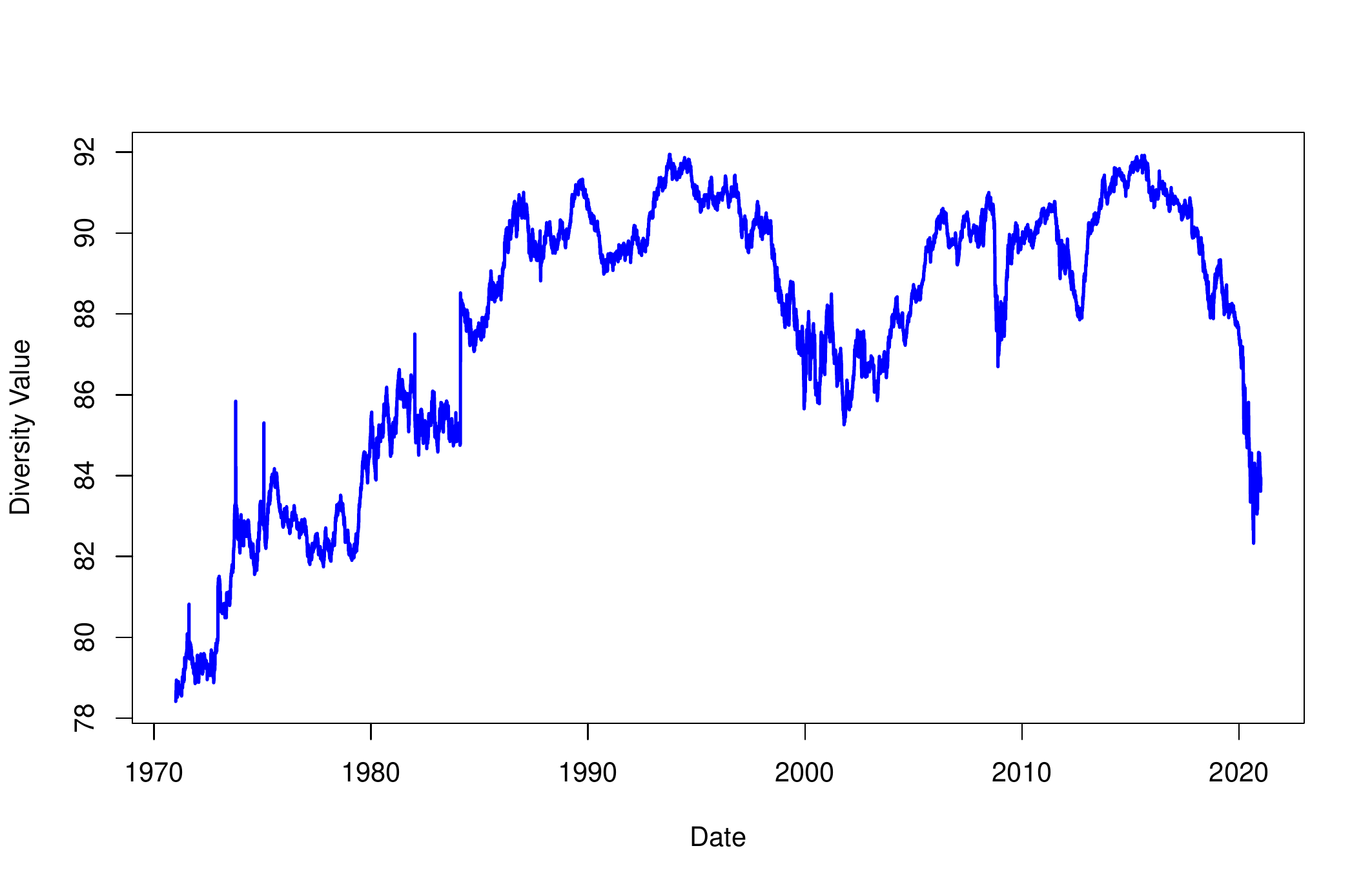}
    \end{center}
    \caption{Portfolio values (started at $1$) for the unconstrained and index tracking portfolios from Figure \ref{fig:mkt.performance} when $tc=0\%$ and $tc=0.45\%$ (left). Trajectory of market diversity for $\theta=1/2$ since $1971$ of the top 100 stocks ranked daily (right).}
    \label{fig:div.and.value}
\end{figure}

A separate investigation of market diversity indicates that one explanation for the underperformance of all portfolios (relative to the index tracking benchmark) after 2015 is that diversity has declined considerably in this latter part of the trading window. This is reflected in Figure \ref{fig:div.and.value}. This is in keeping with the findings of \cite{taljaard2021has} which documents the same phenomenon.  

\begin{figure}[h]
    \begin{center}
        \includegraphics[width=0.49\textwidth]{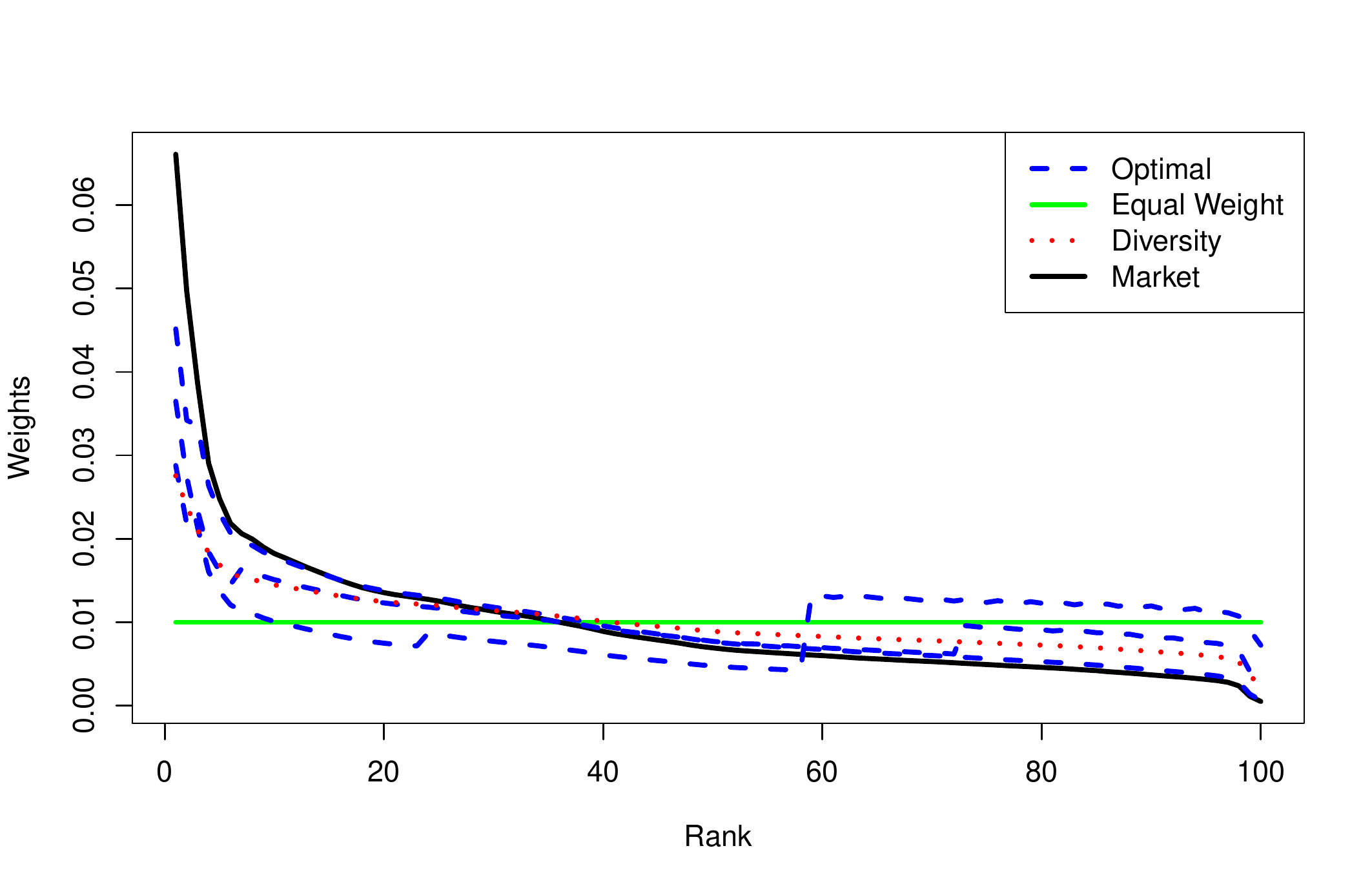}
        \includegraphics[width=0.49\textwidth]{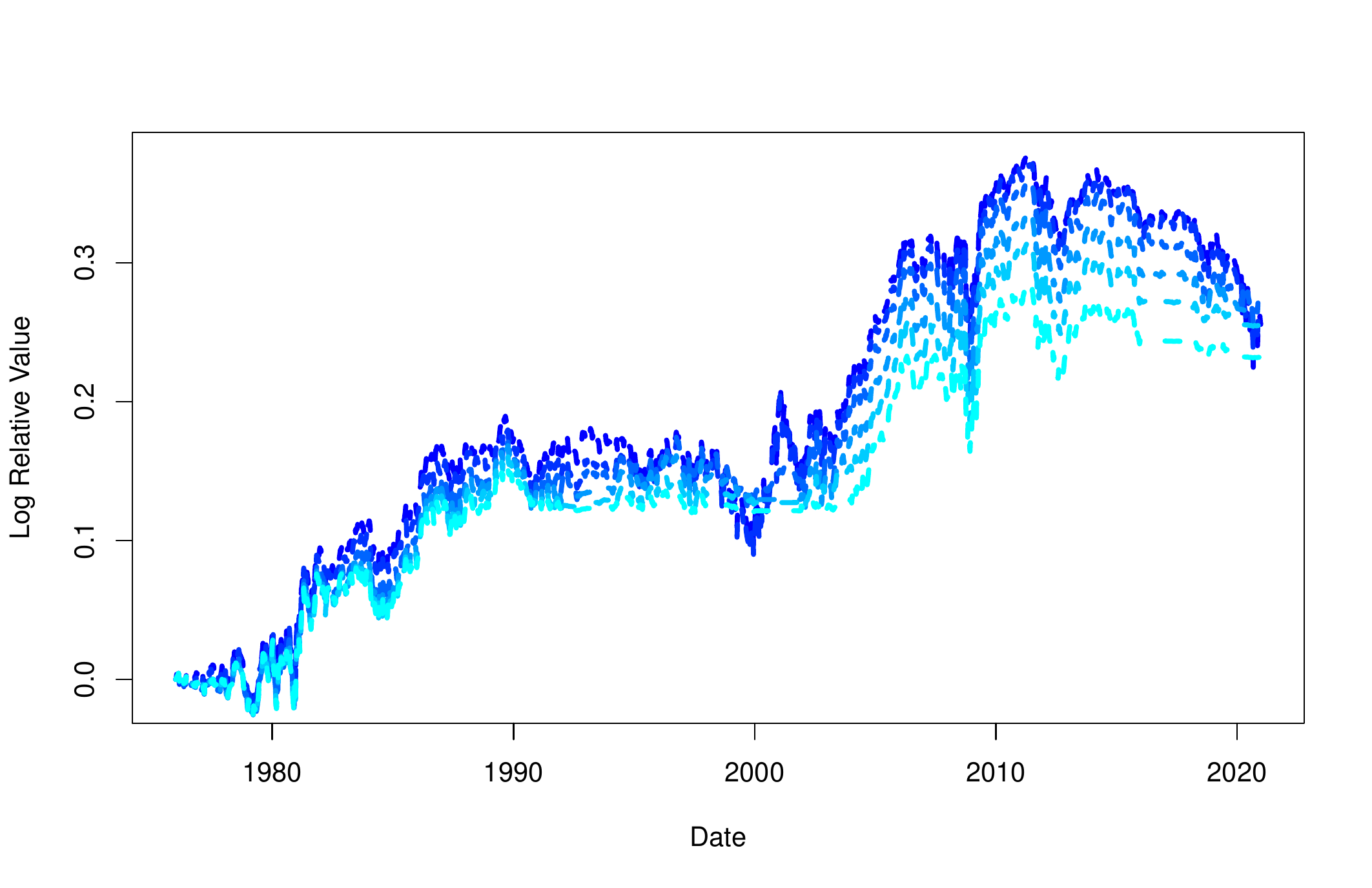}
    \end{center}
    \caption{Impact of varying $\eta_0\in\{-0.75,0,0.5\}$ in the period from 1986 to 1990 on the proposed portfolio (left) and varying $\eta_0\in\{-0.75,-0.5,-0.25,0,0.25,0.5\}$ in trading from 1976 to 2020 (right) with no transaction costs. The former is a period where the proposed portfolio for $\eta_0=0$ is less aggressive. In both cases, as $\eta_0$ decreases the plots show that the portfolio becomes more aggressive. Here $\eta_1=1$ and $\beta=10^8$ with no additional portfolio constraints.}
    \label{fig:varying.eta}
\end{figure}

We also take this opportunity to note the impact of fluctuations in diversity on the optimized portfolio in shorter time horizons. From Figure \ref{fig:div.and.value} we can see that there are extended periods where diversity is generally decreasing. This generally favors less aggressive portfolios in our optimization (when $\eta_0 > -1$ in \eqref{eqn:optim.objective}). As illustrated in Figure \ref{fig:varying.eta} the portfolio trained in one of these  periods (1986--1990) (for $\eta_0\geq0$) is very close to the market portfolio. This is unfavorable if diversity increases in the testing period. The portfolio manager can reduce the exposure to changes in diversity by decreasing $\eta_0$. From this figure we see that as $\eta_0$ decreases, the recommended portfolio moves away from the marker portfolio and becomes more aggressive. This general intuition is corroborated by the second image in Figure \ref{fig:varying.eta} which shows the overall relative performance of the unconstrained portfolio when $tc=0\%$ as $\eta_0$ is varied. In this example, the lower the value of $\eta_0$, the better the overall performance. While choosing a longer time horizon might allow the manager to capture longer term market cycles (in diversity, etc.) this comes at the cost of less recent and representative training data. Thus, this ability to change the parameters $\eta_0,\eta_1$ gives the portfolio manager an extra lever with which to improve finite sample performance.

Evidently, there are many questions raised by the open market setting. These include how to choose the training window, rebalancing frequency, and the modeling of market diversity. Overcoming these limitations and constructing portfolios whose performance is robust to these effects is naturally an interesting research direction.

\section{Conclusion}  \label{sec:conclusion}
In this paper we introduced an implementable optimization problem for a class of rank-based functionally generated portfolios and illustrated its use using empirical data. At a higher level, our work contributes to applications of exponentially concave functions in optimization. Our inquiry leads naturally to several directions which we describe below. 
\begin{enumerate}
	\item[(i)] Our optimization problem, as well as some of those cited in Section \ref{sec:intro}, rely crucially on the stability of the (rank-based) capital distribution. While there is a well-developed literature (see e.g.~\cite{chatterjee2010phase,pal2011analysis}) on rank-based models based on interacting particle systems, there is still a strong need for rigorous empirical and statistical analyses. In Section \ref{sec:rank.based.stability} we briefly considered the stability using concepts from optimal transport and we plan to carry out a deeper study in future research. The recent paper \cite{taljaard2021has} has also noted the potential utility of such a study.
	\item[(ii)] In Section \ref{sec:empirical} we find that there are myriad factors that can influence the performance of portfolios formed in the classic closed market SPT setting. In principle, these should be explicitly treated in portfolio construction.
	\item[(iii)] While the family $\mathcal{E}_{\beta}$ is mathematically and computationally tractable, it is natural to extend the framework to more general families of generating functions. In particular, an interesting problem is to study optimization of functionally generated portfolios that depend not only on the market weight process but also on additional processes such as volatility; see for example \cite{M21,X20} and the references therein. Additionally, the summation in the definition of $\varphi$ (see \eqref{eqn:varphi.decomposable}) may be generalized in the spirit of linear basis expansion to have a separate function $\ell_k$ for each rank. It is then natural to investigate the ``optimal'' balance of bias and variance. 
	\item[(iv)] In \cite{BW19,CSW19,W15b} the authors considered Cover's universal portfolio, a form of online learning algorithm, for functionally generated portfolios in the context of stochastic portfolio theory (also see \cite[Section 4.5]{karatzas2020open}). In these papers a major challenge is to obtain explicit finite time regret bounds. Will the function classes and constructions of this paper -- when combined with suitable learning algorithms -- lead to tractable results and explicit finite time bounds?  
\end{enumerate}

\appendix
\addcontentsline{toc}{section}{}
\section*{Appendix}
\stepcounter{section}

\subsection{General analytical results}

We first state and prove some elementary properties of exponentially concave functions on the unit interval.

\begin{lemma} \label{lem:exp.conc.property2}
If $\ell\in\mathcal{E}$ then for $x \in (0, 1)$ we have
\begin{equation} \label{eqn:ell.derivative.bound}
\frac{-1}{1-x}\leq \ell'(x)\leq \frac{1}{x}
\end{equation}
and
\begin{equation} \label{eqn:ell.global.bound}
\log(2-2x)\wedge\log(2x)\leq \ell(x)\leq \log(2-2x)\vee\log(2x).
\end{equation}
\end{lemma}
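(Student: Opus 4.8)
The plan is to exploit the concavity of $g := e^{\ell}$ and the normalization $\ell(1/2) = 0$, i.e. $g(1/2) = 1$. First I would prove the derivative bounds \eqref{eqn:ell.derivative.bound}. Since $g$ is concave and positive on $[0,1]$, its graph lies below each tangent line; evaluating the tangent at $x$ against the endpoints $y = 0$ and $y = 1$ gives $g(0) \leq g(x) - x g'(x)$ and $g(1) \leq g(x) + (1-x) g'(x)$. Using $g(0), g(1) \geq 0$ we obtain $g'(x) \leq g(x)/x$ and $g'(x) \geq -g(x)/(1-x)$. Dividing by $g(x) > 0$ and recalling $\ell'(x) = g'(x)/g(x)$ yields exactly $-\frac{1}{1-x} \leq \ell'(x) \leq \frac{1}{x}$.

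Next I would deduce the global bounds \eqref{eqn:ell.global.bound} by integrating the derivative bounds from the base point $1/2$, being careful about the sign of $x - 1/2$. For $x \geq 1/2$: integrating $\ell'(s) \leq 1/s$ over $[1/2, x]$ gives $\ell(x) \leq \log(2x)$, while integrating $\ell'(s) \geq -\frac{1}{1-s}$ over $[1/2,x]$ gives $\ell(x) \geq \log(2 - 2x)$ (valid as an inequality, trivially so as $x \to 1$ where the right side $\to -\infty$). For $x \leq 1/2$ the roles reverse: integrating over $[x, 1/2]$, the bound $\ell'(s) \leq 1/s$ gives $\ell(1/2) - \ell(x) \leq \log(1/(2x))$, i.e. $\ell(x) \geq \log(2x)$, and $\ell'(s) \geq -\frac{1}{1-s}$ gives $\ell(x) \leq \log(2 - 2x)$. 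In both regimes one of $\log(2x)$, $\log(2-2x)$ is the lower bound and the other the upper bound, which is precisely the statement $\log(2-2x)\wedge\log(2x) \leq \ell(x) \leq \log(2-2x)\vee\log(2x)$ (the minimum is $\log(2x)$ when $x \leq 1/2$ and $\log(2-2x)$ when $x \geq 1/2$).

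I do not expect a serious obstacle here; the only point requiring care is bookkeeping the direction of integration and the endpoint behavior when $x \to 0$ or $x \to 1$, where $\ell'$ need not be bounded but the integrated inequalities still hold (with the relevant side tending to $-\infty$). Alternatively, \eqref{eqn:ell.global.bound} can be obtained directly without integration: concavity of $g = e^\ell$ on $[0,1]$ with $g(1/2) = 1$ and $g \geq 0$ gives, for $x \leq 1/2$, the chord inequality $g(x) \geq 2x\, g(1/2) + (1-2x) g(0) \geq 2x$ (hence $\ell(x) \geq \log(2x)$) and the tangent-type inequality from the other side, and symmetrically for $x \geq 1/2$. I would likely present the integration argument as it is the most transparent, citing \eqref{eqn:ell.derivative.bound} which is established in the first part.
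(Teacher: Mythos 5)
Your proposal is correct. For the derivative bounds \eqref{eqn:ell.derivative.bound} your argument is in substance identical to the paper's: the paper writes the exponential concavity inequality $\ell(y)-\ell(x)\leq \log(1+\ell'(x)(y-x))$, extracts the positivity $1+\ell'(x)(y-x)>0$, and sends $y\downarrow 0$ and $y\uparrow 1$; this positivity condition is exactly your tangent-line inequality $g(x)+g'(x)(y-x)>0$ for $g=e^{\ell}$ divided through by $g(x)$, so the two derivations coincide. For the global bounds \eqref{eqn:ell.global.bound} you take a mildly different route: the paper applies the chord inequality to the concave function $f=e^{\ell}$ with $f(\tfrac12)=1$ and $f\geq 0$ to get $(2-2x)\wedge 2x\leq f(x)\leq (2-2x)\vee 2x$ in one line and then takes logarithms, whereas you integrate the derivative bounds from the base point $\tfrac12$, splitting into the cases $x\geq\tfrac12$ and $x\leq\tfrac12$. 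Your integration is valid (the integrals are over compact subintervals of $(0,1)$ on which $\ell'$ is continuous, and the case bookkeeping matches the $\wedge$/$\vee$ in the statement), and you in fact mention the paper's chord argument as the alternative. The only trade-off is economy: the chord inequality avoids the case split and the integration entirely, which is why the paper presents it that way.
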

\begin{proof}
Let $x, y \in (0, 1)$. By exponential concavity of $\ell$, we have
\[
\ell(y) - \ell(x) \leq \log (1 + \ell'(x) (y - x)) \Rightarrow 1 + \ell'(x) (y - x) > 0. 
\]
Rearranging, we have $\ell'(x) < \frac{1}{x} + \frac{\ell'(x)}{x} y$. Letting $y \downarrow 0$ gives the upper bound in \eqref{eqn:ell.derivative.bound}. Note that we also have $\ell'(x) \frac{y - x}{1 - x} > \frac{-1}{1 - x}$. Letting  $y \uparrow 1$ gives the lower bound. 

To prove \eqref{eqn:ell.global.bound}, consider $f = e^{\ell}$ which is positive, concave and satisfies $f(\frac{1}{2}) = 1$. From the concavity of $f$ we have the elementary bound 
\[
(2 - 2x) \wedge 2x \leq f(x) \leq (2 - 2x) \vee 2x, \quad 0 \leq x \leq 1.
\]
Taking log gives the desired result. 	
\end{proof}

\begin{lemma} \label{lem:ell.upper.bound}
	Let $\ell \in \mathcal{E}_{\beta}$. Then the following statements hold for $x \in [0, 1]$.
	\begin{enumerate}
		\item[(i)] $|\ell'(x)| \leq \sqrt{\beta}$. In particular, $\ell$ is Lipschitz with constant $\sqrt{\beta}$.
		\item[(ii)] $|\ell(x)| \leq \sqrt{\beta} \left|x - \frac{1}{2} \right| \leq \frac{\sqrt{\beta}}{2}$.
	\end{enumerate}
\end{lemma}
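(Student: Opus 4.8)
The plan is to prove part (i) first — the uniform bound $|\ell'| \le \sqrt\beta$ — and then obtain part (ii) by integrating $\ell'$ away from the normalization point $\tfrac12$.

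For (i), fix $x_0 \in [0,1)$ and set $c := \ell'(x_0)$; the claim is that $c \le \sqrt\beta$ (the reverse inequality being symmetric). This is trivial if $c \le 0$, so assume $c > 0$. For $s \in (0, 1 - x_0]$, exponential concavity of $\ell$ (concavity of $e^\ell$, as used in the proof of Lemma \ref{lem:exp.conc.property2}) gives $\ell(x_0 + s) - \ell(x_0) \le \log(1 + cs)$, while the $\beta$-smooth Taylor estimate of Lemma \ref{lem:beta.smooth.der.approx} gives $\ell(x_0 + s) - \ell(x_0) \ge cs - \tfrac{\beta}{2} s^2$. Chaining these two inequalities and then applying the elementary bound $\log(1 + u) \le u - \tfrac{u^2}{2} + \tfrac{u^3}{3}$, valid for $u \ge 0$ (with $u = cs$), produces, after cancelling the common term $cs$ and dividing by $s^2/2$,
\[
c^2 \le \beta + \tfrac{2}{3}\, c^3 s .
\]
Letting $s \downarrow 0$ yields $c^2 \le \beta$. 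For the endpoint $x_0 = 1$ the bound follows from the continuity of $\ell'$ on $[0,1]$ (part of the definition of $\mathcal{E}$). The lower bound $\ell'(x_0) \ge -\sqrt\beta$ is obtained in the same way, running the argument to the left of $x_0$ (using $\ell(x_0 - s) - \ell(x_0) \le \log(1 + |c|\,s)$ together with the matching lower bound from Lemma \ref{lem:beta.smooth.der.approx}), or equivalently by applying the first case to $\tilde\ell(x) := \ell(1-x)$, which again belongs to $\mathcal{E}_\beta$. This establishes $|\ell'| \le \sqrt\beta$ on $[0,1]$, and the Lipschitz assertion is then the fundamental theorem of calculus.

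Part (ii) is immediate from (i): since $\ell(\tfrac12) = 0$, for any $x \in [0,1]$ we have $|\ell(x)| = \bigl| \int_{1/2}^x \ell'(t)\,\mathrm{d}t \bigr| \le \sqrt\beta\, \bigl| x - \tfrac12 \bigr| \le \tfrac{\sqrt\beta}{2}$.

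The one point requiring care is recovering the sharp constant $\sqrt\beta$ rather than some larger multiple of $\sqrt\beta$ in (i): a cruder comparison — for instance bounding $\ell'$ on $[x_0, x_0+s]$ between $c$ and $c-\beta s$ and using only $\log(1+u) \le u$ — loses a constant factor. It is precisely the second-order expansion of the logarithm, combined with passing to the limit $s \to 0$ so that the $O(s)$ remainder disappears, that delivers exactly $c^2 \le \beta$. Everything else is routine, and the continuity of $\ell'$ up to the endpoints handles the boundary cases.
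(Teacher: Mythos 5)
Your proof is correct, but it takes a genuinely different route from the paper's. The paper works at second order: since $\ell'$ is monotone (and Lipschitz), $\ell''$ exists almost everywhere, and at any such point $c$ the exponential concavity of $\ell$ gives $(\ell'(c))^2 + \ell''(c) \le 0$, whence $(\ell'(c))^2 \le -\ell''(c) \le \beta$ by $\beta$-smoothness; the bound is then transferred to an arbitrary $x_0$ by sandwiching $\ell'(x_0)$ between $\ell'(c)$ and $\ell'(\bar c)$ for a.e.-differentiability points $c < x_0 < \bar c$, using the monotonicity of $\ell'$. You instead avoid second derivatives entirely: you chain the first-order gradient inequality $\ell(x_0+s)-\ell(x_0) \le \log(1+cs)$ against the Taylor lower bound $cs - \tfrac{\beta}{2}s^2$ from Lemma \ref{lem:beta.smooth.der.approx}, expand the logarithm to second order, and let $s \downarrow 0$. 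Both arguments recover the sharp constant $\sqrt{\beta}$ and both parts (i) and (ii) come out identically; your version is somewhat longer but more self-contained, since it does not invoke a.e.\ twice-differentiability of concave functions, while the paper's is shorter at the price of that measure-theoretic input and the extra sandwiching step to reach points where $\ell''$ may fail to exist. Your endpoint and sign-symmetry considerations (continuity of $\ell'$ up to $\{0,1\}$, reflection $\tilde\ell(x)=\ell(1-x)$) are handled correctly, and part (ii) is the same one-line integration as in the paper.
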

\begin{proof}
Pick $x_0\in(0,1)$. By the monotonicity of $\ell'(x)$, $\ell''(x)$ exists almost everywhere on $(0,1)$. Hence, there exists a $c\in(0,x_0)$ and $\bar{c}\in(x_0,1)$ such that $\ell''(c)$ and $\ell''(\bar{c})$ exist. Moreover, by exponential concavity
$\ell'(c)\geq \ell'(x_0)\geq \ell'(\bar{c})$ and $(\ell'(c))^2+\ell''(c)\leq0$. Then by $\beta$-smoothness we have
\[(\ell'(c))^2\leq-\ell''(c)\leq \beta.\]
The same bound holds for $(\ell'(\bar{c}))^2$ so we conclude $|\ell'(x_0)|\leq \sqrt{\beta}$. Since $x_0$ was arbitrary the above bound holds on all of $(0,1)$. Since $\lim_{x\downarrow0}\ell'(x)$ and $\lim_{x\uparrow1}\ell'(x)$ exist and coincide with $\ell'(0)$ and $\ell'(1)$, respectively, the bound above extends to $[0,1]$. The second statement is immediate from (i).
\end{proof}

\subsection{Lemma \ref{lem:beta.interpretation}}
\begin{proof}[Proof of Lemma \ref{lem:beta.interpretation}]
Let $\mathbf{p}, \mathbf{q} \in \Delta_n$. By \eqref{eqn:pathwise.decomposition} and Lemma \ref{lem:ell.upper.bound}(i), we have
\[
\log \left(\sum_{i = 1}^n \boldsymbol{\pi}_i(\mathbf{p}) \frac{q_i}{p_i} \right) \geq \frac{1}{n} \sum_{i = 1}^n (\ell(q_i) - \ell(p_i)) \geq \frac{-1}{n} \sqrt{\beta} \|\mathbf{q} - \mathbf{p}\|_1.
\]
Sending $\mathbf{q}$ to $\mathbf{e}_i$, the $i$-th vertice of $\Delta_n$, we have $\log \frac{\boldsymbol{\pi}_i(\mathbf{p})}{p_i} \geq -\frac{2\sqrt{\beta}}{n}$. Exponentiating gives the lower bound in \eqref{eqn:weight.ratio.bound}.

Next we turn to the upper bound. By relabeling the coordinates if necessary, we may assume without loss of generality that $p_n = p_{(n)}$ where $p_{(n)}=\min_{1\leq i\leq n}p_i$. By \eqref{eqn:weight.ratio.monotone} and \eqref{eqn:fgp.weights}, we have
\[
\frac{\boldsymbol{\pi}_i(\mathbf{p})}{p_i} - 1 \leq \frac{\boldsymbol{\pi}_{\sigma(n)}(\mathbf{p})}{p_{\sigma(n)}} -1 =  \frac{1}{n}  \ell'(p_{(n)}) - \frac{1}{n}  \sum_{j = 1}^n p_j \ell'(p_j) = \frac{1}{n} \sum_{j = 1}^n p_j (\ell'(p_{(n)}) - \ell'(p_j)).
\]
Using $\beta$-smoothness of $\ell$, the last expression is bounded above by
\begin{equation} \label{eqn:weight.ratio.upper.bound.derivation}
\begin{split}
&\frac{\beta}{n} \sum_{j = 1}^n p_j (p_{(n)} - p_j)= \beta \left[ (1 - p_{(n)}) p_{(n)} - \sum_{j = 1}^{n-1} p_j^2 \right] \\
&\leq \frac{\beta}{n} \left[ (1 - p_{(n)}) p_{(n)} - \frac{(1 - p_{(n)})^2}{n-1}  \right] \leq \frac{\beta}{n} (1 - p_{(n)}) p_{(n)} \leq \frac{\beta}{n^2},
\end{split}
\end{equation}
where in the last inequality we used the fact that $p_{(n)} \leq \frac{1}{n}$. 
\end{proof}

\subsection{Lemma \ref{lem:density}}
\begin{proof}[Proof of Lemma \ref{lem:density}]
Fix $\ell\in\mathcal{E}$. For $\epsilon>0$ given, we will construct $\ell_{\epsilon}\in\bigcup_{\beta\geq0} \mathcal{E}_\beta$ such that $d(\ell,\ell_\epsilon)<\epsilon$. Let $f=e^{\ell}$. Since $\ell\in\mathcal{E}$, $f$ is concave, $f>0$, and $f$ is continuously differentiable with $f'=\ell'e^{\ell}$. Define the Bernstein operator $B_m$ for $m = 1, 2, \ldots$ by:
\[
B_m(f)(x)=\sum_{k=0}^nf(\frac{k}{m})\binom{m}{k}x^k(1-x)^{m-k}, \quad x\in[0,1],
\]
and consider the function $B_m(f)$. By \cite[Theorems 6.2.2 and 6.3.2]{davis1975interpolation}, $B_m(f)$ converges uniformly to $f$ and $B_m'(f)$ converges uniformly to $f'$ as $m \rightarrow \infty$. Moreover, by \cite[Theorem 6.3.3]{davis1975interpolation} $B_m(f)$ is concave. Since the approximation error of $B_m(f)$ is $0$ at the endpoints, we have that $B_m(f)>0$ for all $n$. Finally, since $e^{\ell}$ and $\ell'e^{\ell}$ are also continuous on $[0,1]$, we see that $\frac{B_m'(f)}{B_m(f)}$ converges uniformly to $\frac{\ell'e^{\ell}}{e^{\ell}}=\ell'$ on $[0,1]$. Note now that $(\log(B_m(f)))'=\frac{B_m'(f)}{B_m(f)}$. Hence for any $\epsilon>0$ we can find a corresponding $m_0$ such that for all $n\geq m_0$ we have $d(\log(B_m(f)),\ell)<\epsilon$. Fix such an $m \geq m_0$ and define $\ell_\epsilon:=\log(B_m(f))+a_m$, where $a_m$ is chosen so that $\ell_\epsilon(1/2)=0$. It remains to show that $\ell_\epsilon$ is exponentially concave whose derivative is Lipschitz. Clearly, $e^{\ell_\epsilon}$ is concave since it is a positive multiple of $B_m(f)$ which is concave. Moreover, there exists a constant $K_\epsilon$ such that:
\[
|\ell''_\epsilon|=|(\log(B_m(f)))''|=\left|\frac{B''_m(f)B_m(f)-(B'_m(f))^2}{(B_m(f))^2}\right|\leq K_\epsilon, \quad  x\in[0,1].
\]
The above follows from continuity on $[0,1]$ since the numerator and denominator are continuous with the latter strictly positive. This ensures the Lipschitz property of the derivative and taken together we conclude $\ell_\epsilon\in\bigcup_{\beta\geq0} \mathcal{E}_\beta$.
\end{proof}

\subsection{Lemma \ref{lem:log.value.estimate}}
\begin{proof}[Proof of Lemma \ref{lem:log.value.estimate}]
Using \eqref{eqn:fgp.weights} and \eqref{eqn:L.p.r}, we have that
\[
\frac{\boldsymbol{\pi}(\mathbf{u}) \cdot \mathbf{r}}{\mathbf{u} \cdot \mathbf{r}} = 1 + \frac{1}{n} \sum_{i, j = 1}^n \frac{u_i r_i}{\mathbf{u} \cdot \mathbf{r}} (\ell'(u_i) - \ell'(u_j)).
\]
By Lemma \ref{lem:beta.interpretation}, this quantity always lies in the interval $\left[e^{-\frac{2\sqrt{\beta}}{n}},1+\frac{\beta}{n^2}\right]$. Now let $\ell, \tilde{\ell} \in \mathcal{E}_{\beta}$ and let $\boldsymbol{\pi}$ and $\tilde{\boldsymbol{\pi}}$ be the corresponding portfolios. Since the map $x \mapsto \log (x)$ is Lipschitz on this interval with constant $e^{\frac{2\sqrt{\beta}}{n}} $, we have
\begin{equation*}
\begin{split}
&|\mathcal{L}(\mathbf{u},\mathbf{r} ; \ell) - \mathcal{L}(\mathbf{u},\mathbf{r} ; \tilde\ell)|\\
&\leq \frac{1}{n} e^{\frac{2\sqrt{\beta}}{n}} \left| \sum_{i, j = 1}^n \frac{u_i r_i}{\mathbf{u} \cdot \mathbf{r}}  [(\ell'(u_i) - \tilde{\ell}'(u_i)) - (\ell'(u_j) - \tilde{\ell}'(u_j))] \right| \\
&\leq \frac{1}{n} e^{\frac{2\sqrt{\beta}}{n}} \sum_{i, j = 1}^n \frac{u_i r_i}{\mathbf{u} \cdot \mathbf{r}} 2 d (\ell, \tilde{\ell}) = 2 e^{\frac{2\sqrt{\beta}}{n}}  d(\ell, \tilde{\ell}).
\end{split}
\end{equation*} 

\end{proof}

\subsection{Lemma \ref{lem:L.lip}}

The following result will be used in the proof of Lemma \ref{lem:L.lip}.

\begin{lemma}\label{lem:hilbert.log.dot.product}
	Let $\mathbf{x},\mathbf{y},\mathbf{x}',\mathbf{y}'\in\Delta_n$. Then
	\[
	\left|\log\left(\frac{\mathbf{x}\cdot \mathbf{y}}{\mathbf{x}'\cdot \mathbf{y}'}\right)\right|\leq d_\mathcal{H}(\mathbf{x},\mathbf{x}')+d_\mathcal{H}(\mathbf{y},\mathbf{y}').
	\]
\end{lemma}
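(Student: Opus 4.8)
The plan is to prove the inequality
\[
\left|\log\left(\frac{\mathbf{x}\cdot \mathbf{y}}{\mathbf{x}'\cdot \mathbf{y}'}\right)\right|\leq d_\mathcal{H}(\mathbf{x},\mathbf{x}')+d_\mathcal{H}(\mathbf{y},\mathbf{y}')
\]
by first reducing to a one-variable statement via the triangle inequality, namely
\[
\left|\log\left(\frac{\mathbf{x}\cdot \mathbf{y}}{\mathbf{x}'\cdot \mathbf{y}'}\right)\right| \le \left|\log\left(\frac{\mathbf{x}\cdot \mathbf{y}}{\mathbf{x}'\cdot \mathbf{y}}\right)\right| + \left|\log\left(\frac{\mathbf{x}'\cdot \mathbf{y}}{\mathbf{x}'\cdot \mathbf{y}'}\right)\right|,
\]
and then bounding each term separately. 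By symmetry it suffices to show that for any $\mathbf{x},\mathbf{x}',\mathbf{y}\in\Delta_n$ we have $\left|\log\left(\frac{\mathbf{x}\cdot \mathbf{y}}{\mathbf{x}'\cdot \mathbf{y}}\right)\right| \le d_\mathcal{H}(\mathbf{x},\mathbf{x}')$.

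For this one-variable bound, I would use the defining property of the Hilbert projective metric: writing $m = \min_i \frac{x_i}{x_i'}$ and $N = \max_i \frac{x_i}{x_i'}$, we have $m x_i' \le x_i \le N x_i'$ coordinatewise, and $d_\mathcal{H}(\mathbf{x},\mathbf{x}') = \log(N/m)$. Since $\mathbf{y}$ has non-negative entries, multiplying through by $y_i$ and summing gives $m\,(\mathbf{x}'\cdot\mathbf{y}) \le \mathbf{x}\cdot\mathbf{y} \le N\,(\mathbf{x}'\cdot\mathbf{y})$, hence
\[
m \le \frac{\mathbf{x}\cdot\mathbf{y}}{\mathbf{x}'\cdot\mathbf{y}} \le N .
\]
Taking logarithms yields $\log m \le \log\frac{\mathbf{x}\cdot\mathbf{y}}{\mathbf{x}'\cdot\mathbf{y}} \le \log N$. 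Now one observes that $m \le 1 \le N$ because $\mathbf{x}$ and $\mathbf{x}'$ are both probability vectors (if all ratios $x_i/x_i'$ were $>1$ then $1 = \sum_i x_i > \sum_i x_i' = 1$, a contradiction; similarly they cannot all be $<1$). Therefore both $\log N \ge 0$ and $-\log m \ge 0$, and since $\log N - \log m = d_\mathcal{H}(\mathbf{x},\mathbf{x}')$, each of $\log N$ and $-\log m$ is at most $d_\mathcal{H}(\mathbf{x},\mathbf{x}')$. Consequently $\left|\log\frac{\mathbf{x}\cdot\mathbf{y}}{\mathbf{x}'\cdot\mathbf{y}}\right| \le d_\mathcal{H}(\mathbf{x},\mathbf{x}')$ as desired. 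Applying this with the roles of the two factors exchanged for the second term (bounding $\left|\log\frac{\mathbf{x}'\cdot\mathbf{y}}{\mathbf{x}'\cdot\mathbf{y}'}\right| \le d_\mathcal{H}(\mathbf{y},\mathbf{y}')$) and combining with the triangle inequality finishes the proof.

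The main (minor) obstacle is the observation that $m \le 1 \le N$, which is what makes the bound on the \emph{absolute value} work rather than just a one-sided estimate; without it one would only get $\left|\log\frac{\mathbf{x}\cdot\mathbf{y}}{\mathbf{x}'\cdot\mathbf{y}}\right| \le \max(\log N, -\log m)$ without knowing this is controlled by $\log N - \log m$. This is where the fact that $\mathbf{x}, \mathbf{x}'$ lie in $\Delta_n$ (not merely in the positive orthant) is essential. Everything else is a routine monotonicity argument, and no properties of exponentially concave functions or of the portfolio maps are needed — this is a purely elementary lemma about the Hilbert metric and bilinear forms.
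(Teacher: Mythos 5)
Your proof is correct. It uses the same two essential ingredients as the paper's proof --- the sandwich of a ratio of weighted sums between the extremes of the coordinatewise ratios, and the observation that $\min_i x_i/x_i'\le 1\le \max_i x_i/x_i'$ because $\mathbf{x},\mathbf{x}'$ are both normalized --- but organizes them differently. The paper bounds the joint ratio directly via the mediant inequality $\min_i\frac{x_iy_i}{x_i'y_i'}\le\frac{\sum_i x_iy_i}{\sum_i x_i'y_i'}\le\max_i\frac{x_iy_i}{x_i'y_i'}$, then factors the extremes into a product of the $\mathbf{x}$- and $\mathbf{y}$-contributions, invoking the normalization at the end to symmetrize each factor up to $e^{d_\mathcal{H}}$. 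You instead decouple the two factors at the outset by inserting the cross term $\mathbf{x}'\cdot\mathbf{y}$ and applying the triangle inequality, which reduces everything to a one-variable statement; the normalization then enters in the same role, turning the one-sided sandwich $[\log m,\log N]$ into a bound on the absolute value by $\log N-\log m=d_\mathcal{H}(\mathbf{x},\mathbf{x}')$. The two routes are of comparable length; yours has the minor advantage of isolating a reusable single-factor estimate (and in fact yields the slightly sharper per-factor bound $\max(\log N,-\log m)$), while the paper's avoids introducing the intermediate dot product.
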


\begin{proof}
	Since the components of $\mathbf{x},\mathbf{y},\mathbf{x}',\mathbf{y}'$ are positive, we have
	\[
	\min_i\frac{x_i}{x_i'}\min_i\frac{y_i}{y_i'}\leq \min_i\frac{x_iy_i}{x_i'y_i'}\leq \frac{\sum_ix_iy_i}{\sum_ix_i'y_i'}\leq \max_i\frac{x_iy_i}{x_i'y_i'}\leq \max_i\frac{x_i}{x_i'}\max_i\frac{y_i}{y_i'}.
	\]
	Now since $\mathbf{x},\mathbf{x}'\in\Delta_n$, we have $\min_i\frac{x_i}{x'_i}\leq 1\leq \max_i\frac{x_i}{x'_i}$. As a consequence
	\[\max_i\frac{x_i}{x_i'}\max_i\frac{y_i}{y_i'}\leq \max_i\frac{x_i}{x_i'}\max_i\frac{x_i'}{x_i}\max_i\frac{y_i}{y_i'}\max_i\frac{y_i'}{y_i}\]
	and the analogous result holds for the lower bound. Since  $\min_i\frac{x_i}{x_i'}=\left(\max_i\frac{x_i'}{x_i}\right)^{-1}$, we have
	\[
	\left(\max_i\frac{x_i}{x_i'}\max_i\frac{x_i'}{x_i}\max_i\frac{y_i}{y_i'}\max_i\frac{y_i'}{y_i}\right)^{-1}\leq\frac{x\cdot y}{x'\cdot y'}\leq\max_i\frac{x_i}{x_i'}\max_i\frac{x_i'}{x_i}\max_i\frac{y_i}{y_i'}\max_i\frac{y_i'}{y_i}.
	\]
	The proof is completed by taking the logarithm.
\end{proof}

\begin{proof}[Proof of Lemma \ref{lem:L.lip}]
Let $(\mathbf{u}, \mathbf{r}), (\mathbf{u}', \mathbf{r}') \in \Delta_{n,\geq} \times \Delta_n$. We begin with
\begin{equation*}
\begin{split}
| \mathcal{L} (\mathbf{u}, \mathbf{r} ; \ell) - \mathcal{L} (\mathbf{u}', \mathbf{r}';\ell)| &= \left| \log \left( \frac{\boldsymbol{\pi}(\mathbf{u}) \cdot \mathbf{r}}{\mathbf{u} \cdot \mathbf{r}}\right) - \log \left( \frac{\boldsymbol{\pi}(\mathbf{u}') \cdot \mathbf{r}'}{\mathbf{u}'\cdot \mathbf{r}'}\right) \right| \\
  &\leq \left|  \log \left( \frac{\boldsymbol{\pi}(\mathbf{u}) \cdot \mathbf{r}}{\boldsymbol{\pi}(\mathbf{u}') \cdot \mathbf{r}'}  \right) \right| + \left| \log \left( \frac{\mathbf{u}\cdot \mathbf{r}}{\mathbf{u}' \cdot \mathbf{r}'} \right) \right| \\
  &\leq \left|  \log \left( \frac{\boldsymbol{\pi}(\mathbf{u}) \cdot \mathbf{r}}{\boldsymbol{\pi}(\mathbf{u}') \cdot \mathbf{r}'}  \right) \right|+d_\mathcal{H}(\mathbf{r},\mathbf{r}')+ d_\mathcal{H}(\mathbf{u},\mathbf{u}'),
\end{split}
\end{equation*}
where the last inequality follows from Lemma \ref{lem:hilbert.log.dot.product}. To deal with the first term we apply the log-sum inequality:
\begin{equation*}
\begin{split}
\log \left( \frac{\boldsymbol{\pi}(\mathbf{u}) \cdot \mathbf{r}}{\boldsymbol{\pi}(\mathbf{u}' \cdot \mathbf{r}'} \right) &\leq \frac{1}{\boldsymbol{\pi}(\mathbf{u}) \cdot \mathbf{r}} \sum_{k = 1}^n \boldsymbol{\pi}_k(\mathbf{u}) r_k \log \frac{\boldsymbol{\pi}_k(\mathbf{u}) r_k}{\boldsymbol{\pi}_k(\mathbf{u}') r_k'} \\  &\leq \sum_{k = 1}^n \frac{\boldsymbol{\pi}_k(\mathbf{u}) r_k}{\boldsymbol{\pi}(\mathbf{u}) \cdot \mathbf{r}} \left( | \log \boldsymbol{\pi}_k(\mathbf{u}) - \log \boldsymbol{\pi}_k(\mathbf{u}')  | + |\log r_k - \log r_k'| \right) \\
&\leq \| \log \boldsymbol{\pi}(\mathbf{u}) - \log \boldsymbol{\pi}(\mathbf{u}') \|_{\infty} + d_\mathcal{H}(\mathbf{r},\mathbf{r}'),
\end{split}
\end{equation*}
where $\log$ is applied to each component. A symmetric argument gives the corresponding lower bound. To see the last equality note that
\[
|\log r_i - \log r'_i|  \leq \log\max_i\frac{r_i}{r'_i}\max_j\frac{r'_j}{r_j}=d_\mathcal{H}(\mathbf{r},\mathbf{r}').
\]

Now fix $k$ and consider
\[
|\log \boldsymbol{\pi}_k (\mathbf{u}) - \log \boldsymbol{\pi}_k (\mathbf{u}')| \leq \left| \log \frac{\boldsymbol{\pi}_k (\mathbf{u})}{u_k} - \log \frac{\boldsymbol{\pi}_k (\mathbf{u}')}{u_k'} \right| + | \log u_k - \log u_k'|.
\]
For the latter term we again note that $\left|\log u_i - \log u_i'\right|\leq d_\mathcal{H}(\mathbf{u},\mathbf{u}')$. Using the Lipschitz continuity argument as in the proof of Lemma \ref{lem:log.value.estimate}, we have
\begin{equation*}
\begin{split}
&\left| \log \frac{\boldsymbol{\pi}_k (\mathbf{u})}{u_k} - \log \frac{\boldsymbol{\pi}_k (\mathbf{u}')}{u_k'} \right| \\
&\leq e^{\frac{2\sqrt{\beta}}{n}} \left| \frac{\boldsymbol{\pi}_k (\mathbf{u})}{u_k} - \frac{\boldsymbol{\pi}_k (\mathbf{u}')}{u_k'} \right| \\ &\leq \frac{1}{n}e^{\frac{2\sqrt{\beta}}{n}} \left( \left| \ell'(u_k) - \ell'(u_k') \right| + \sum_{m = 1}^n |u_m \ell'(u_m) - u_m' \ell'(u_m')| \right). 
\end{split}
\end{equation*}
Since $\ell$ is $\beta$-smooth, we have
\[
|\ell'(u_k) - \ell'(u_k')| \leq \beta | u_k - u_k' |  \leq \beta \|\mathbf{u} - \mathbf{u}'\|_1.
\]
Using Lemma \ref{lem:ell.upper.bound}, we can estimate the other term by
\begin{equation*}
\begin{split}
\sum_{m = 1}^n |u_m \ell'(u_m) - u_m' \ell'(u_m')| &\leq \sum_{m = 1}^n \left( u_m |\ell'(u_m) - \ell'(u_m')| + |\ell'(u_m')| |u_m - u_m' | \right) \\
  &\leq \beta \| \mathbf{u} - \mathbf{u}'\|_1 + \sqrt{\beta} \| \mathbf{u} - \mathbf{u}'\|_1.
\end{split}
\end{equation*}
Gathering all pieces together, we have
\[
| \mathcal{L} (\mathbf{u}, \mathbf{r} ; \ell) - \mathcal{L} (\mathbf{u}', \mathbf{r}'; \ell)| \leq 2\tilde{d}(\mathbf{r},\mathbf{r}') +
\left(2+\frac{\sqrt{\beta}+2\beta}{n}e^{\frac{2\sqrt{\beta}}{n}}\right) \tilde{d}(\mathbf{u},\mathbf{u}'),
\]
from which we may take $K_1(\beta,n):=2+\frac{\sqrt{\beta}+2\beta}{n}e^{\frac{2\sqrt{\beta}}{n}}$. 

Next, recall $\overline{\mathbf{e}}=(1/n,\dots,1/n)^\top$ and for $\ell(\mathbf{x}):=(\ell(x_1),\dots,\ell(x_n))^\top$ we have that
\begin{align*}
    \left|\mathbf{D}_\ell[\mathbf{u}\oplus\mathbf{r}: \mathbf{r}] - \mathbf{D}_\ell[\mathbf{u}'\oplus\mathbf{r}': \mathbf{r}']\right|&\leq \left|\overline{\mathbf{e}}\cdot\left(\ell(\mathbf{r}\oplus\mathbf{u})-\ell(\mathbf{r}'\oplus\mathbf{u}')\right)\right|+\left|\overline{\mathbf{e}}\cdot\left(\ell(\mathbf{u}')-\ell(\mathbf{u})\right)\right|\\
    &\leq \left|\overline{\mathbf{e}}\cdot\left(\ell(\mathbf{r}\oplus\mathbf{u})-\ell(\mathbf{r}'\oplus\mathbf{u}')\right)\right|+\frac{\sqrt{\beta}}{n}||\mathbf{u}-\mathbf{u}'||_1,
\end{align*}
where the last inequality follows from Lemma \ref{lem:ell.upper.bound}. Isolating the first term in the above and letting $\varphi_{\ell}$ be the exponentially concave function on $\Delta_n$ defined by $\ell$ we have
\begin{align*}
    \overline{\mathbf{e}}\cdot\left(\ell(\mathbf{r}\oplus\mathbf{u})-\ell(\mathbf{r}'\oplus\mathbf{u}')\right)&=\varphi_\ell(\mathbf{r}\oplus\mathbf{u})-\varphi_\ell(\mathbf{r}'\oplus\mathbf{u}')\\
    &\leq \log\left(1+\nabla\varphi_\ell(\mathbf{r}'\oplus\mathbf{u}')\cdot (\mathbf{r}\oplus\mathbf{u}-\mathbf{r}'\oplus\mathbf{u}')\right)\\
    &=\log\left(\sum_{i=1}^n\boldsymbol{\pi}_i(\mathbf{r}'\oplus\mathbf{u}')\frac{(\mathbf{r}\oplus\mathbf{u})_i}{(\mathbf{r}'\oplus\mathbf{u}')_i}\right),
\end{align*}
where $(\cdot)_i$ selects the $i$th index of the vectors. Here we have used that $\mathbf{r}\oplus\mathbf{u},\mathbf{r}'\oplus\mathbf{u}'\in\Delta_n$, the non-negativity of \eqref{eqn:L.divergence}, and \eqref{eqn:rank.based.return}. Now,
\begin{align*}
    \log\left(\sum_{i=1}^n\boldsymbol{\pi}_i(\mathbf{r}'\oplus\mathbf{u}')\frac{(\mathbf{r}\oplus\mathbf{u})_i}{(\mathbf{r}'\oplus\mathbf{u}')_i}\right)&=\log\left(\sum_{i=1}^n\boldsymbol{\pi}_i(\mathbf{r}'\oplus\mathbf{u}')\frac{r_iu_i}{r_i'u_i'}\frac{\mathbf{r}'\cdot\mathbf{u}'}{\mathbf{r}\cdot\mathbf{u}}\right)\\
    &= \log\left(\sum_{i=1}^n\boldsymbol{\pi}_i(\mathbf{r}'\oplus\mathbf{u}')\frac{r_iu_i}{r_i'u_i'}\right)+\log\left(\frac{\mathbf{r}'\cdot\mathbf{u}'}{\mathbf{r}\cdot\mathbf{u}}\right)\\
    &\leq \log\left(\max_i\frac{r_iu_i}{r_i'u_i'}\right)+\log\left(\frac{\mathbf{r}'\cdot\mathbf{u}'}{\mathbf{r}\cdot\mathbf{u}}\right)\\
    &\leq\log\left(\max_i\frac{r_i}{r_i'}\max_j\frac{r_j'}{r_j}\max_k\frac{u_k}{u_k'}\max_l\frac{u_l'}{u_l}\right)\\
    & \quad \quad \quad \quad \quad \quad \quad \quad+ d_\mathcal{H}(\mathbf{r},\mathbf{r}') +d_\mathcal{H}(\mathbf{u},\mathbf{u}')\\
    &=2d_\mathcal{H}(\mathbf{r},\mathbf{r}')+2d_\mathcal{H}(\mathbf{u},\mathbf{u}').
\end{align*}
The first inequality follows from the monotonicity of $\log(\cdot)$ and bounding the average using the positive weights $\boldsymbol{\pi}_i(\cdot)$ from above with the maximum. The second inequality follows by an application of
Lemma \ref{lem:hilbert.log.dot.product} and the arguments therein. A symmetric argument gives the identical lower bound
\begin{align*}\overline{\mathbf{e}}\cdot\left(\ell(\mathbf{r}\oplus\mathbf{u})-\ell(\mathbf{r}'\oplus\mathbf{u}')\right)&=\varphi_\ell(\mathbf{r}\oplus\mathbf{u})-\varphi_\ell(\mathbf{r}'\oplus\mathbf{u}')\\
&=-(\varphi_\ell(\mathbf{r}'\oplus\mathbf{u}')-\varphi_\ell(\mathbf{r}\oplus\mathbf{u}))\\
&\geq-\log\left(\sum_{i=1}^n\boldsymbol{\pi}_i(\mathbf{r}\oplus\mathbf{u})\frac{(\mathbf{r}'\oplus\mathbf{u}')_i}{(\mathbf{r}\oplus\mathbf{u})_i}\right)\\
&\geq-2d_\mathcal{H}(\mathbf{r},\mathbf{r}')-2d_\mathcal{H}(\mathbf{u},\mathbf{u}').
\end{align*}
Collecting terms as before allows us to take $K_0(\beta,n):=2+\frac{\sqrt{\beta}}{n}$.
\end{proof}

\subsection{Proposition \ref{prop:construction}}
To prove Proposition \ref{prop:construction} we will need several preliminary results. We begin by introducing some notations and constructions. Recall $\mathcal{P}$ is a partition (containing $\frac{1}{2}$) with mesh size $\delta=\max_i|x_{i+1}-x_i|$ and minimal mesh size $\underline{\delta}:=\min_i|x_{i+1}-x_i|$. Throughout we make the assumption that there exists a fixed $M>0$ such that $|\delta-\underline{\delta}|\leq M\underline{\delta}^3$. When needed we will refer to the length of the $i$th interval as $\delta_i:=|x_{i+1}-x_{i}|$ for $i=1,...,d-1$. Let $\hat{\boldsymbol{\ell}} : [0, 1] \rightarrow \mathbb{R}$ be a concave piecewise affine function over $\mathcal{P}$ (i.e., $\hat{\boldsymbol{\ell}}|_{[x_i, x_{i + 1}]}$ is affine for all $i$), such that the vector $\boldsymbol{\ell}=(\boldsymbol{\ell}_i)_{i=1}^d=(\hat{\boldsymbol{\ell}}(x_i))_{i=1}^d$ satisfies the constraints of Problem \ref{prob:discretized} on $\mathcal{P} = \{x_i\}_{i = 1}^d$. We define $\boldsymbol{f}$ by $\boldsymbol{f}=e^{\boldsymbol{\ell}} = (e^{\boldsymbol{\ell}_i})$ and let $f$ be the concave and piecewise affine function that interpolates $\boldsymbol{f}$ over $\mathcal{P}$. Note that $f$,  $\log(f)$ and $\hat{\boldsymbol{\ell}}$ are differentiable except possibly at the mesh points. At the mesh points $x_i$ for concreteness we define the derivative to be the right derivative for $i < d$ and the left derivative for $i = d$. 

Additionally, given $\boldsymbol{f}$ we define the piecewise quadratic function $s(x)$ by
\begin{equation}\label{eqn:s.def}
\begin{split}
s(x) &:= \frac{f(x_i)+ f(x_i-\underline{\delta})}{2} + \frac{f(x_i) - f(x_i-\underline{\delta})}{\underline{\delta}} \left(x - \frac{(x_i-\underline{\delta}) + x_i}{2}\right) + \\
&\quad + \frac{f(x_i+\underline{\delta}) - 2f(x_i) +f(x_i-\underline{\delta})}{2\underline{\delta}^2} \left(x - \frac{(x_i-\underline{\delta})+ x_i}{2}\right)^2, \quad x \in I_i,
\end{split}
\end{equation}
and
\[
I_i:=\left[\frac{(x_{i}-\underline{\delta})+x_i}{2},\frac{x_{i}+(x_{i}+\underline{\delta})}{2}\right]=\left[x_{i}-\frac{\underline{\delta}}{2},x_{i}+\frac{\underline{\delta}}{2}\right],
\]
for $i=2,...,d-1$ and $s(x)=f(x)$ otherwise (i.e. it is linear on the end intervals $[0,x_2-\frac{\underline{\delta}}{2}],[x_{d-1}+\frac{\underline{\delta}}{2},1]$ and the intermediate intervals $J_i:=(x_{i}+\frac{\underline{\delta}}{2},x_{i+1}-\frac{\underline{\delta}}{2})$ for $i=2...,d-2$). It is easy to verify that $s$ is $C^1$, concave and matches the function value and derivative of $f$ at the endpoints of $I_i$. Moreover, $s(x)=f(x)$ on the intermediate intervals $J_i$ not accounted for by $I_i$, and the end intervals. As a result, it is also strictly positive. Note here that $J_i=\emptyset$ if and only if $|x_{i+1}-x_i|=\underline{\delta}$.

Where appropriate we will make the dependence on $\mathcal{P}$ explicit by including a subscript (as in $\boldsymbol{\ell}_{\mathcal{P}}$ or $s_{\mathcal{P}}(x)$) for the functions and vectors above. Finally, for $0<\alpha<1$ we define the $\ell_{\alpha,\mathcal{P}}$ (candidate for Proposition \ref{prop:construction}) on a given partition $\mathcal{P}$ by 
\[
\ell_{\alpha,\mathcal{P}}(x) =\log(s_{\mathcal{P}}(x)+\underline{\delta}^\alpha)+a_{\mathcal{P}},
\]
where $a_{\mathcal{P}}$ is a constant chosen so that $\ell_{\alpha,\mathcal{P}}(1/2)=0$. Clearly $\ell_{\alpha, \mathcal{P}}$ is exponentially concave. As will be seen below, the term $\underline{\delta}^{\alpha}$ is added to ensure $\beta$-smoothness when $\underline{\delta}$ is sufficiently small. Much of the analysis below will require estimates of the derivatives of $\log(f(x))$, $\log(s(x))$ and $\log(s(x)+\underline{\delta}^\alpha)$. For this we will make regular use of the following standard Taylor expansion as well as Lemma \ref{lem:vector.bounds}.

\begin{lemma} For $z\in\mathbb{R}$ and fixed $k_0 \in\mathbb{N}$ there exists a $\xi\in[-|z|,|z|]$ such that
\[e^z=\sum_{k=0}^n\frac{z^k}{k!}+\frac{e^{\xi}}{(n+1)!} z^{n+1}.\]
Similarly, for $r\geq1$ and $|z|<1$ there exists a $\xi\in[-|z|,|z|]$ such that
\[\frac{1}{(1+z)^r}=1-rz+\frac{(r+1)r}{2(1+\xi)^{r+2}}z^2. \]
\end{lemma}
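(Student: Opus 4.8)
The plan is to obtain each of the two displayed identities as a direct application of Taylor's theorem with the Lagrange form of the remainder; no further ingredients are needed.

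For the exponential expansion, I would take $h(z) = e^z$, which is infinitely differentiable on $\mathbb{R}$ with $h^{(k)}(0) = 1$ for all $k \geq 0$ and $h^{(n+1)}(w) = e^w$ for every $w$. Taylor's theorem applied to $h$ about the origin, at order $n$, produces a point $\xi$ lying strictly between $0$ and $z$ such that
\[
e^z = \sum_{k=0}^n \frac{h^{(k)}(0)}{k!} z^k + \frac{h^{(n+1)}(\xi)}{(n+1)!} z^{n+1} = \sum_{k=0}^n \frac{z^k}{k!} + \frac{e^\xi}{(n+1)!} z^{n+1}.
\]
Since $\xi$ lies between $0$ and $z$ we have $|\xi| \leq |z|$, i.e. $\xi \in [-|z|, |z|]$, which gives the first claim.

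For the second identity, I would set $g(z) = (1+z)^{-r}$, which is $C^\infty$ on $(-1, \infty)$ and hence on $(-1,1)$. Differentiating, $g(0) = 1$, $g'(z) = -r(1+z)^{-r-1}$ so $g'(0) = -r$, and $g''(z) = r(r+1)(1+z)^{-r-2}$. Taylor's theorem to first order with second-order remainder then yields a $\xi$ between $0$ and $z$ with
\[
g(z) = g(0) + g'(0) z + \frac{g''(\xi)}{2} z^2 = 1 - rz + \frac{r(r+1)}{2(1+\xi)^{r+2}} z^2,
\]
which is the asserted formula. The hypothesis $|z| < 1$ is used only to guarantee that $\xi$, and hence $1 + \xi > 0$, stays inside the domain on which $g$ and its derivatives are defined.

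There is no substantive obstacle: the statement is precisely Taylor's theorem specialized to two elementary functions. The only points requiring a moment's attention are the sign bookkeeping for $g'$ and $g''$ and the observation that the intermediate point $\xi$ always satisfies $|\xi| \le |z|$ regardless of the sign of $z$, which is what lets us state the remainder in terms of $\xi \in [-|z|,|z|]$.
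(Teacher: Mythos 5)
Your proof is correct and is exactly the intended argument: the paper states this as a ``standard Taylor expansion'' and omits the proof, but the Lagrange-remainder form of Taylor's theorem applied to $e^z$ about $0$ at order $n$, and to $(1+z)^{-r}$ about $0$ at order $1$, is precisely what is meant (note the statement's ``$k_0$'' is a typo for the $n$ appearing in the formula). Your observation that the intermediate point satisfies $|\xi|\le|z|$ and that $|z|<1$ keeps $1+\xi>0$ in the domain of $g$ covers the only points needing care.
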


To be concise these tedious but elementary computations will be omitted where possible. We will also occasionally mention that derivatives (and their bounds) can be extended to one-sided derivatives at mesh points. Where applicable, the justification for these statements is the following two standard lemmas.

\begin{lemma} \label{lem:extension.pt1}
For a function $f$ if $f^{(k)}$ is defined on $[x_0,x_1]$ with $f^{(k)}$ differentiable on the open interval $(x_0,x_1)$ then
$\lim_{x\downarrow x_0}f^{(k+1)}(x)=f^{(k+1)}_+(x_0)$
and 
$\lim_{x\uparrow x_1}f^{(k+1)}(x)=f^{(k+1)}_-(x_1)$ if the former limits exist. Here $f^{(k+1)}_\pm(x)$ denotes the one-sided $(k+1)$-th derivatives at $x$.
\end{lemma}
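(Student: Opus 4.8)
The plan is to prove Lemma~\ref{lem:extension.pt1} by a single application of the mean value theorem, treating the left endpoint $x_1$ in exactly the same way as the right endpoint $x_0$; I will spell out only the argument at $x_0$.

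First I would fix $x \in (x_0, x_1)$ and apply the mean value theorem to $g := f^{(k)}$ on $[x_0, x]$. This is legitimate because $g$ is defined, hence continuous, on $[x_0, x_1]$ and differentiable on $(x_0, x_1)$, so there is a point $c_x \in (x_0, x)$ with
\[
\frac{g(x) - g(x_0)}{x - x_0} = g'(c_x) = f^{(k+1)}(c_x).
\]
Next I would let $x \downarrow x_0$. Since $x_0 < c_x < x$, the squeeze theorem gives $c_x \downarrow x_0$, and hence, by the assumed existence of $\lim_{y \downarrow x_0} f^{(k+1)}(y)$, the right-hand side converges to that limit. The left-hand side converges, by the definition of the right derivative, to $f^{(k+1)}_+(x_0)$; in particular this one-sided derivative exists and equals $\lim_{x \downarrow x_0} f^{(k+1)}(x)$, which is the first identity. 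Running the same argument with the mean value theorem applied on $[x, x_1]$ and letting $x \uparrow x_1$ yields $f^{(k+1)}_-(x_1) = \lim_{x \uparrow x_1} f^{(k+1)}(x)$.

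There is essentially no obstacle in this argument. The only delicate point is that the mean value theorem on $[x_0, x]$ requires continuity of $g = f^{(k)}$ up to the endpoint $x_0$, which we regard as part of the meaning of ``$f^{(k)}$ is defined on $[x_0, x_1]$''. If one prefers not to assume this outright, continuity at $x_0$ can be recovered afterwards: the existence of $\lim_{y \downarrow x_0} f^{(k+1)}(y)$ makes $f^{(k+1)}$ bounded on some interval $(x_0, x_0 + \varepsilon)$, so the difference quotients of $f^{(k)}$ there are bounded by the mean value theorem, i.e.\ $f^{(k)}$ is Lipschitz on $(x_0, x_0 + \varepsilon)$ and therefore extends continuously to $x_0$.
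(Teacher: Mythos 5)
Your proof is correct: the paper states Lemma~\ref{lem:extension.pt1} as a standard fact and omits its proof entirely, and the mean value theorem argument you give (producing $c_x\in(x_0,x)$ with the difference quotient of $f^{(k)}$ equal to $f^{(k+1)}(c_x)$ and letting $x\downarrow x_0$) is exactly the canonical justification. Your remark on continuity of $f^{(k)}$ at the endpoint is the right point to flag, and your fallback via boundedness of $f^{(k+1)}$ near $x_0$ resolves it; nothing is missing.
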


\begin{lemma}\label{lem:extension.pt2}
If a function $f$ defined on $[x_0,x_1]$ has a bounded derivative of order $k+1$ on the open interval $(x_0,x_1)$ then
$\lim_{x\downarrow x_0}f^{(k)}(x)$
and 
$\lim_{x\uparrow x_1}f^{(k)}(x)$ exist.
\end{lemma}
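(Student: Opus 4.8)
The plan is to observe that this is a routine consequence of the mean value theorem: boundedness of $f^{(k+1)}$ on the open interval forces $f^{(k)}$ to be Lipschitz there, and a Lipschitz function on a bounded interval automatically has one-sided limits at the endpoints by the Cauchy criterion. Since the lemma is elementary, the proof will be short; I record it here for completeness.

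First I would set $B := \sup_{x \in (x_0,x_1)} |f^{(k+1)}(x)|$, which is finite by hypothesis (and note that $f^{(k)}$ is in particular defined throughout $(x_0,x_1)$, since $f^{(k+1)}$ exists there). For any $a,b \in (x_0,x_1)$ with, say, $a < b$, the compact subinterval $[a,b]$ lies inside the open interval, so the mean value theorem applied to $f^{(k)}$ produces $\xi \in (a,b)$ with $f^{(k)}(b) - f^{(k)}(a) = f^{(k+1)}(\xi)\,(b-a)$, whence $|f^{(k)}(b) - f^{(k)}(a)| \le B\,|b-a|$. Thus $f^{(k)}$ is $B$-Lipschitz on $(x_0,x_1)$.

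Next I would deduce the existence of the limits. Take any sequence $(y_n) \subset (x_0,x_1)$ with $y_n \downarrow x_0$. Then $(y_n)$ is Cauchy, and the Lipschitz bound gives $|f^{(k)}(y_m) - f^{(k)}(y_n)| \le B\,|y_m - y_n| \to 0$, so $(f^{(k)}(y_n))$ is Cauchy in $\mathbb{R}$ and hence convergent; moreover, if $(z_n)$ is another such sequence, interleaving $(y_n)$ and $(z_n)$ shows the two limits agree. Hence $\lim_{x \downarrow x_0} f^{(k)}(x)$ exists, and the symmetric argument (using sequences $z_n \uparrow x_1$) gives the existence of $\lim_{x \uparrow x_1} f^{(k)}(x)$.

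There is essentially no obstacle here; the only point meriting a moment's care is that the hypothesis only provides $f^{(k+1)}$ on the open interval $(x_0,x_1)$, so the mean value theorem must be invoked on compact subintervals $[a,b] \subset (x_0,x_1)$ rather than on $[x_0,x_1]$ itself — but this suffices to establish the Lipschitz estimate on $(x_0,x_1)$, which is all that the Cauchy-criterion argument at the endpoints requires.
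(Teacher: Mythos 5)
Your proposal is correct and follows essentially the same route as the paper: the paper likewise deduces from the boundedness of $f^{(k+1)}$ that $f^{(k)}$ is Lipschitz, hence uniformly continuous on $(x_0,x_1)$, and then extends it to the closed interval (which is exactly the Cauchy-criterion argument you write out in detail). Your version merely makes explicit the mean value theorem step and the sequence/interleaving argument that the paper leaves implicit.
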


\begin{proof}
By the bound on $f^{(k+1)}(x)$ we have that $f^{(k)}(x)$ is Lipshitz on $(x_0,x_1)$ and hence, uniformly continuous on the open interval. The uniform continuity allows us to attain a unique extension of $f^{(k)}$ to $[x_0,x_1]$.
\end{proof}

Finally, we will be able to extend smoothness properties where only the left/right second derivatives exist by the following.

\begin{lemma}\label{lem:beta.smooth.twicediff.a.e} If $\ell$ is concave and continuously differentiable on $[0,1]$ with $\ell''_+(x)\geq -\beta$ and $\ell''_-(x)\geq -\beta$ for all $x$ then $\ell$ is $\beta$-smooth.
\end{lemma}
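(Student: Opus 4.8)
## Proof plan for Lemma \ref{lem:beta.smooth.twicediff.a.e}

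The plan is to reduce the statement to the ordinary mean value theorem applied to $\ell'$, using only the one-sided second derivative bounds. The key observation is that $\beta$-smoothness means $|\ell'(x) - \ell'(y)| \le \beta|x-y|$, and since $\ell$ is concave, $\ell'$ is non-increasing; so for $x < y$ it suffices to show $\ell'(x) - \ell'(y) \le \beta(y-x)$, i.e. that $\ell'$ does not decrease faster than rate $\beta$. Equivalently, the function $g(x) := \ell'(x) + \beta x$ should be shown to be non-decreasing on $[0,1]$.

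First I would note that $g$ is continuous on $[0,1]$ (since $\ell \in C^1$) and that its one-sided derivatives are $g'_{\pm}(x) = \ell''_{\pm}(x) + \beta \ge 0$ for every $x$, by hypothesis. The core step is then a standard real-analysis fact: a continuous function on an interval all of whose one-sided (say, right) derivatives are non-negative is non-decreasing. I would prove this cleanly by a contradiction/compactness argument: if $g(b) < g(a)$ for some $a < b$, pick $\epsilon > 0$ small enough that $g(b) < g(a) - \epsilon(b-a)$, consider $h(x) = g(x) - g(a) + \epsilon(x-a)$, and let $c = \sup\{x \in [a,b] : h(x) \le 0\}$. By continuity $h(c) \le 0$ and $c < b$ (since $h(b) < 0$ would need checking — actually $h(b) = g(b)-g(a)+\epsilon(b-a) < 0$, so $c = b$; instead take $c = \sup\{x : h \le 0 \text{ on } [a,x]\}$ and argue $h_+'(c) \ge \epsilon > 0$ forces $h < 0$ slightly past $c$ if $c < b$, while $h(c) \le 0$ and $h(b) < 0$ give a contradiction once $c$ reaches $b$). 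Rather than belabor this, I would simply invoke that $g'_\pm \ge 0$ everywhere plus continuity implies monotonicity — this is a textbook fact (e.g. via the standard lemma that $f$ continuous with $f'_+ \ge 0$ on $[a,b]$ implies $f(a) \le f(b)$, proved by the $\epsilon$-perturbation trick above).

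With $g(x) = \ell'(x) + \beta x$ non-decreasing, for any $x \le y$ in $[0,1]$ we get $\ell'(x) + \beta x \le \ell'(y) + \beta y$, hence $\ell'(x) - \ell'(y) \le \beta(y-x)$. Symmetrically, applying the same reasoning to $\tilde g(x) := -\ell'(x) + \beta x$, whose one-sided derivatives are $-\ell''_\mp(x) + \beta$ — here I should be slightly careful about which one-sided derivative appears, but concavity gives $\ell''_\pm \le 0$ wherever they exist, so $-\ell''_\pm(x) \ge 0$ and thus $\tilde g'_\pm(x) \ge 0$ as well — we obtain $\ell'(y) - \ell'(x) \le \beta(y - x)$. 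Combining, $|\ell'(x) - \ell'(y)| \le \beta|x-y|$, which is exactly $\beta$-smoothness.

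The only mild obstacle is the monotonicity-from-one-sided-derivatives lemma, which must be stated and used carefully because $\ell''$ is only assumed to exist one-sidedly; but this is a completely standard fact about continuous functions, so in the write-up I would either cite it or dispatch it in two lines with the $\epsilon$-perturbation argument. Everything else is routine.
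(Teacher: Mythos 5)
Your proof is correct. The only quibbles are cosmetic: the one-sided derivatives of $\tilde g(x)=-\ell'(x)+\beta x$ are $-\ell''_{\pm}(x)+\beta$ (negation does not swap left and right), and in fact the whole second half with $\tilde g$ is redundant, since concavity already gives $\ell'(y)-\ell'(x)\le 0\le\beta(y-x)$ for $x\le y$ — which you yourself note when you invoke $\ell''_{\pm}\le 0$.

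Your route differs mildly from the paper's. The paper takes the mean-value-theorem path: it forms the auxiliary function $H(x)=\frac{\ell'(x_1)-\ell'(x_0)}{x_1-x_0}\,x-\ell'(x)$, observes $H(x_0)=H(x_1)$, and applies a generalized Rolle's theorem (at an interior extremum $c$ one of the one-sided derivatives of $H$ has a favorable sign, yielding $\frac{\ell'(x_1)-\ell'(x_0)}{x_1-x_0}\ge\ell''_{\pm}(c)\ge-\beta$). You instead take the monotonicity path: $g(x)=\ell'(x)+\beta x$ has non-negative one-sided derivatives, hence is non-decreasing, hence $\ell'(x)-\ell'(y)\le\beta(y-x)$. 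The two underlying facts — generalized Rolle for one-sided derivatives, and monotonicity from non-negative one-sided derivatives — are essentially equivalent pieces of standard real analysis, and both proofs defer the technical content to such a textbook lemma; your $\epsilon$-perturbation sketch of that lemma is the standard argument and is fine. If anything, your version is slightly more self-contained since you sketch the key lemma rather than merely citing "a generalization of Rolle's theorem," while the paper's version handles both the upper and lower Lipschitz bounds in one stroke via the secant slope. Either write-up would serve.
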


\begin{proof}
Since $\ell$ is concave, $\ell'$ is monotone decreasing. We also have by assumption that $\ell'$ is continuous on $[0,1]$. The rest of the proof proceeds along standard lines. Namely, we choose any $0\leq x_0<x_1\leq 1$ and define $h'(x)=\frac{\ell'(x_1)-\ell'(x_0)}{x_1-x_0}x-\ell'(x)$. Since $h'(x_0)=h'(x_1)$, a generalization of Rolle's theorem gives the result after rearranging.
\end{proof}

We will now establish an approximation error result that will readily translate to our analysis of $\ell_{\alpha,\mathcal{P}}$.

\begin{lemma}\label{lem:approx.of.l.by.salpha}
There exists a $K>0$ depending only on $\beta$ and $M$ such that for $0<\alpha<1$ and any ${\underline{\delta}}>0$
\begin{equation}\label{eqn:approx.logsalpha.l}
    \left|(\log(s_{\mathcal{P}}(x)+\underline{\delta}^\alpha))'-\hat{\boldsymbol{\ell}}'_\mathcal{P}(x)\right|\leq K\underline{\delta}^\alpha.
\end{equation}
\end{lemma}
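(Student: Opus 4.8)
The plan is to reduce the estimate \eqref{eqn:approx.logsalpha.l} to three pieces via the triangle inequality:
\[
\left|(\log(s_{\mathcal{P}}(x)+\underline{\delta}^\alpha))'-\hat{\boldsymbol{\ell}}'_\mathcal{P}(x)\right|
\leq
\left|(\log(s_{\mathcal{P}}(x)+\underline{\delta}^\alpha))'-(\log s_{\mathcal{P}}(x))'\right|
+\left|(\log s_{\mathcal{P}}(x))'-(\log f_{\mathcal{P}}(x))'\right|
+\left|(\log f_{\mathcal{P}}(x))'-\hat{\boldsymbol{\ell}}'_{\mathcal{P}}(x)\right|.
\]
The third term is identically zero off the mesh points, since $\log f_{\mathcal{P}}$ interpolates $\boldsymbol{\ell}$ linearly in $e^{\boldsymbol{\ell}}$ — wait, more carefully: $\hat{\boldsymbol{\ell}}_{\mathcal{P}}$ is piecewise affine in $\boldsymbol{\ell}_i$, whereas $\log f_{\mathcal{P}}$ is the log of the piecewise affine interpolation of $e^{\boldsymbol{\ell}_i}$, so these differ on each subinterval. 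I would instead handle the last two terms together by comparing $\log s_{\mathcal{P}}$ and $\log f_{\mathcal{P}}$ directly with $\hat{\boldsymbol{\ell}}_{\mathcal{P}}$ on each piece. The key quantitative inputs are Lemma \ref{lem:vector.bounds}, which gives $|\boldsymbol{\ell}_{i+1}-\boldsymbol{\ell}_i|\leq\sqrt{\beta}\,\delta_i$ and $-\tfrac{\sqrt\beta}{2}\leq\boldsymbol{\ell}_i\leq\log 2$, hence $e^{-\sqrt\beta/2}\leq\boldsymbol f_i\leq 2$; so $f_{\mathcal{P}}$, $s_{\mathcal{P}}$ and $s_{\mathcal{P}}+\underline{\delta}^\alpha$ are all bounded above by $2$ and below by a positive constant depending only on $\beta$. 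This makes $\log(\cdot)$ Lipschitz on the relevant range with a $\beta$-dependent constant, so all estimates on $\log$-derivatives reduce to estimates on the derivatives of $f_{\mathcal{P}}$, $s_{\mathcal{P}}$ themselves.

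For the first term: since $s_{\mathcal{P}}$ is bounded below by $c(\beta)>0$, we have
\[
(\log(s+\underline{\delta}^\alpha))'-(\log s)' = s'\left(\frac{1}{s+\underline{\delta}^\alpha}-\frac1s\right) = \frac{-s'\,\underline{\delta}^\alpha}{s(s+\underline{\delta}^\alpha)},
\]
and $|s'|$ is controlled by the slopes of $f_{\mathcal{P}}$, which in turn are bounded (after dividing by the $\boldsymbol f_i\geq e^{-\sqrt\beta/2}$ denominators) by a constant times $\max_i|\boldsymbol{\ell}_{i+1}-\boldsymbol{\ell}_i|/\delta_i\leq\sqrt\beta$; so this term is $O(\underline{\delta}^\alpha)$ with constant depending only on $\beta$. (One must check on the quadratic pieces $I_i$ of \eqref{eqn:s.def} that $s'$ stays bounded; this follows because the second-difference coefficient $\tfrac{f(x_i+\underline{\delta})-2f(x_i)+f(x_i-\underline{\delta})}{2\underline{\delta}^2}$ multiplied by the interval length $\underline{\delta}/2$ contributes $O(\sqrt\beta)$ to $s'$ by concavity of $f$ and the $\beta$-smooth-type discrete constraint of Problem \ref{prob:discretized}.)

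For the remaining comparison of $(\log s_{\mathcal{P}})'$ (or $(\log f_{\mathcal{P}})'$) with $\hat{\boldsymbol{\ell}}'_{\mathcal{P}}$: on a subinterval $[x_i,x_{i+1}]$ the function $\hat{\boldsymbol{\ell}}_{\mathcal{P}}$ has constant slope $\Delta_i\boldsymbol{\ell}=(\boldsymbol{\ell}_{i+1}-\boldsymbol{\ell}_i)/\delta_i$, whereas $(\log f_{\mathcal{P}})'(x)=f_{\mathcal{P}}'(x)/f_{\mathcal{P}}(x)=(\boldsymbol f_{i+1}-\boldsymbol f_i)/(\delta_i f_{\mathcal{P}}(x))$. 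Writing $\boldsymbol f_j=e^{\boldsymbol{\ell}_j}$ and using the Taylor expansion lemma on $e^{\boldsymbol{\ell}_{i+1}-\boldsymbol{\ell}_i}-1$ together with $|\boldsymbol{\ell}_{i+1}-\boldsymbol{\ell}_i|\leq\sqrt\beta\,\delta_i\leq\sqrt\beta\,\delta$, one finds $\big|(\log f_{\mathcal{P}})'(x)-\Delta_i\boldsymbol{\ell}\big|=O(\delta)$; and separately $|\log s_{\mathcal{P}}-\log f_{\mathcal{P}}|$-derivative bounds follow since $s_{\mathcal{P}}=f_{\mathcal{P}}$ off the short intervals $I_i$ of length $\underline{\delta}$, and on $I_i$ the deviation $|s_{\mathcal{P}}-f_{\mathcal{P}}|$ and its derivative are $O(\underline{\delta}\cdot\sqrt\beta)$ by concavity. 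Since $\delta\leq\underline{\delta}+M\underline{\delta}^3\leq(1+M)\underline{\delta}$ for $\underline{\delta}\leq1$, every $O(\delta)$ bound is also $O(\underline{\delta})\leq O(\underline{\delta}^\alpha)$ for $0<\alpha<1$, and the various constants depend only on $\beta$ and $M$. Collecting the three pieces gives \eqref{eqn:approx.logsalpha.l}. One-sided derivatives at the mesh points are handled by Lemmas \ref{lem:extension.pt1} and \ref{lem:extension.pt2}.

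The main obstacle I anticipate is the bookkeeping on the quadratic "bridge'' pieces $I_i$ in the definition \eqref{eqn:s.def} of $s_{\mathcal{P}}$: one must verify that $s_{\mathcal{P}}'$ there does not blow up (this uses that the discrete second differences of $\boldsymbol f$ are $O(\sqrt\beta\,\underline{\delta})$, which comes from combining concavity of $f_{\mathcal{P}}$ with the $\beta$-smooth constraint and Lemma \ref{lem:vector.bounds}), and that the "almost uniform'' hypothesis $|\delta-\underline{\delta}|\leq M\underline{\delta}^3$ is exactly what makes the intermediate intervals $J_i$ short enough that no slope mismatch of order worse than $\underline{\delta}^\alpha$ accumulates. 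Everything else is a routine application of the Taylor lemma together with the uniform positive lower bound $e^{-\sqrt\beta/2}$ on $f_{\mathcal{P}}$.
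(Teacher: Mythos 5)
Your proposal follows essentially the same route as the paper's proof: the identical triangle-inequality decomposition into the three comparisons $(\log(s+\underline{\delta}^\alpha))'$ vs.\ $(\log s)'$, $(\log s)'$ vs.\ $(\log f)'$, and $(\log f)'$ vs.\ $\hat{\boldsymbol{\ell}}'$, each handled with the same inputs (the bounds of Lemma \ref{lem:vector.bounds}, the Taylor expansion of $e^{\boldsymbol{\ell}_{i+1}-\boldsymbol{\ell}_i}$, concavity to sandwich derivatives, the matching of $s'$ and $f'$ at the endpoints of $I_i$ to control $s'$ on the quadratic bridges, and the observation that $\underline{\delta}\leq\underline{\delta}^\alpha$ for $\underline{\delta}\leq 1$). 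The obstacle you flag on the $I_i$ pieces is resolved exactly as you anticipate, via the discrete second-difference estimate \eqref{eqn:f.der.diff.estimate}.
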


\begin{proof}
In what follows the subscript $\mathcal{P}$ will be omitted and it is understood that the functions are all defined with respect to the same fixed partition $\mathcal{P}$. 

We now build up the approximation error \eqref{eqn:approx.logsalpha.l} in parts. We first aim to show 
\begin{equation}\label{eqn:approx.logf.l}|(\log(f(x)))'-\hat{\boldsymbol{\ell}}'(x)|\leq \frac{1}{2}e^{\sqrt{\beta}}\beta(M+1)\underline{\delta}, \ \ \ x\in[0,1]. 
\end{equation}
Note that on $(x_i,x_{i+1})$ we have $\hat{\boldsymbol{\ell}}'(x)=\frac{\boldsymbol{\ell}_{i+1}-\boldsymbol{\ell}_i}{x_{i+1}-x_i}$. By our constraints $f(x)$ is concave and then so is $\log(f(x))$. Hence the derivatives on $(x_i,x_{i+1})$ satisfy
\[(\log(f(x_i)))'_+\geq (\log(f(x)))'\geq (\log(f(x_{i+1}))'_-, \ \ \ x\in(x_i,x_{i+1}).\]
Now by Taylor's remainder theorem there is a $\xi_1\in[-|\boldsymbol{\ell}_{i+1}-\boldsymbol{\ell}_i|,|\boldsymbol{\ell}_{i+1}-\boldsymbol{\ell}_i|]$ such that
\[(\log(f(x_i)))'_+=\frac{e^{\boldsymbol{\ell}_{i+1}-\boldsymbol{\ell}_i}-1}{x_{i+1}-x_i}=\frac{\boldsymbol{\ell}_{i+1}-\boldsymbol{\ell}_i}{x_{i+1}-x_i}+\frac{1}{2}e^{\xi_1}\frac{(\boldsymbol{\ell}_{i+1}-\boldsymbol{\ell}_i)^2}{x_{i+1}-x_i},\]
and a $\xi_2\in[-|\boldsymbol{\ell}_{i+1}-\boldsymbol{\ell}_i|,|\boldsymbol{\ell}_{i+1}-\boldsymbol{\ell}_i|]$ such that
\[(\log(f(x_{i+1})))'_-=\frac{1-e^{-(\boldsymbol{\ell}_{i+1}-\boldsymbol{\ell}_i)}}{x_{i+1}-x_i}=\frac{\boldsymbol{\ell}_{i+1}-\boldsymbol{\ell}_i}{x_{i+1}-x_i}-\frac{1}{2}e^{\xi_2}\frac{(\boldsymbol{\ell}_{i+1}-\boldsymbol{\ell}_i)^2}{x_{i+1}-x_i}.\]
Hence for $x\in(x_i,x_{i+1})$, we have
\[|(\log(f(x)))'-\hat{\boldsymbol{\ell}}'(x)|\leq \frac{1}{2}e^\xi\frac{(\boldsymbol{\ell}_{i+1}-\boldsymbol{\ell}_i)^2}{x_{i+1}-x_i}, \ \ \ \mathrm{for \ some} \   \xi\in[-|\boldsymbol{\ell}_{i+1}-\boldsymbol{\ell}_i|,|\boldsymbol{\ell}_{i+1}-\boldsymbol{\ell}_i|].\]
By Lemma \ref{lem:vector.bounds}, $|\boldsymbol{\ell}_{i+1}-\boldsymbol{\ell}_i|\leq\sqrt{\beta}\delta_i$ and
\[
|(\log(f(x)))'-\hat{\boldsymbol{\ell}}'(x)|\leq \frac{1}{2}e^{\sqrt{\beta}\delta_i}\beta\delta_i\leq \frac{1}{2}e^{\sqrt{\beta}}\beta(M+1)\underline{\delta}, \ \ \ x\in(x_i,x_{i+1}),
\]
where we have used $\delta_i\leq 1$. Given our choice of derivative at the mesh points the above applies to the entire interval $[x_i,x_{i+1}]$ and since the arguments were independent of $i$ we get that this extends to $[0,1]$ as claimed in \eqref{eqn:approx.logf.l}.

We next want to show
\begin{equation} \label{eqn:approx.s.f}
\|s' - f'\|_{\infty} \leq 2\beta(1+e^{\sqrt{\beta}})(M+1)\underline{\delta} \quad \text{and} \quad \|s - f\|_{\infty} \leq \beta(1+e^{\sqrt{\beta}})(M+1)\underline{\delta}^2.
\end{equation}
Since $f=s$ on $J_i$ and the end intervals it suffices to consider the intervals $I_i$ to show \eqref{eqn:approx.s.f}. First let us note that by concavity and Taylor's remainder theorem alongside Lemma \ref{lem:vector.bounds} and our constraints on $\boldsymbol{\ell}$ that we have:
\begin{equation}\label{eqn:f.der.diff.estimate}
0\geq\frac{\frac{f_{i+1}-f_i}{\delta_{i}}-\frac{f_{i}-f_{i-1}}{\delta_{i-1}}}{\frac{\delta_{i}+\delta_{i-1}}{2}}\geq -2\beta(1+e^{\sqrt{\beta}}).
\end{equation}
Let $I_i$ be fixed and $x \in I_i$. By concavity and the enforced equality of $s'$ and $f'$ at the endpoints, we have
\begin{equation*}
\begin{split}
s'(x) - f'(x) &\leq s'\left(x_i-\frac{\underline{\delta}}{2}\right) - f'\left(x_i+\frac{\underline{\delta}}{2}\right) \\
  &= f'\left(x_i-\frac{\underline{\delta}}{2}\right) - f'\left(x_i+\frac{\underline{\delta}}{2}\right) \\
  &\leq 2\beta(1+e^{\sqrt{\beta}})(M+1)\underline{\delta}.
\end{split}
\end{equation*}
where the last inequality is from \eqref{eqn:f.der.diff.estimate} and the bound on $\delta_i,\delta_{i-1}$.
By a similar argument on the lower bound, we have $\|s' - f'\|_{\infty} \leq 2\beta(1+e^{\sqrt{\beta}})(M+1)\underline{\delta}$. Finally, for $x \in I_i$, let $z_i$ be an endpoint of $I_i$ such that $|x - z_i| \leq \frac{\delta}{2}$. Since $s(z_i) = f(z_i)$ by construction we can use our first estimate to get the second estimate in \eqref{eqn:approx.s.f} by integration.

Our third claim is
\begin{equation}\label{eqn:approx.logs.logf}|(\log(s(x)))'-(\log(f(x)))'|\leq 2\beta(1+e^{\sqrt{\beta}})(M+1)\left( e^{\frac{\sqrt{\beta}}{2}}\underline{\delta}+\sqrt{\beta}e^{2\sqrt{\beta}}\underline{\delta}^2\right).
\end{equation}
If $x\not\in\mathcal{P}$ then:
\begin{align*}
    |(\log(s(x)))'-(\log(f(x)))'|&=\frac{|s'(x)f(x)-f'(x)s(x)|}{s(x)f(x)}\\
    &\leq\frac{|s'(x)-f'(x)|}{s(x)}+\frac{|f'(x)||f(x)-s(x)|}{s(x)f(x)}\\
    &\leq \frac{2\beta(1+e^{\sqrt{\beta}})(M+1)\underline{\delta}}{s(x)}+\frac{|f'(x)|\beta(1+e^{\sqrt{\beta}})(M+1)\underline{\delta}^2}{s(x)f(x)}.
\end{align*}
Now since $f(x)$ attains a minimum at $x_1=0$ or $x_d=1$ by concavity and $f(x)=s(x)$ at the endpoints we have by $f(x_i)=e^{\boldsymbol{\ell}_i}\in \left[e^{-\frac{\sqrt{\beta}}{2}},2\right]$ (see Lemma \ref{lem:vector.bounds}) that:
\begin{equation}\label{eqn:partial.approx.logs.logf}
    |(\log(s(x)))'-(\log(f(x)))'|\leq (M+1) \beta(1+e^{\sqrt{\beta}}) \left(2e^{\frac{\sqrt{\beta}}{2}}\underline{\delta}+|f'(x)|e^{\sqrt{\beta}}\underline{\delta}^2\right).
\end{equation}
We now treat the term involving $f'(x)$. We have that for a given $x\not\in\mathcal{P}$ that there exists an $i$ such that
\[f'(x)=\frac{e^{\boldsymbol{\ell}_{i+1}}-e^{\boldsymbol{\ell}_{i}}}{x_{i+1}-x_i}.\]
By Taylor's remainder theorem and Lemma \ref{lem:vector.bounds} we then get
\begin{equation}\label{eqn:diff.f.bd}
|f'(x)|\leq 2e^{\sqrt{\beta}\delta_i}\sqrt{\beta}\leq 2e^{\sqrt{\beta}}\sqrt{\beta},
\end{equation}
which gives \eqref{eqn:approx.logs.logf} on $\mathcal{P}$ in light of \eqref{eqn:partial.approx.logs.logf}. The above is extended to the mesh points $x\in\mathcal{P}$ by noting that the left (right) limits of the above derivatives exist and so this will coincide with the left (right) derivative at the mesh points (see Lemmas \ref{lem:extension.pt1} and \ref{lem:extension.pt2}). Finally we show
\begin{equation}\label{eqn:approx.logsalpha.logs}
\begin{split}
    |(\log(s(x)+\underline{\delta}^\alpha))'&-(\log(s(x)))'|\\
    &\leq 2\sqrt{\beta}e^{2\sqrt{\beta}}\left(\underline{\delta}^\alpha+\sqrt{\beta}(1+e^{-\sqrt{\beta}})(M+1)\underline{\delta}^{1+\alpha}\right).
\end{split}
\end{equation}
By direct computation we have
\[|(\log(s(x)+\underline{\delta}^\alpha))'-(\log(s(x)))'|=\left|\frac{s'(x)}{(s(x)+\underline{\delta}^\alpha)s(x)}\right|\underline{\delta}^\alpha\leq \left|\frac{s'(x)}{s(x)^2}\right|\underline{\delta}^\alpha\]
since $s(x)>0$. Arguing as above, by Lemma \ref{lem:vector.bounds} we have that
\[|(\log(s(x)+\underline{\delta}^\alpha))'-(\log(s(x)))'|\leq e^{\sqrt{\beta}}\left|s'(x)\right|\underline{\delta}^\alpha .\]
It remains to bound $s'(x)$. By \eqref{eqn:diff.f.bd} and \eqref{eqn:approx.s.f} we have
\[|s'(x)|\leq 2e^{\sqrt{\beta}}\sqrt{\beta}+2\beta(1+e^{\sqrt{\beta}})(M+1)\underline{\delta},\]
so taking the above together gives \eqref{eqn:approx.logsalpha.logs} after collecting terms. To complete the proof of \eqref{eqn:approx.logsalpha.l} we can collect the estimates \eqref{eqn:approx.logf.l}, \eqref{eqn:approx.logs.logf}, \eqref{eqn:approx.logsalpha.logs} and apply the triangle inequality after using the fact that if $k\geq 1$, $\underline{\delta}^\alpha\geq \underline{\delta}^k$ for $\underline{\delta}\in[0,1]$ since $0<\alpha<1$. 

\end{proof}

We will now introduce a function $g=\log(f)$ and get some error estimates that will be used later in Lemma \ref{lem:interior.est} by writing the form of $s(x)$ given in \eqref{eqn:s.def} equivalently in terms of $e^g$.

\begin{lemma}\label{lem:g.approx.errors}
Let $g=\log(f)$. If $|\delta-\underline{\delta}|\leq M\underline{\delta}^{3}$ then
\[\left|\frac{g(x_i+\underline{\delta})-g(x_i)}{\underline{\delta}}-\frac{\boldsymbol{\ell}_{i+1}-\boldsymbol{\ell}_i}{\delta_{i}}\right| \leq \left(\sqrt{\beta}+ 2\sqrt{\beta}e^{\frac{3\sqrt{\beta}}{2}}\right)M\underline{\delta}^{2},\]
\[\left|\frac{g(x_i)-g(x_i-\underline{\delta})}{\underline{\delta}}-\frac{\boldsymbol{\ell}_{i}-\boldsymbol{\ell}_{i-1}}{\delta_{i-1}}\right| \leq\left(\sqrt{\beta}+ 2\sqrt{\beta}e^{\frac{3\sqrt{\beta}}{2}}\right)M\underline{\delta}^{2}\]
and
\[\left|\frac{g(x_i+\underline{\delta})-2g(x_i)+g(x_i-\underline{\delta})}{\underline{\delta}^2}-\frac{\frac{\boldsymbol{\ell}_{i+1}-\boldsymbol{\ell}_i}{\delta_{i}}-\frac{\boldsymbol{\ell}_{i}-\boldsymbol{\ell}_{i-1}}{\delta_{i-1}}}{\frac{\delta_{i}+\delta_{i-1}}{2}}\right| \leq\beta M\underline{\delta}^{2}+2\sqrt{\beta}\left(1+ 2e^{\frac{3\sqrt{\beta}}{2}}\right)M\underline{\delta}.\]
\end{lemma}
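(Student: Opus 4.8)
The plan is to argue one interior node $x_i$ at a time, reducing everything to the two adjacent sub-intervals. Since $\underline{\delta}=\min_j\delta_j$, both $x_i-\underline{\delta}\in[x_{i-1},x_i]$ and $x_i+\underline{\delta}\in[x_i,x_{i+1}]$, and on each of these sub-intervals $f$ is affine and strictly positive, so $g=\log f$ has the closed form
\[
g(x_i+s)-g(x_i)=\log\!\Bigl(1+\tfrac{e^{\boldsymbol{\ell}_{i+1}-\boldsymbol{\ell}_i}-1}{\delta_i}\,s\Bigr),\qquad s\in[0,\delta_i],
\]
with the analogous formula on $[x_{i-1},x_i]$. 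From Lemma \ref{lem:vector.bounds} we record $\boldsymbol{f}_j=e^{\boldsymbol{\ell}_j}\in[e^{-\sqrt{\beta}/2},2]$ and $|\boldsymbol{\ell}_{j+1}-\boldsymbol{\ell}_j|\le\sqrt{\beta}\,\delta_j$. Since $g$ is monotone and $C^1$ on each sub-interval, these give $|g(x_i+\underline{\delta})-g(x_i)|\le|g(x_{i+1})-g(x_i)|=|\boldsymbol{\ell}_{i+1}-\boldsymbol{\ell}_i|\le\sqrt{\beta}\,\delta_i$, and (using $|f'|\le 2\sqrt{\beta}e^{\sqrt{\beta}}$ together with $f\ge e^{-\sqrt{\beta}/2}$ there, because $f$ equals the chord) the derivative bound $|g'|=|f'|/f\le 2\sqrt{\beta}\,e^{3\sqrt{\beta}/2}$ on those sub-intervals.

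For the first estimate the key observation is that a naive Taylor remainder is only $O(\underline{\delta})$, so I would instead telescope so the linear part cancels exactly: writing $\tfrac{\boldsymbol{\ell}_{i+1}-\boldsymbol{\ell}_i}{\delta_i}=\tfrac{g(x_{i+1})-g(x_i)}{\delta_i}$ and splitting off $g(x_i+\underline{\delta})$,
\[
\frac{g(x_i+\underline{\delta})-g(x_i)}{\underline{\delta}}-\frac{\boldsymbol{\ell}_{i+1}-\boldsymbol{\ell}_i}{\delta_i}=\bigl(g(x_i+\underline{\delta})-g(x_i)\bigr)\Bigl(\tfrac1{\underline{\delta}}-\tfrac1{\delta_i}\Bigr)-\frac{g(x_{i+1})-g(x_i+\underline{\delta})}{\delta_i}.
\]
The first term is at most $\sqrt{\beta}\,\delta_i\cdot\tfrac{\delta_i-\underline{\delta}}{\underline{\delta}\,\delta_i}=\sqrt{\beta}\,\tfrac{\delta_i-\underline{\delta}}{\underline{\delta}}\le\sqrt{\beta}M\underline{\delta}^{2}$, using $\delta_i-\underline{\delta}\le\delta-\underline{\delta}\le M\underline{\delta}^{3}$; the second is $\le\tfrac1{\delta_i}\int_{x_i+\underline{\delta}}^{x_{i+1}}|g'|\le\tfrac{\delta_i-\underline{\delta}}{\underline{\delta}}\,2\sqrt{\beta}e^{3\sqrt{\beta}/2}\le 2\sqrt{\beta}e^{3\sqrt{\beta}/2}M\underline{\delta}^{2}$. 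Adding the two gives the stated constant $\sqrt{\beta}+2\sqrt{\beta}e^{3\sqrt{\beta}/2}$; the second estimate is proved verbatim with $[x_i,x_{i+1}]$ replaced by $[x_{i-1},x_i]$ and $\boldsymbol{\ell}_{i+1}-\boldsymbol{\ell}_i$ by $\boldsymbol{\ell}_i-\boldsymbol{\ell}_{i-1}$.

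For the third estimate I would combine the first two. Put $A=\tfrac{g(x_i+\underline{\delta})-g(x_i)}{\underline{\delta}}$, $B=\tfrac{g(x_i)-g(x_i-\underline{\delta})}{\underline{\delta}}$, $A'=\tfrac{\boldsymbol{\ell}_{i+1}-\boldsymbol{\ell}_i}{\delta_i}$, $B'=\tfrac{\boldsymbol{\ell}_i-\boldsymbol{\ell}_{i-1}}{\delta_{i-1}}$, so the left side is $\bigl|\tfrac{A-B}{\underline{\delta}}-\tfrac{A'-B'}{(\delta_i+\delta_{i-1})/2}\bigr|$, bounded by
\[
\frac{|A-A'|+|B-B'|}{\underline{\delta}}+\frac{|A'-B'|}{(\delta_i+\delta_{i-1})/2}\cdot\frac{|(\delta_i+\delta_{i-1})/2-\underline{\delta}|}{\underline{\delta}}.
\]
By the first two estimates the first term is $\le\tfrac2{\underline{\delta}}(\sqrt{\beta}+2\sqrt{\beta}e^{3\sqrt{\beta}/2})M\underline{\delta}^{2}=2\sqrt{\beta}(1+2e^{3\sqrt{\beta}/2})M\underline{\delta}$. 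For the second term: the exponential-concavity constraint makes the piecewise-linear interpolant of $(\boldsymbol{f}_j)$ concave, hence its logarithm is concave on $[0,1]$, so the discrete slopes $\tfrac{\boldsymbol{\ell}_{j+1}-\boldsymbol{\ell}_j}{\delta_j}$ are non-increasing in $j$, giving $A'\le B'$; together with the $\beta$-smoothness constraint $A'-B'\ge-\tfrac{\beta}{2}(x_{i+1}-x_{i-1})$ this yields $|A'-B'|\le\tfrac{\beta}{2}(\delta_i+\delta_{i-1})$, whence $\tfrac{|A'-B'|}{(\delta_i+\delta_{i-1})/2}\le\beta$; and $|(\delta_i+\delta_{i-1})/2-\underline{\delta}|\le M\underline{\delta}^{3}$ since $\delta_i,\delta_{i-1}\in[\underline{\delta},\underline{\delta}+M\underline{\delta}^{3}]$. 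So the second term is $\le\beta M\underline{\delta}^{2}$, which gives the claimed bound.

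The calculations are elementary; the one conceptual point is the telescoping rearrangement, which is what forces the linear terms to cancel so that only the $O(\underline{\delta}^{3})$ step-size mismatch $|\delta_i-\underline{\delta}|$ and the short adjacent piece of the same length contribute — each producing, after dividing by $\underline{\delta}$, an $O(\underline{\delta}^{2})$ error (respectively $O(\underline{\delta})$ for the second difference). The main bookkeeping obstacle is keeping the $\beta$-dependence honest, in particular pinning down $|g'|\le 2\sqrt{\beta}e^{3\sqrt{\beta}/2}$ on the relevant sub-intervals and checking that no smallness of $\underline{\delta}$ is used anywhere in the three bounds.
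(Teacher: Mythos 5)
Your proof is correct and follows essentially the same route as the paper: both arguments split the error into the denominator mismatch $|1/\underline{\delta}-1/\delta_i|$ (controlled by $|\boldsymbol{\ell}_{i+1}-\boldsymbol{\ell}_i|\le\sqrt{\beta}\delta_i$ and $|\delta_i-\underline{\delta}|\le M\underline{\delta}^3$) plus the discrepancy $|g(x_i+\underline{\delta})-g(x_{i+1})|$ on the leftover piece, which the paper bounds by the Lipschitz constant of $\log$ on $[e^{-\sqrt{\beta}/2},2]$ times $|f'|\,|\underline{\delta}-\delta_i|$ and you bound by integrating $|g'|=|f'|/f$ --- the same estimate with the same constant $2\sqrt{\beta}e^{3\sqrt{\beta}/2}$. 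The third estimate is likewise handled identically, combining the first two bounds with the constraint-derived inequality $|A'-B'|\le\tfrac{\beta}{2}(\delta_i+\delta_{i-1})$.
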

\begin{proof}
For the first two estimates we treat the forward difference and the backward difference is similar. Since $\boldsymbol{\ell}_{i}=\log(f(x_i))=g(x_i)$ we have:
\[\left|g(x_i+\underline{\delta})-g(x_i)-(\boldsymbol{\ell}_{i+1}-\boldsymbol{\ell}_i)\right|=\left|g(x_i+\underline{\delta})-\boldsymbol{\ell}_{i+1}\right|\leq e^{\frac{\sqrt{\beta}}{2}}|f(x_i+\underline{\delta})-f(x_{i+1})|\]
since $f\in [e^{-\frac{\sqrt{\beta}}{2}},2]$ (see Lemma \ref{lem:vector.bounds}) and $\log(\cdot)$ is Lipschitz on this domain. Then by estimate \eqref{eqn:diff.f.bd} we get
\[e^{\frac{\sqrt{\beta}}{2}}|f(x_i+\underline{\delta})-f(x_{i+1})|\leq2\sqrt{\beta}e^{\frac{3\sqrt{\beta}}{2}}|\underline{\delta}-\delta_i|\leq 2\sqrt{\beta}e^{\frac{3\sqrt{\beta}}{2}}M\underline{\delta}^{3}.\]
Hence,
\begin{align*}
    \left|\frac{g(x_i+\underline{\delta})-g(x_i)}{\underline{\delta}}-\frac{\boldsymbol{\ell}_{i+1}-\boldsymbol{\ell}_i}{\delta_{i}}\right|&\leq\left|\frac{\boldsymbol{\ell}_{i+1}-\boldsymbol{\ell}_i}{\underline{\delta}}-\frac{\boldsymbol{\ell}_{i+1}-\boldsymbol{\ell}_i}{\delta_{i}}\right|+ 2\sqrt{\beta}e^{\frac{3\sqrt{\beta}}{2}}M\underline{\delta}^{2}\\
    &\leq|\boldsymbol{\ell}_{i+1}-\boldsymbol{\ell}_i|\left|\frac{\delta_i-\underline{\delta}}{\underline{\delta}\delta_i}\right|+ 2\sqrt{\beta}e^{\frac{3\sqrt{\beta}}{2}}M\underline{\delta}^{2}\\
    &\leq \sqrt{\beta}M\underline{\delta}^{2}+ 2\sqrt{\beta}e^{\frac{3\sqrt{\beta}}{2}}M\underline{\delta}^{2}.
\end{align*}
For the third estimate we have by using the earlier estimates and our constraints on $\boldsymbol{\ell}$ that
\begin{align*}
&\left|\frac{g(x_i+\underline{\delta})-2g(x_i)+g(x_i-\underline{\delta})}{\underline{\delta}^2}-\frac{\frac{\boldsymbol{\ell}_{i+1}-\boldsymbol{\ell}_i}{\delta_{i}}-\frac{\boldsymbol{\ell}_{i}-\boldsymbol{\ell}_{i-1}}{\delta_{i-1}}}{\frac{\delta_{i}+\delta_{i-1}}{2}}\right|\\
&\quad \quad \leq \left|\frac{\boldsymbol{\ell}_{i+1}-\boldsymbol{\ell}_i}{\delta_{i}}-\frac{\boldsymbol{\ell}_{i}-\boldsymbol{\ell}_{i-1}}{\delta_{i-1}}\right|\left|\frac{\delta_i+\delta_{i-1}-2\underline{\delta}}{\underline{\delta}(\delta_i+\delta_{i-1})}\right|+2\left(\sqrt{\beta}+ 2\sqrt{\beta}e^{\frac{3\sqrt{\beta}}{2}}\right)M\underline{\delta}\\
&\quad \quad \leq \beta M\underline{\delta}^{2}+2\left(\sqrt{\beta}+ 2\sqrt{\beta}e^{\frac{3\sqrt{\beta}}{2}}\right)M\underline{\delta}.
\end{align*}
\end{proof}

Next, we turn to establishing $\beta$-smoothness of $\log(s_{\mathcal{P}}(x)+\underline{\delta}^\alpha)$ on the interior of the intervals $I_i$, $J_i$ and the interior of the end intervals when the minimal grid spacing is sufficiently small. We begin by getting estimates of $s_{\mathcal{P}}(x),s_{\mathcal{P}}'(x)$ and $s_{\mathcal{P}}''(x)$ (the latter of which exists on the interior of these intervals).

\begin{lemma} \label{lem:interior.est}
Consider a partition $\mathcal{P}$ and any of the intervals $I_i$. Let $g=\log(f)$ and define $\mathbf{g}^i_{+}=g(x_{i}+\underline{\delta})$, $\mathbf{g}^i_0=g(x_{i})$, and $\mathbf{g}^i_{-}=g(x_{i}-\underline{\delta})$. For any $x\in \mathring{I}_i$ we reparameterize $x=x_{i}+(k-1)\underline{\delta}$ for $k\in(\frac{1}{2},\frac{3}{2})$. With this we have:
\begin{equation*}
\begin{split}
s(x_{i}+(k-1)\underline{\delta})&=e^{\mathbf{g}^i_{-}}\left(1+\mathcal{H}_{i,0}(k,\underline{\delta})\right),\\
s'(x_{i}+(k-1)\underline{\delta})&=e^{\mathbf{g}^i_{-}}\bigg(\frac{\left(\mathbf{g}^i_{0}-\mathbf{g}^i_{-}\right)}{\underline{\delta}}+\mathcal{H}_{i,1}(k,\underline{\delta})\bigg)\\
&=e^{\mathbf{g}^i_{-}}\bigg(\frac{\left(\boldsymbol{\ell}_i-\boldsymbol{\ell}_{i-1}\right)}{\delta_{i-1}}+\mathcal{R}_{i,1}(k,\underline{\delta})\bigg),\\
s''(x_{i}+(k-1)\underline{\delta})&=e^{\mathbf{g}^i_{-}}\left(\frac{\mathbf{g}^i_{+}-2\mathbf{g}^i_{0}+\mathbf{g}^i_{-}}{\underline{\delta}^2}+\left(\frac{\mathbf{g}^i_{0}-\mathbf{g}^i_{-}}{\underline{\delta}}\right)^2+\mathcal{H}_{i,2}(k,\underline{\delta})\right)\\
&=e^{\mathbf{g}^i_{-}}\left(\frac{\frac{\boldsymbol{\ell}_{i+1}-\boldsymbol{\ell}_i}{\delta_{i}}-\frac{\boldsymbol{\ell}_{i}-\boldsymbol{\ell}_{i-1}}{\delta_{i-1}}}{\frac{\delta_{i}+\delta_{i-1}}{2}}+\left(\frac{\boldsymbol{\ell}_i-\boldsymbol{\ell}_{i-1}}{\delta_{i-1}}\right)^2+\mathcal{R}_{i,2}(k,\underline{\delta})\right)
\end{split}
\end{equation*}
for functions $\mathcal{H}_{i,j}:\left(\frac{1}{2},\frac{3}{2}\right)\times [0,1]\mapsto \mathbb{R}$, $\mathcal{R}_{i,j}:\left(\frac{1}{2},\frac{3}{2}\right)\times [0,1]\mapsto \mathbb{R}$  $j=0,1,2$ satisfying:
\begin{equation*}
\begin{split}
|\mathcal{R}_{i,0}(k,\underline{\delta})|&=|\mathcal{H}_{i,0}(k,\underline{\delta})|\leq \left(\frac{3}{4}e^{\sqrt{\beta}}+e^{2\sqrt{\beta}}\right)\sqrt{\beta}\underline{\delta}, \\
|\mathcal{H}_{i,1}(k,\underline{\delta})|&\leq \left(\beta\left[1+\frac{1}{2}e^{\sqrt{\beta}}+2e^{2\sqrt{\beta}}+M\right]+2\sqrt{\beta}\left[1+2e^{\frac{3}{2}\sqrt{\beta}}\right]M\right)\underline{\delta}, \\
|\mathcal{R}_{i,1}(k,\underline{\delta})|&\leq|\mathcal{H}_{i,1}(k,\underline{\delta})|+\left(\sqrt{\beta}+ 2\sqrt{\beta}e^{\frac{3\sqrt{\beta}}{2}}\right)M\underline{\delta}^{2},\\
|\mathcal{H}_{i,2}(k,\underline{\delta})|&\leq \left(\frac{1}{2}A(\beta,M)^2+2\sqrt{\beta}A(\beta,M)+\frac{1}{6}e^{2\sqrt{\beta}}\beta^{\frac{3}{2}}+\frac{1}{3}e^{\sqrt{\beta}}\beta^{\frac{3}{2}}\right)\underline{\delta},\\
|\mathcal{R}_{i,2}(k,\underline{\delta})|&\leq|\mathcal{H}_{i,2}(k,\underline{\delta})|+\beta M\underline{\delta}^{2}+2\sqrt{\beta}\left(1+ 2e^{\frac{3\sqrt{\beta}}{2}}\right)M\underline{\delta}\\
& \quad + 2\sqrt{\beta}\left(\sqrt{\beta}+ 2\sqrt{\beta}e^{\frac{3\sqrt{\beta}}{2}}\right)M\underline{\delta}^{2} +\left(\sqrt{\beta}+ 2\sqrt{\beta}e^{\frac{3\sqrt{\beta}}{2}}\right)^2M^2\underline{\delta}^{4}.
\end{split}
\end{equation*}
for all $i$ and $k\in\left(\frac{1}{2},\frac{3}{2}\right)$ where $A(\beta,M):=\beta +\beta M+2\left(\sqrt{\beta}+ 2\sqrt{\beta}e^{\frac{3\sqrt{\beta}}{2}}\right)M$.
\end{lemma}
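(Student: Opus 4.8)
The plan is to substitute the reparametrization $x = x_i + (k-1)\underline{\delta}$ directly into the definition \eqref{eqn:s.def} of $s$, differentiate, and then Taylor-expand the exponentials to peel off the claimed main terms. Under this substitution one has $x - \tfrac{(x_i - \underline{\delta}) + x_i}{2} = (k - \tfrac12)\underline{\delta}$, so writing $f(x_i - \underline{\delta}) = e^{\mathbf{g}^i_-}$, $f(x_i) = e^{\mathbf{g}^i_0}$, $f(x_i + \underline{\delta}) = e^{\mathbf{g}^i_+}$ and factoring out $e^{\mathbf{g}^i_-}$, the function $s$ and its first two derivatives (obtained by differentiating the quadratic in $x$, each differentiation contributing a factor $\underline{\delta}^{-1}$) become exact expressions in $e^{\mathbf{g}^i_0 - \mathbf{g}^i_-}$, $e^{\mathbf{g}^i_+ - \mathbf{g}^i_-}$ and $(k - \tfrac12)$. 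Setting $a := \mathbf{g}^i_0 - \mathbf{g}^i_-$ and $b := \mathbf{g}^i_+ - \mathbf{g}^i_-$ and invoking the exponential Taylor expansion $e^z = 1 + z + \tfrac12 e^{\xi}z^2$ (to third order for $s''$), the linear and quadratic Taylor coefficients reassemble, via $b - 2a = \mathbf{g}^i_+ - 2\mathbf{g}^i_0 + \mathbf{g}^i_-$ and $b = a + (\mathbf{g}^i_+ - \mathbf{g}^i_0)$, into exactly the stated main terms; I would then \emph{define} $\mathcal{H}_{i,j}$ to be the leftover, and define $\mathcal{R}_{i,j}$ by further replacing $\tfrac{\mathbf{g}^i_0 - \mathbf{g}^i_-}{\underline{\delta}}$ and $\tfrac{\mathbf{g}^i_+ - 2\mathbf{g}^i_0 + \mathbf{g}^i_-}{\underline{\delta}^2}$ by the one-sided $\boldsymbol{\ell}$-difference quotients $\tfrac{\boldsymbol{\ell}_i - \boldsymbol{\ell}_{i-1}}{\delta_{i-1}}$ and $\tfrac{\frac{\boldsymbol{\ell}_{i+1}-\boldsymbol{\ell}_i}{\delta_i} - \frac{\boldsymbol{\ell}_i-\boldsymbol{\ell}_{i-1}}{\delta_{i-1}}}{(\delta_i+\delta_{i-1})/2}$, absorbing the substitution errors into the remainder; for the square $(\tfrac{\mathbf{g}^i_0-\mathbf{g}^i_-}{\underline{\delta}})^2$ versus $(\tfrac{\boldsymbol{\ell}_i-\boldsymbol{\ell}_{i-1}}{\delta_{i-1}})^2$ I would use difference-of-squares, the first factor controlled by Lemma \ref{lem:g.approx.errors} and the second by Lemma \ref{lem:vector.bounds}.

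The bounds then follow by feeding four families of estimates into these remainder expressions: (i) $e^{\mathbf{g}^i_-}, e^{\xi} \le 2e^{\sqrt{\beta}/2}$, since $\mathbf{g}^i_-, \mathbf{g}^i_0 \in [-\tfrac{\sqrt\beta}{2}, \log 2]$ by Lemma \ref{lem:vector.bounds} ($\mathbf{g}^i_- = \log f(x_i - \underline{\delta})$ with $f(x_i-\underline{\delta})$ a convex combination of $\boldsymbol{f}_{i-1}, \boldsymbol{f}_i$); (ii) $|a|, |\mathbf{g}^i_+ - \mathbf{g}^i_0| \le 2\sqrt{\beta}e^{3\sqrt\beta/2}\underline{\delta}$, obtained because $f$ is affine on each mesh interval with $|f'| \le 2\sqrt{\beta}e^{\sqrt\beta}$ (estimate \eqref{eqn:diff.f.bd}) and $f \ge e^{-\sqrt\beta/2}$, so $g = \log f$ carries the corresponding slope bound there; (iii) the crucial second-difference bound $|\mathbf{g}^i_+ - 2\mathbf{g}^i_0 + \mathbf{g}^i_-| \le A(\beta,M)\underline{\delta}^2$, which is exactly Lemma \ref{lem:g.approx.errors}'s third inequality combined with the fact that the discrete $\beta$-smoothness and exponential-concavity constraints of Problem \ref{prob:discretized} force $\tfrac{\frac{\boldsymbol{\ell}_{i+1}-\boldsymbol{\ell}_i}{\delta_i}-\frac{\boldsymbol{\ell}_i-\boldsymbol{\ell}_{i-1}}{\delta_{i-1}}}{(\delta_i+\delta_{i-1})/2} \in [-\beta, 0]$; (iv) $|\tfrac{\boldsymbol{\ell}_i-\boldsymbol{\ell}_{i-1}}{\delta_{i-1}}| \le \sqrt{\beta}$ (Lemma \ref{lem:vector.bounds}) together with the replacement errors of Lemma \ref{lem:g.approx.errors}; throughout one also uses $|k - \tfrac12| < 1$ and $\underline{\delta} \le \tfrac12$. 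Collecting terms and bounding $\underline{\delta}^k \le \underline{\delta}$ for $k \ge 1$ yields the displayed constants, with $A(\beta,M)$ appearing precisely as the bound in (iii).

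The main obstacle is the cancellation in the remainder of $s''$: after subtracting the claimed main term, the naive leading Taylor contribution is $\tfrac{1}{\underline{\delta}^2}(\tfrac{b^2}{2} - 2a^2)$, which is only $O(\beta)$ rather than $O(\underline{\delta})$, so one must exploit that $b = a + (\mathbf{g}^i_+ - \mathbf{g}^i_0)$ with $\mathbf{g}^i_+ - \mathbf{g}^i_0$ differing from $a$ only by the $O(\underline{\delta}^2)$ second difference, which collapses $\tfrac{b^2}{2} - 2a^2$ to size $O(\sqrt\beta\,\underline{\delta}^3)$; verifying this collapse and then carrying all the exponential prefactors through to the explicit $A(\beta,M)$-form of the $\mathcal{H}_{i,2}$ bound is where the bookkeeping is heaviest. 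This is also why Lemma \ref{lem:g.approx.errors} is indispensable rather than a single Taylor expansion of a smooth $g$: each interval $I_i$ straddles the mesh point $x_i$, so $g = \log f$ genuinely has a kink inside $I_i$, and its second difference must be routed back to the one-sided $\boldsymbol{\ell}$-difference quotients that the discrete constraints control.
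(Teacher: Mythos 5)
Your proposal is correct and follows exactly the route the paper intends: its proof of this lemma is the one-line citation ``Taylor's remainder theorem alongside Lemma \ref{lem:vector.bounds} and Lemma \ref{lem:g.approx.errors},'' and your substitution $x-\frac{(x_i-\underline{\delta})+x_i}{2}=(k-\frac12)\underline{\delta}$, the expansion in $a=\mathbf{g}^i_0-\mathbf{g}^i_-$ and $b=\mathbf{g}^i_+-\mathbf{g}^i_-$, the passage to the $\boldsymbol{\ell}$-difference quotients via Lemma \ref{lem:g.approx.errors}, and the cancellation $\frac{b^2}{2}-2a^2=2a(b-2a)+\frac{(b-2a)^2}{2}$ exploiting the $O(\underline{\delta}^2)$ second difference are precisely the intended bookkeeping, with $A(\beta,M)$ arising exactly as you identify. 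No gaps.
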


\begin{proof}
Taylor's remainder theorem alongside Lemma \ref{lem:vector.bounds} and Lemma \ref{lem:g.approx.errors}.
\end{proof}

Next we similarly address the two endpoint intervals and $J_i$ where $s(x)$ is linear. On these intervals $s(x)$ takes the form
\[s(x)=\frac{e^{\boldsymbol{\ell}_{i+1}}-e^{\boldsymbol{\ell}_i}}{\delta_i}(x-x_i)+e^{\boldsymbol{\ell}_i}\]
for some $i=1,...,d-1$ where $x\in J_i$ or $x$ is in the end intervals if $i=1,d-1$.

\begin{lemma}
Consider a partition $\mathcal{P}$. We reparameterize $x=x_i+k\delta_i$ for $k\in(0,1)$. With this we have:
\begin{equation*}
\begin{split}
s\left(x_i+k\delta_i\right) &=e^{\boldsymbol{\ell}_i}\left(1+\mathcal{R}_{i,4}(k,\underline{\delta})\right), \\
s'\left(x_i+k\delta_i\right) &=e^{\boldsymbol{\ell}_i}\left(\left(\frac{\boldsymbol{\ell}_{i+1}-\boldsymbol{\ell}_i}{\delta_i}\right)+\mathcal{R}_{i,5}(k,\underline{\delta})\right),
\end{split}
\end{equation*}
for $i=1,d-1$ and functions $\mathcal{R}_{i,j}:\left(0,1\right)\times [0,1]\mapsto \mathbb{R}$,  $j=4,5$ satisfying:
\[
|\mathcal{R}_{i,4}(k,\underline{\delta})|\leq e^{\sqrt{\beta}}\sqrt{\beta}(\underline{\delta}+M\underline{\delta}^{3}), \quad |\mathcal{R}_{i,5}(k,\underline{\delta})|\leq \frac{1}{2}e^{\sqrt{\beta}}\beta(\underline{\delta}+M\underline{\delta}^{3}),
\]
for all $i$ and $k\in(0,1)$.
\end{lemma}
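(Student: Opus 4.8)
The plan is to mirror the proof of Lemma~\ref{lem:interior.est}, with the simplification that $s$ is affine on each of the intervals under consideration, so no Taylor expansion of $s$ itself is needed. On the end intervals (and likewise on each $J_i$) we have the explicit expression $s(x) = \frac{e^{\boldsymbol{\ell}_{i+1}} - e^{\boldsymbol{\ell}_i}}{\delta_i}(x - x_i) + e^{\boldsymbol{\ell}_i}$. First I would substitute $x = x_i + k\delta_i$, $k \in (0,1)$, and factor out $e^{\boldsymbol{\ell}_i}$, which gives the exact identities
\[
s(x_i + k\delta_i) = e^{\boldsymbol{\ell}_i}\Bigl(1 + k\bigl(e^{\boldsymbol{\ell}_{i+1} - \boldsymbol{\ell}_i} - 1\bigr)\Bigr), \qquad s'(x_i + k\delta_i) = e^{\boldsymbol{\ell}_i}\,\frac{e^{\boldsymbol{\ell}_{i+1} - \boldsymbol{\ell}_i} - 1}{\delta_i}.
\]
Comparing with the claimed forms identifies $\mathcal{R}_{i,4}(k,\underline{\delta}) = k\bigl(e^{\boldsymbol{\ell}_{i+1} - \boldsymbol{\ell}_i} - 1\bigr)$ and $\mathcal{R}_{i,5}(k,\underline{\delta}) = \frac{e^{\boldsymbol{\ell}_{i+1} - \boldsymbol{\ell}_i} - 1}{\delta_i} - \frac{\boldsymbol{\ell}_{i+1} - \boldsymbol{\ell}_i}{\delta_i}$.

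Next I would bound the two remainders using the exponential Taylor expansion and Lemma~\ref{lem:vector.bounds}. Writing $z = \boldsymbol{\ell}_{i+1} - \boldsymbol{\ell}_i$, Lemma~\ref{lem:vector.bounds}(i) gives $|z| \le \sqrt{\beta}\,\delta_i \le \sqrt{\beta}$. For $\mathcal{R}_{i,5}$ the first-order Taylor remainder $e^z - 1 = z + \tfrac12 e^{\xi} z^2$ with $|\xi| \le |z|$ yields $|\mathcal{R}_{i,5}| = \tfrac12 e^{\xi}\,z^2/\delta_i \le \tfrac12 e^{\sqrt{\beta}}\,\beta\,\delta_i$. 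For $\mathcal{R}_{i,4}$, since $k \in (0,1)$ and $|e^z - 1| \le |z|\,e^{|z|}$, the same bound on $|z|$ gives $|\mathcal{R}_{i,4}| \le e^{\sqrt{\beta}}\sqrt{\beta}\,\delta_i$. Finally, invoking the almost-uniform assumption $|\delta - \underline{\delta}| \le M\underline{\delta}^3$, hence $\delta_i \le \delta \le \underline{\delta} + M\underline{\delta}^3$, upgrades both inequalities to the stated bounds $|\mathcal{R}_{i,4}| \le e^{\sqrt{\beta}}\sqrt{\beta}(\underline{\delta} + M\underline{\delta}^3)$ and $|\mathcal{R}_{i,5}| \le \tfrac12 e^{\sqrt{\beta}}\beta(\underline{\delta} + M\underline{\delta}^3)$, uniformly in $i$ and $k$.

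I do not expect a genuine obstacle here: the estimates are elementary once Lemma~\ref{lem:vector.bounds} is in hand, and uniformity over the linear pieces is automatic since every constant is expressed through $\beta$, $M$, and $\delta_i \le \delta$, none of which depends on $i$ or $k$. The only thing to keep in mind is bookkeeping — the formula and argument are applied verbatim to the interior $J_i$ pieces, and one should recall (as was already verified when $s$ was constructed) that the one-sided derivatives at the endpoints shared with the quadratic pieces $I_j$ agree, so there is no inconsistency at the mesh points.
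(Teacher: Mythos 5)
Your proposal is correct and is essentially the paper's own proof, which consists of the single line ``Taylor's remainder theorem alongside Lemma \ref{lem:vector.bounds}''; you have simply written out the substitution, the identification of $\mathcal{R}_{i,4}$ and $\mathcal{R}_{i,5}$, and the elementary bounds explicitly. The constants you obtain match the stated ones exactly, so there is nothing to add.
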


\begin{proof}
Taylor's remainder theorem alongside Lemma \ref{lem:vector.bounds}.
\end{proof}

With this we establish the claimed $\beta$-smoothness on the interior of these intervals.

\begin{lemma}\label{lem:interior.estimate}
Consider a partition $\mathcal{P}$ and the (open) interior intervals $\mathring{I}_i$. We have that there exists a $\delta_{1,\alpha}>0$ depending only on $\alpha$, $\beta$ and $M$ such that if $\underline{\delta}<\delta_{1,\alpha}$ then for any $i$ and any $x\in \mathring{I}_i$ we have that:
\[0\geq (\log(s_{\mathcal{P}}(x)+\underline{\delta}^\alpha))''\geq -\beta.\]
Similarly, for the (open) end intervals $\left(0,x_2-\frac{\underline{\delta}}{2}\right)$ and $\left(x_{d-1}+\frac{\underline{\delta}}{2},1\right)$ and $J_i$, $i=2,...,d-1$, there exists a $\delta_{2,\alpha}>0$ depending only on $\alpha$, $\beta$ and $M$ such that if $\underline{\delta}<\delta_{2,\alpha}$ then for any $x$ in these intervals the above estimate also holds.
\end{lemma}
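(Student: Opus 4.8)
The plan is to compute $(\log(s_{\mathcal P}(x)+\underline\delta^{\alpha}))''$ on the interior of each of the relevant intervals and to show that the additive perturbation $\underline\delta^{\alpha}$ produces a strictly positive slack of order $\underline\delta^{\alpha}$ that dominates the $O(\underline\delta)$ errors coming from the Taylor expansions, using $\alpha<1$. Write $c=\underline\delta^{\alpha}$ and suppress the subscript $\mathcal P$. On the interior of any such interval $s$ is a single quadratic or affine piece, hence $C^{2}$, and
\[
(\log(s+c))''=\frac{s''(s+c)-(s')^{2}}{(s+c)^{2}}=\frac{s''}{s+c}-\Big(\frac{s'}{s+c}\Big)^{2}.
\]
The upper bound $(\log(s+c))''\le0$ is immediate: on $\mathring I_i$ we have $s''=\underline\delta^{-2}\bigl(f(x_i+\underline\delta)-2f(x_i)+f(x_i-\underline\delta)\bigr)\le0$ since $f$ is concave by \eqref{eqn:exponential.concavity.constraint}, whereas on the affine intervals $s''\equiv0$; in either case the numerator is $\le0$ and the denominator is positive because $s>0$ and $c>0$.

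For the lower bound on $\mathring I_i$ I would insert the ``second-form'' expansions of Lemma~\ref{lem:interior.est}, abbreviating $B_i:=\tfrac{\boldsymbol{\ell}_i-\boldsymbol{\ell}_{i-1}}{\delta_{i-1}}$ (so $|B_i|\le\sqrt\beta$ by Lemma~\ref{lem:vector.bounds}) and $D_i:=\bigl(\tfrac{\boldsymbol{\ell}_{i+1}-\boldsymbol{\ell}_i}{\delta_i}-\tfrac{\boldsymbol{\ell}_i-\boldsymbol{\ell}_{i-1}}{\delta_{i-1}}\bigr)\big/\tfrac{\delta_i+\delta_{i-1}}{2}$, which satisfies $-\beta\le D_i\le0$: the lower bound is the discretized $\beta$-smoothness constraint of Problem~\ref{prob:discretized}, and $D_i\le0$ holds because $x\mapsto\log f(x)$ is concave (composition of the concave increasing $\log$ with the concave $f$), so its chord slopes $\tfrac{\boldsymbol{\ell}_{i+1}-\boldsymbol{\ell}_i}{\delta_i}$ are non-increasing in $i$. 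Since $e^{\mathbf{g}^i_{-}}=f(x_i-\underline\delta)\in[e^{-\sqrt\beta/2},2]$ cancels between numerator and denominator, writing $w:=1+\mathcal H_{i,0}+c\,e^{-\mathbf{g}^i_{-}}$ and expanding gives
\[
(\log(s+c))''=\frac{D_i}{w}+B_i^{2}\,\frac{w-1}{w^{2}}+E,\qquad E:=\frac{\mathcal R_{i,2}}{w}-\frac{2B_i\mathcal R_{i,1}}{w^{2}}-\frac{\mathcal R_{i,1}^{2}}{w^{2}}.
\]
The crucial point — exactly the reason $\underline\delta^{\alpha}$ is built into $\ell_{\alpha,\mathcal P}$ — is that $c\,e^{-\mathbf{g}^i_{-}}\ge\tfrac12\underline\delta^{\alpha}$ dominates $|\mathcal H_{i,0}|=O(\underline\delta)$ once $\underline\delta^{1-\alpha}$ is small, so $w-1\ge\tfrac14\underline\delta^{\alpha}>0$ and $1\le w\le2$; hence $B_i^{2}\tfrac{w-1}{w^{2}}\ge0$ and $\tfrac{D_i}{w}\ge\tfrac{-\beta}{w}=-\beta+\beta\tfrac{w-1}{w}\ge-\beta+\tfrac{\beta}{8}\underline\delta^{\alpha}$. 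Since $\mathcal R_{i,1},\mathcal R_{i,2}$ are $O(\underline\delta)$ by Lemma~\ref{lem:interior.est}, $|E|\le C\underline\delta$ for some $C=C(\beta,M)$, and therefore
\[
(\log(s+c))''\ \ge\ -\beta+\underline\delta^{\alpha}\bigl(\tfrac{\beta}{8}-C\underline\delta^{1-\alpha}\bigr)\ \ge\ -\beta
\]
for all $\underline\delta$ below a threshold $\delta_{1,\alpha}$ depending only on $\alpha,\beta,M$.

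On the affine intervals ($J_i$ and the two end intervals), $s=f$ is affine with slope $m=\tfrac{e^{\boldsymbol{\ell}_{i+1}}-e^{\boldsymbol{\ell}_i}}{\delta_i}$, so $(\log(s+c))''=-\bigl(m/(s+c)\bigr)^{2}$ and it suffices to show $|m|\le\sqrt\beta(s+c)$. On the enclosing affine piece $[x_i,x_{i+1}]$ we have $s=f\ge m_0:=\min\{e^{\boldsymbol{\ell}_i},e^{\boldsymbol{\ell}_{i+1}}\}$, so $s+c\ge m_0+c$; treating $\boldsymbol{\ell}_{i+1}\ge\boldsymbol{\ell}_i$ and $\boldsymbol{\ell}_{i+1}<\boldsymbol{\ell}_i$ separately and applying the mean value theorem to $e^{(\cdot)}$ together with $\bigl|\tfrac{\boldsymbol{\ell}_{i+1}-\boldsymbol{\ell}_i}{\delta_i}\bigr|\le\sqrt\beta$, $e^{\boldsymbol{\ell}_j}\le2$ and elementary exponential bounds, one gets $|m|\le\sqrt\beta\,m_0+2\beta e^{\sqrt\beta}\delta_i$; since $\delta_i\le(M+1)\underline\delta$ and $c=\underline\delta^{\alpha}$, the error $2\beta e^{\sqrt\beta}\delta_i$ is at most $\sqrt\beta\,c$ once $\underline\delta^{1-\alpha}$ is small, which yields $|m|\le\sqrt\beta(m_0+c)\le\sqrt\beta(s+c)$ for $\underline\delta<\delta_{2,\alpha}$. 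The main obstacle is the bookkeeping in the $\mathring I_i$ estimate: one must confirm that every remainder supplied by Lemma~\ref{lem:interior.est} is genuinely $O(\underline\delta)$ rather than $O(\underline\delta^{\alpha})$, so that the positive slack $\tfrac{\beta}{8}\underline\delta^{\alpha}$ from the regularization strictly outweighs the aggregated error $E$; this is precisely what the chain of estimates leading to Lemma~\ref{lem:interior.est} is arranged to provide.
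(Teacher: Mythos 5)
Your proposal is correct and follows essentially the same route as the paper: on $\mathring I_i$ you substitute the expansions of Lemma~\ref{lem:interior.est} into $\frac{s''}{s+c}-\bigl(\frac{s'}{s+c}\bigr)^2$ and observe that the positive slack $c\,e^{-\mathbf{g}^i_-}\ge\frac12\underline{\delta}^{\alpha}$ contributed by the regularizing term both makes the $B_i^2$ term nonnegative and pushes $D_i/w$ above $-\beta$ by a margin of order $\underline{\delta}^{\alpha}$ that dominates the $O(\underline{\delta})$ remainders since $\alpha<1$, which is exactly the paper's mechanism (the paper merely Taylor-expands $1/w$ and $1/w^2$ to collect everything into a single remainder bounded by $K\underline{\delta}^{2\alpha\wedge1}<\frac{\beta}{4}\underline{\delta}^{\alpha}$). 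Your treatment of the affine intervals via the direct slope bound $|m|\le\sqrt{\beta}(s+c)$ is a slightly more self-contained variant of the paper's appeal to the endpoint constraints \eqref{eqn:constraint.endpoint}, but it uses the same ingredients and reaches the same conclusion.
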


\begin{proof}
We will treat here the former case as the latter is argued analogously. Note first that
\[(\log(s_{\mathcal{P}}(x)+{\underline{\delta}}^\alpha))''=\frac{s_{\mathcal{P}}''(x)}{s_{\mathcal{P}}(x)+{\underline{\delta}}^\alpha}-\left(\frac{s_{\mathcal{P}}'(x)}{s_{\mathcal{P}}(x)+{\underline{\delta}}^\alpha}\right)^2\]
and by concavity it suffices to show the lower bound. Reparameterizing $x=x_{i}+(k-1){\underline{\delta}}$ for $k\in(\frac{1}{2},\frac{3}{2})$ so that $x\in\left(x_{i}-\frac{\underline{\delta}}{2},x_{i}+\frac{\underline{\delta}}{2}\right)$ we have using Lemma \ref{lem:interior.est}:
\begin{align*}
    &(\log(s_{\mathcal{P}}(x)+\underline{\delta}^\alpha))'' \\&=\frac{\frac{\frac{\boldsymbol{\ell}_{i+1}-\boldsymbol{\ell}_i}{\delta_{i}}-\frac{\boldsymbol{\ell}_{i}-\boldsymbol{\ell}_{i-1}}{\delta_{i-1}}}{\frac{\delta_{i}+\delta_{i-1}}{2}}+\left(\frac{\boldsymbol{\ell}_i-\boldsymbol{\ell}_{i-1}}{\delta_{i-1}}\right)^2+\mathcal{R}_{i,2}(k,{\underline{\delta}})}{1+\mathcal{R}_{i,0}(k,{\underline{\delta}})+e^{-\mathbf{g}^i_{-}}{\underline{\delta}}^\alpha} -\left(\frac{\frac{\left(\boldsymbol{\ell}_i-\boldsymbol{\ell}_{i-1}\right)}{\delta_{i-1}}+\mathcal{R}_{i,1}(k,{\underline{\delta}})}{1+\mathcal{R}_{i,0}(k,{\underline{\delta}})+e^{-\mathbf{g}^i_{-}}{\underline{\delta}}^\alpha}\right)^2.
\end{align*}
Note Lemma \ref{lem:vector.bounds} gives $\frac{1}{2}\leq e^{-\mathbf{g}^i_{-}}\leq e^{\frac{\sqrt{\beta}}{2}}$.  So picking $\delta_{0,\alpha}$ so that if $\underline{\delta}<\delta_{0,\alpha}$ \[\left(\frac{3}{4}e^{\sqrt{\beta}}+e^{2\sqrt{\beta}}\right)\sqrt{\beta}\underline{\delta}+e^{\frac{\sqrt{\beta}}{2}}\underline{\delta}^\alpha< 1/2.\]
We get by Lemma \ref{lem:interior.est} that $\left(\mathcal{R}_{i,0}(k,\underline{\delta})+e^{-\mathbf{g}^i_{-}}\underline{\delta}^\alpha\right)\in(-\frac{1}{2},\frac{1}{2})$ for all $i$ and $k$. Then, by Taylor's remainder theorem we can get
\begin{align*}
    (\log(s_{\mathcal{P}}(x)+\underline{\delta}^\alpha))''&=\frac{\frac{\boldsymbol{\ell}_{i+1}-\boldsymbol{\ell}_i}{\delta_{i}}-\frac{\boldsymbol{\ell}_{i}-\boldsymbol{\ell}_{i-1}}{\delta_{i-1}}}{\frac{\delta_{i}+\delta_{i-1}}{2}}\left(1-e^{-\mathbf{g}^i_{-}}\underline{\delta}^\alpha\right)\\
    & \ \ \ \ + \left(\frac{\boldsymbol{\ell}_i-\boldsymbol{\ell}_{i-1}}{\delta_{i-1}}\right)^2e^{-\mathbf{g}^i_{-}}\underline{\delta}^\alpha+\mathcal{R}_i(k,\underline{\delta}).
\end{align*}
for $\mathcal{R}_i(k,{\underline{\delta}})$ satisfying a bound of the form
\[|\mathcal{R}_i(k,{\underline{\delta}})|\leq K\underline{\delta}^{2\alpha\wedge 1}.\]
for all $i$ and $k$ where $K>0$ depends only on $\beta$ and $M$. Furthermore, by our constraints the first term satisfies
\[0\geq\frac{\frac{\boldsymbol{\ell}_{i+1}-\boldsymbol{\ell}_i}{\delta_{i}}-\frac{\boldsymbol{\ell}_{i}-\boldsymbol{\ell}_{i-1}}{\delta_{i-1}}}{\frac{\delta_{i}+\delta_{i-1}}{2}}\geq-\beta.\]
Most importantly, the leading order correction is positive. Take $\delta_{1,\alpha}<\delta_{0,\alpha}$ such that if $\underline{\delta}<\delta_{1,\alpha}$ then \[K\underline{\delta}^{2\alpha\wedge 1}<\frac{\beta}{4}\underline{\delta}^\alpha \ \ \ \mathrm{and} \ \ \ e^{\frac{\sqrt{\beta}}{2}}\underline{\delta}^\alpha<1.\]
Note here that $2\alpha\wedge 1>\alpha$ so such a $\delta_{1,\alpha}$ can indeed be found. For $\underline{\delta}<\delta_{1,\alpha}$
\begin{align*}
    (\log(s_{\mathcal{P}}(x)+\underline{\delta}^\alpha))''&\geq \frac{\frac{\boldsymbol{\ell}_{i+1}-\boldsymbol{\ell}_i}{\delta_{i}}-\frac{\boldsymbol{\ell}_{i}-\boldsymbol{\ell}_{i-1}}{\delta_{i-1}}}{\frac{\delta_{i}+\delta_{i-1}}{2}}\left(1-e^{-\mathbf{g}^i_{-}}\underline{\delta}^\alpha\right)+\mathcal{R}_i(k,\underline{\delta})\\
    &\geq -\beta\left(1-e^{-\mathbf{g}^i_{-}}\underline{\delta}^\alpha\right)+\mathcal{R}_i(k,\underline{\delta})\\
    &\geq -\beta\left(1-\frac{1}{2}\underline{\delta}^\alpha\right)-\frac{\beta}{4}\underline{\delta}^\alpha\\
    &=-\beta\left(1-\frac{1}{4}\underline{\delta}^\alpha\right)\geq -\beta.
\end{align*}
We have used here that $(1-e^{-\mathbf{g}^i_{-}}\underline{\delta}^\alpha)\in\left[1-e^{\frac{\sqrt{\beta}}{2}}\underline{\delta}^\alpha,1-\frac{1}{2}\underline{\delta}^\alpha\right]$ which implies it is less than 1 and strictly positive for $\underline{\delta}<\delta_{1,\alpha}$. Since the point $x$ and interval $\mathring{I}_i$ were arbitrary we have that the above holds in general across all interior intervals $\mathring{I}_i$ when $\underline{\delta}<\delta_{1,\alpha}$. We close by remarking that the proof of the estimate for $J_i$ and the end intervals uses the constraint \eqref{eqn:constraint.endpoint} in place of the discrete $\beta$-smooth constraint above, but otherwise proceeds almost identically.
\end{proof}

Finally, we are ready to tie these results together and prove the main lemma.

\begin{proof}[Proof of Proposition \ref{prop:construction}]
It is clear that $\ell_{\alpha,\mathcal{P}}$ is exponentially concave and $C^1$ since its exponential is a positive multiple of the concave and $C^1$ function $s_{\mathcal{P}}(x)+{\underline{\delta}}^\alpha$. Moreover by the choice of $a_{\mathcal{P}}$, $\ell_{\alpha,\mathcal{P}}(1/2)=0$ as required. Lemma \ref{lem:approx.of.l.by.salpha} gives us the approximation error 
\begin{align*}
    \left|\ell_{\alpha,\mathcal{P}}'(x)-\boldsymbol{\hat{\ell}}_{\mathcal{P}}'(x)\right|&=\left|(\log(s_{\mathcal{P}}(x)+\underline{\delta}^\alpha))'-\boldsymbol{\hat{\ell}}_{\mathcal{P}}'(x)\right|\leq K{\underline{\delta}}^\alpha,
\end{align*}
where $K>0$ depends only on $\beta$ and $M$.
By Lemma \ref{lem:interior.estimate}, if we take $\delta_{0,\alpha}=\min\{\delta_{1,\alpha},\delta_{2,\alpha}\}$, then for $\underline{\delta}<\delta_{0,\alpha}$ we have $0\geq \ell_{\alpha,\mathcal{P}}''\geq -\beta$ for all points where $\ell_{\alpha,\mathcal{P}}$ is twice differentiable. These points are exactly the interior of the intervals $I_i$, $J_i$ and the interior of the end intervals. By noting that the left (right) limits of $\ell_{\alpha,\mathcal{P}}''(x)$ exist everywhere and so coincide with the left (right) second derivatives at the boundary points of these intervals (see Lemmas \ref{lem:extension.pt1} and \ref{lem:extension.pt2}), we can extend the above bound to the one sided second derivatives on $[0,1]$. Then, by appealing to Lemma \ref{lem:beta.smooth.twicediff.a.e} we conclude that $\ell_{\alpha,\mathcal{P}}$ is indeed $\beta$-smooth. Taking all of the above together we have that for $\underline{\delta}<\delta_{0,\alpha}$, $\ell_{\alpha,\mathcal{P}}\in\mathcal{E}_\beta$.
\end{proof}

\subsection{Lemma \ref{lem:approx.error.J}}
\begin{proof}[Proof of Lemma \ref{lem:approx.error.J}]
Let $\boldsymbol{\pi}_i(\mathbf{p})$ and $\boldsymbol{\pi}^\delta_i(\mathbf{p})$ be the portfolio mappings induced by $\ell$ and $\ell_\delta$, respectively. We have that
\begin{align*}
    \left|\frac{\boldsymbol{\pi}_i(\mathbf{p})}{p_i}-\frac{\boldsymbol{\pi}^\delta_i(\mathbf{p})}{p_i}\right|&\leq\frac{1}{n}\left|\ell'(p_i)-\ell_\delta'(p_i)\right|+\frac{1}{n}\sum_{j=1}^np_j\left|\ell'(p_j)-\ell_\delta'(p_j)\right|\\
    &\leq\frac{2}{n}\sup_{x\in[0,1]}\left|\ell'(x)-\ell_\delta'(x)\right|\leq\frac{2K_0}{n}\delta^\alpha.
\end{align*}
Next using Lemma \ref{lem:beta.interpretation} and the above approximation error
\[e^{-\frac{2\sqrt{\beta}}{n}}-\frac{2K_0}{n}\delta^\alpha\leq\exp\{\mathcal{L}(\mathbf{u},\mathbf{r};\ell_\delta)\}=\sum_{i=1}^n\frac{\boldsymbol{\pi}^\delta_i(\mathbf{u})}{u_i}\frac{u_ir_i}{\mathbf{u}\cdot \mathbf{r}}\leq 1+\frac{\beta}{n^2}+\frac{2K_0}{n}\delta^\alpha.\]
In light of this we can find a $\delta_{0,\alpha}>0$ such that if $\delta<\delta_{0,\alpha}$ we have $\frac{2K_0}{n}e^{\frac{2\sqrt{\beta}}{n}}\delta^\alpha<\frac{1}{2}$. For such values of $\delta$ the quantities $\exp\{\mathcal{L}(\mathbf{u},\mathbf{r};\ell_\delta)\}$ and $\exp\{\mathcal{L}(\mathbf{u},\mathbf{r};\ell)\}$ always lie in the interval $\left[ \frac{1}{2}  e^{-\frac{2\sqrt{\beta}}{n}},1+\frac{\beta}{n^2}+ \frac{1}{2} e^{-\frac{2\sqrt{\beta}}{n}}\right]$. Since the map $x \mapsto \log (x)$ is Lipschitz on this interval with constant $2e^{\frac{2\sqrt{\beta}}{n}}$, we proceed as in the proof of Lemma \ref{lem:log.value.estimate} to get
\begin{equation*}
|\mathcal{L}(\mathbf{u}, \mathbf{r} ; \ell) - \mathcal{L}(\mathbf{u}, \mathbf{r} ; \ell_\delta)|\leq4K_0 e^{\frac{2\sqrt{\beta}}{n}}\delta^{\alpha}.
\end{equation*} 
On the other hand, since $\mathbf{D}_\cdot[\mathbf{u}\oplus\mathbf{r}:\mathbf{r}]$ is linear we have 
\begin{equation*}
|\mathbf{D}_\ell[\mathbf{u}\oplus\mathbf{r}:\mathbf{r}] - \mathbf{D}_{\ell_\delta}[\mathbf{u}\oplus\mathbf{r}:\mathbf{r}]|\leq 2K_0\delta^{\alpha}.
\end{equation*} 
The claim then follows by integration.
\end{proof}

\section*{Acknowledgment}
This work is partially supported by NSERC Grant RGPIN-2019-04419 and an NSERC Alexander Graham Bell Canada Graduate Scholarship (Application No. CGSD3-535625-2019). We thank Johannes Ruf and Kangjianan Xie for sharing the dataset of \cite{RX19} which was used to test ideas in an earlier stage of this project. We also thank Christa Cuchiero, Janka M\"oller and Soumik Pal for their helpful comments on an earlier draft of this paper. Finally, we thank the associate editor and anonymous reviewers whose comments greatly improved the paper, especially the empirical section.

\bibliographystyle{abbrv}
\bibliography{geometry.ref}

\end{document}